\definecolor{myurlcolor}{rgb}{0,0,0.7}
\definecolor{myrefcolor}{rgb}{0.8,0,0}
\newcommand{\1}{\openone}
\newcommand{\trm}[1]{\textrm{#1}}
\newcommand{\msf}[1]{\mathsf{#1}}
\newcommand{\op}[1]{\hat{#1}}
\newcommand{\ot}{\otimes}
\newcommand{\Ot}{\bigotimes}
\newcommand{\BE}{\begin{equation}}
\newcommand{\EE}{\end{equation}}
\newcommand{\BEA}{\begin{eqnarray}}
\newcommand{\EEA}{\end{eqnarray}}
\newcommand{\AL}[1]{\begin{align}#1\end{align}}
\def\dd{d}% \mathrm d}
\def\ii{\mathrm i}% \mathrm i}
\def\ee{e}% \mathrm e}
\def \cL{\mathcal L}
\def \cB{\mathcal B} 
\def \cH{\mathcal H}
\def \cI{\mathcal I}
\def \cD{\mathcal D}
\def \cZ{\mathcal Z}
\def \cX{\mathcal X}
\def \cU{\mathcal U}
\def \Cfun{\mathscr{C}}
\newcommand*\LIP[1][]{\bar{\cL}^{#1}}
\newcommand*\LSP[1][]{\cL^{#1}}
\newcommand*\Lenv[1][]{\tilde{\cL}^{#1}}
\newcommand*\LIPb[1][]{\bar{\mat{L}}^{#1}}
\newcommand*\LSPb[1][]{\mat{L}^{#1}}
\newcommand*\Lenvb[1][]{\tilde{\mat{L}}^{#1}}
\def\LambdaE{\tilde{\Lambda}}
\def\sV{{\boldsymbol \sigma}} % {\hat {\boldsymbol \sigma}}
\def\rV{\mathbf{r}}
\renewcommand{\vec}[1]{\mathbf{#1}}
\newcommand{\mat}[1]{\mathbf{\expandafter\MakeUppercase\expandafter{#1}}}
\def\sx{\sigma_x}
\def\sy{\sigma_y}
\def\sz{\sigma_z}
\def\bsz{{\bar\sigma}_z}
\def\sp{\sigma_+}
\def\sm{\sigma_-}
\def\hs{\sigma}
\def\rhoS{{\rho_S}}
\def\rhoSE{{\rho_{SE}}}
\def\rhoE{{\rho_{E}}}
\DeclareMathOperator{\tr}{Tr}
\newcommand{\eref}[1]{(\ref{#1})}
\newcommand{\eqnref}[1]{Eq.~\eref{#1}}
\newcommand{\eqnsref}[2]{Eqs.~\eref{#1}~and~\eref{#2}}
\newcommand{\figref}[1]{Fig.~\ref{#1}}
\newcommand{\secref}[1]{Sec.~\ref{#1}}
\newcommand{\appref}[1]{App.~\ref{#1}}
\newcommand{\defref}[1]{Def.~\ref{#1}}
\newcommand{\lemref}[1]{Lemma~\ref{#1}}
\newcommand{\refcite}[1]{Ref.~\cite{#1}}
\newcommand{\obsref}[1]{Obs.~\ref{#1}}
\newtheorem{thm}{Theorem}
\newtheorem{lem}[thm]{Lemma}
\newtheorem{definition}{Definition}
\newtheorem{obs}{Observation}
\begin{document}

\title{Adding dynamical generators in quantum master equations}

\author{Jan Ko\l{}ody\'{n}ski}
%\thanks{These authors contributed equally to this work.}
\affiliation{ICFO--Institut de Ciencies Fotoniques, The Barcelona Institute of Science and Technology, 08860 Castelldefels (Barcelona), Spain}
\author{Jonatan Bohr Brask}
%\thanks{These authors contributed equally to this work.}
\affiliation{Group of Applied Physics, University of Geneva, 1211 Geneva, Switzerland}
\author{Mart\'{i} Perarnau-Llobet}
\affiliation{ICFO--Institut de Ciencies Fotoniques, The Barcelona Institute of Science and Technology, 08860 Castelldefels (Barcelona), Spain}
\affiliation{Max-Planck-Institut f\"ur Quantenoptik, Hans-Kopfermann-Str.~1, D-85748 Garching, Germany}
\author{Bogna Bylicka}
\affiliation{ICFO--Institut de Ciencies Fotoniques, The Barcelona Institute of Science and Technology, 08860 Castelldefels (Barcelona), Spain}

\begin{abstract}
The \emph{quantum master equation} is a widespread approach to describing open quantum system dynamics. In this approach, the effect of the environment on the system evolution is entirely captured by the \emph{dynamical generator}, providing a compact and versatile description. However, care needs to be taken when several noise processes act simultaneously or the Hamiltonian evolution of the system is modified. Here, we show that generators can be added at the master equation level without compromising physicality only under restrictive conditions. Moreover, even when adding generators results in legitimate dynamics, this does not generally correspond to the true evolution of the system. We establish a general condition under which direct addition of dynamical generators is justified, showing that it is ensured under weak coupling and for settings where the free system Hamiltonian and all system-environment interactions commute. In all other cases, we demonstrate by counterexamples that the exact evolution derived microscopically cannot be guaranteed to coincide with the dynamics naively obtained by adding the generators.
\end{abstract}
\maketitle

%%%%%%%%%%%%%%%%%%%%%%%%%%%%%%%%%%%%%%%%%%%%%%%%%%%%%%%%%%%%%%%%%%%%%%%%%%%%%%%%%%%%%%
%%%%%%%%%%%%%%%%%%%%%%%%%%%%%%%%%%%%%%%%%%%%%%%%%%%%%%%%%%%%%%%%%%%%%%%%%%%%%%%%%%%%%%
\section{Introduction}
It is generally impossible to completely isolate a small system of interest from the surrounding environment. Thus, dissipative effects caused by the environment are important in almost every quantum experiment, ranging from highly controlled settings, where much effort is invested in minimising them, to areas where the dissipation is the key object of interest. In many cases, exact modelling of the environment is not practical and its effect is instead accounted for by employing effective models describing the induced noise. Different approaches exist, e.g., quantum Langevin and stochastic Schr\"odinger equations \cite{Gardiner,Breuer}, quantum jump and state-diffusion models \cite{Percival,Plenio1998}, or Hilbert-space averaging methods \cite{Gemmer2006}. 

Arguably, the most widely applied approach is to use the \emph{quantum master equation} (QME) description \cite{Gardiner,Breuer}. In this approach, the system evolution is given by a time-local differential equation, where the effect of the environment is captured by the \emph{dynamical generator}. A master equation can be derived from a microscopic model of the system and environment, and their interaction, by tracing over the environment and applying appropriate approximations \cite{Gardiner,Breuer}. However, QMEs are also often applied directly, without explicit reference to an underlying model. In that case, care needs to be taken when several noise processes act in parallel, as simultaneous coupling to multiple baths in a microscopic model does not generally correspond to simple addition of noise generators. Moreover, when the Hamiltonian evolution of the system is modified, e.g., when controlling system dynamics by coherent driving \cite{Schmidt2011}, the form of noise generators in a QME may significantly change. Additivity of noise at the QME level has been discussed recently for qubits when analysing dynamical effects of interference between different baths \cite{Chan2014,Mitchison2018}, non-additivity of relaxation rates in multipartite systems \cite{Yu2006,Lankinen2016}, as well as in the context of charge (excitation) transport \cite{Giusteri2016}.

In this work, we address the questions of when:
\begin{itemize}[leftmargin=16pt,noitemsep,topsep=2pt,partopsep=2pt,parsep=0pt]%[leftmargin=16pt,noitemsep, topsep=0pt,partopsep=0pt,parsep=0pt]
 \item[(i)] \emph{The naive addition of generators yields physically valid dynamics.}
 \item[(ii)] \emph{The corresponding evolution coincides with the true system dynamics derived from the underlying microscopic model.}
\end{itemize}

First, we show that (i) is satisfied for generators which are commutative, semigroup-simulable (can be interpreted as a fictitious semigroup at each time instance), and preserve commutativity of the dynamics under addition. These reach beyond the case of Markovian generators for which (i) naturally holds. Outside of this class, we find examples of simple qubit QMEs which lead to unphysical dynamics. We observe that (ii) holds if and only if the cross-correlations between distinct environments can be ignored within a QME. We show this to be the case in the weak-coupling regime, extending previous results in this direction \cite{CohenTannoudji1998,Chan2014,Schaller2015}. We also provide a sufficient condition for (ii) dictated by the commutativity of Hamiltonians at the microscopic level. We combine these generic considerations with a detailed study of a specific open system, namely a qubit interacting simultaneously with multiple spin baths, for which we provide examples where (ii) is not satisfied, while choosing the microscopic Hamiltonians to fulfil particular commutation relations.

Our results are of relevance to areas of quantum physics where careful description of dissipative dynamics plays a key role, e.g., in dissipative quantum state engineering \cite{Diehl2008,Verstraete2009,Metelmann2015,Reiter2016}, dissipative coupling in optomechanics \cite{Aspelmeyer2014}, or in dissipation-enhanced quantum transport scenarios \cite{Gurvitz1996,Giusteri2016}, including biological processes \cite{Lambert2013}. In particular, they are of importance to situations in which QMEs are routinely employed to account for multiple sources of dissipation, e.g., in quantum thermodynamics \cite{Alicki1979,Levy2014arpc,Vinjanampathy2016,Goold2015} when dealing with multiple heat baths \cite{Skrzypczyk2011,Correa2013,Levy2014epl,Mitchison2018} or in quantum metrology \cite{Maccone2011,Escher2011,Demkowicz2012} where the relation between dissipation and Hamiltonian dynamics, encoding the estimated parameter, is crucial \cite{Chaves2013,Brask2015,Smirne2016,Haase2017}.

The manuscript is structured as follows. In \secref{sec:qmes}, we discuss QMEs at an abstract level---as defined by families of dynamical generators whose important properties we summarise in \secref{sec:dyn_gens_phys}. We specify  in \secref{sub:gen_add} conditions under which the addition of physically valid generators is guaranteed to yield legitimate dynamics. We demonstrate by explicit examples that even mild violation of these conditions may lead to unphysical evolutions.

In \secref{sec:micro}, we view the validity of QMEs from the microscopic perspective. In particular, we briefly review in \secref{sec:QME_micro_der} the canonical derivation of a QME based on an underlying microscopic model, in order to discuss the effect of changing the system Hamiltonian on the QME, as well as the generalisation to interactions with multiple environments. We then formulate a general criterion for the validity of generator addition in \secref{sec:gen_add_micro}, which we explicitly show to be ensured in the weak coupling regime, or when particular commutation relations of the microscopic Hamiltonians are fulfilled. 

In \secref{sec:magnets}, we develop an exactly solvable model of a qubit interacting with multiple spin baths, which allows us to explicitly construct counterexamples that disprove the microscopic validity of generator addition in all the regimes in which the aforementioned commutation relations do not hold. Finally, we conclude in \secref{sec:conclusion}.

%%%%%%%%%%%%%%%%%%%%%%%%%%%%%%%%%%%%%%%%%%%%%%%%%%%%%%%%%%%%%%%%%%%%%%%%%%%%%%%%%%%%%%
%%%%%%%%%%%%%%%%%%%%%%%%%%%%%%%%%%%%%%%%%%%%%%%%%%%%%%%%%%%%%%%%%%%%%%%%%%%%%%%%%%%%%%
\section{Time-local quantum master equations}
\label{sec:qmes}
QMEs constitute a standard tool to describe reduced dynamics of open quantum systems. They provide a compact way of defining the effective system evolution at the level of its density matrix, $\rho_S(t)$, without need for explicit specification neither of environmental interactions nor the nature of the noise. Although a QME may be expressed in a generalised form as an integro-differential equation involving time-convolution \cite{Vacchini2016}, its equivalent (c.f.~\cite{Chruscinski2010}) and more transparent \emph{time-local} formulation  is typically favoured, providing a more direct connection to the underlying physical mechanisms responsible for the dissipation \cite{Gardiner,Breuer}. Given a time-local QME:
\BE
\frac{d}{dt}\rho_S(t) = \cL_t[\rho_S(t)] = \cH_t[\rho_S(t)]+\cD_t[\rho_S(t)],
\label{eq:QME_dyn_gen}
\EE
all the information about the system evolution is contained within the \emph{dynamical generator}, $\cL_t$, that is uniquely defined at each moment of time $t$. Moreover, $\cL_t$ can always be decomposed into its \emph{Hamiltonian} and \emph{purely dissipative} parts, i.e., $\cL_t=\cH_t+\cD_t$ in \eqnref{eq:QME_dyn_gen} with $\cH_t[\rho] = -\ii [H(t) , \rho]$ and some Hermitian $H(t)$ \cite{Gorini1976}. 

Although the QME \eref{eq:QME_dyn_gen} constitutes an ordinary differential equation, the system evolution may exhibit highly non-trivial memory features thanks to the arbitrary dependence of $\cL_t$ on the local time-instance $t$, but also on the (fixed) initial time $t_0$ at which the evolution commences~\cite{Chruscinski2010}---which, without loss of generality, we choose to be zero ($t_0=0$) and drop throughout this work.

%%%%%%%%%%%%%%%%%%%%%%%%%%%%%%%%%%%%%%%%%%%%%%%%%%%%%%%%%%%%%%%%%%%%%%%%%%%%%%%%%%%%%%
\subsection{Physicality of dynamical generators}
\label{sec:dyn_gens_phys}
For the QME \eref{eq:QME_dyn_gen} to be physically valid, it must yield dynamics that is consistent with quantum theory.
%, such that $\rhoS(t)$ constitutes a legitimate quantum state at any time $t$. %---also when representing a part of a larger system only which evolves according to \eqnref{eq:QME_dyn_gen}. 
In particular, upon integration the QME must lead to a \emph{family of (dynamical) maps} $\Lambda_t$ (parametrised by $t$) that satisfy $\rhoS(t) = \Lambda_t[\rhoS(0)]$ for any $t\ge 0$ and initial $\rhoS(0)$, with each $\Lambda_t$ being completely positive and trace preserving (CPTP) \cite{Jamiolkowski1972,Choi1975}.

On the other hand, any QME \eref{eq:QME_dyn_gen} is unambiguously specified by the \emph{family of (dynamical) generators} $\cL_t$ appearing in \eqnref{eq:QME_dyn_gen}. However, as discussed in \appref{app:QMEs_dyn_gen_fams}, although the CPTP condition can be straightforwardly checked for maps $\Lambda_t$, it does not directly translate onto the generators $\cL_t$. As a result, for a generic QME its physicality cannot be easily inferred at the level of \eqnref{eq:QME_dyn_gen}, unless its explicit integration is possible. Nevertheless, we formally call a family of dynamical generators $\cL_t$ \emph{physical} if the family of maps it generates consists only of CPTP transformations. In what follows (see also \appref{app:dyn_gen_descr}), we describe properties of dynamical generators that ensure their physicality.

Any family of dynamical generators, whether physical or not, can be uniquely decomposed as \cite{Gorini1976}
\BE
\cL_t[\rho]=
-\ii [H(t) , \rho]+\sum_{i,j=1}^{d^2-1} \msf{D}_{ij}(t) \left( F_i \rho F_j^\dagger - \frac{1}{2}\{ F_j^\dagger F_i , \rho \} \right)\!,
\label{eq:gen_mat_basis}
\EE
where $d$ is the Hilbert space dimension and $\{F_i\}_{i=1}^{d^2}$ is any orthonormal operator basis with $\tr\{F_i^\dagger F_j\}=\delta_{ij}$ and all $F_i$ traceless except $F_{d^2}=\openone/\sqrt{d}$. The Hamiltonian part, $\cH_t$, of the generator in \eqnref{eq:QME_dyn_gen} is then determined by $H(t)$ of \eqnref{eq:gen_mat_basis}, while the dissipative part $\cD_t$ is defined by the Hermitian matrix $\msf{D}(t)$. Although general criteria for physicality of dynamical-generator families are not known, two natural classes of physical dynamics can be identified based on the above decomposition.

In particular, when $\msf{D}(t)$ is positive semidefinite, $\cL_t$ is said to be of Gorini-Kossakowski-Sudarshan-Lindblad (GKSL) form \cite{Gorini1976,Lindblad1976}. If this is the case for all $t\ge0$, then the corresponding evolution is not only physical but also \emph{CP-divisible}, i.e., the corresponding family of maps can be decomposed as $\Lambda_{t}=\tilde{\Lambda}_{t,s}\Lambda_{s}$, where $\tilde{\Lambda}_{t,s}$  is CPTP for all $0\le s\le t$. This property is typically associated with Markovianity of the evolution \cite{Rivas2014,Breuer2016,deVega2017}. 

Furthermore, when, in addition, $H$ and $\msf{D}$ in \eqnref{eq:gen_mat_basis} are time-independent, the dynamics forms a \emph{semigroup}, such that the generator and map families are directly related via $\Lambda_t=\exp[t\cL]$ with all $\cL_t=\cL$ \cite{Alicki2002}. See \appref{app:dyn_gen_descr} for a more detailed discussion of different types of evolutions.

For the purpose of this work, we also identify another important class of physical dynamics:
\begin{definition}
A given dynamical family $\Lambda_t$ is semigroup-simulable (SS) if for any $t\ge0$ the map $\cZ_t =\log \Lambda_t$ is of the GKSL form \eref{eq:gen_mat_basis} with some $\msf{D}(t)\ge0$. 
\end{definition}
\noindent Formally, $\cZ_t$ constitutes the \emph{instantaneous exponent} of the dynamics, satisfying $\Lambda_t=\ee^{\cZ_t}$ (see \refcite{Chruscinski2014} and \appref{app:dyn_gen_descr}). Physicality of the evolution is then guaranteed by the GKLS form of $\cZ_t$, because at any $t$ the dynamical map $\Lambda_t$ can be interpreted as a fictitious semigroup $\Lambda_t=\left.\ee^{\cZ_t\tau}\right|_{\tau=1}$ generated by $\cZ_t$ (at this particular time instance). $\Lambda_t$ must therefore be CPTP at $t$.

In general, it is not straightforward to verify whether a given QME \eref{eq:QME_dyn_gen} yields SS dynamics \cite{Chruscinski2014}, even after decomposing its dynamical generators according to \eqnref{eq:gen_mat_basis}. However, in the special case of \emph{commutative} dynamics, for which $[\cL_s,\cL_t]=0$ (or equivalently $[\Lambda_s,\Lambda_t]=0$) for all $s,t\ge0$, one may directly identify the SS subclass, because (see \appref{app:dyn_gen_descr}):
\begin{lem}
\label{lem:ss_cond_comm}
Any commutative dynamics is SS iff (if and only if) for any $t\ge0$ the decomposition of its dynamical generators \eref{eq:gen_mat_basis} fulfills
\BE
\int_{0}^{t}\!{\dd}\tau\,\msf{D}(\tau)\ge0.
\label{eq:ss_cond_comm}
\EE
In short, we term any such semigroup-simulable and commutative dynamics SSC.
\end{lem}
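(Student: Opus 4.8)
The plan is to exploit the fact that, for commutative dynamics, the time-ordered exponential solving the QME \eref{eq:QME_dyn_gen} collapses to an ordinary exponential, so that the instantaneous exponent can be computed explicitly. Writing the dynamical maps as the solution of $\tfrac{\dd}{\dd t}\Lambda_t=\cL_t\Lambda_t$ with $\Lambda_0=\cI$, I would first set $A(t)=\int_0^t\!\dd\tau\,\cL_\tau$ and observe that commutativity, $[\cL_s,\cL_t]=0$, gives $[A(t),\dot A(t)]=[A(t),\cL_t]=\int_0^t\!\dd\tau\,[\cL_\tau,\cL_t]=0$. Since $A$ commutes with its own derivative, $\tfrac{\dd}{\dd t}\ee^{A(t)}=\dot A(t)\,\ee^{A(t)}=\cL_t\,\ee^{A(t)}$, and as $\ee^{A(0)}=\cI$, uniqueness of the solution of this linear superoperator ODE yields $\Lambda_t=\ee^{A(t)}=\exp\!\big(\int_0^t\!\dd\tau\,\cL_\tau\big)$. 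Because $A(0)=0$, this is precisely the continuous branch that defines the instantaneous exponent, so $\cZ_t=\int_0^t\!\dd\tau\,\cL_\tau$.

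The second step is purely algebraic. The decomposition \eref{eq:gen_mat_basis} is linear in the pair $(H,\msf{D})$, so integrating it term by term presents $\cZ_t$ in exactly the same form, with effective Hamiltonian $\int_0^t\!\dd\tau\,H(\tau)$ and effective matrix $\int_0^t\!\dd\tau\,\msf{D}(\tau)$. Since the decomposition \eref{eq:gen_mat_basis} is unique, this integrated matrix is the one and only $\msf{D}$-matrix associated with $\cZ_t$. By definition, the dynamics is SS iff $\cZ_t$ is of GKSL form, i.e.\ iff its $\msf{D}$-matrix is positive semidefinite; substituting the expression just obtained gives the claimed criterion $\int_0^t\!\dd\tau\,\msf{D}(\tau)\ge0$, and the equivalence holds in both directions simultaneously.

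The only delicate point is the collapse of the time ordering together with the associated choice of logarithm branch: the identity $\tfrac{\dd}{\dd t}\ee^{A(t)}=\dot A(t)\ee^{A(t)}$ fails in general and must be justified by the commutator computation above, and one must verify that the $\cZ_t$ furnished by $\int_0^t\!\dd\tau\,\cL_\tau$ is the same object as the $\log\Lambda_t$ entering the definition of SS rather than a spurious branch. The latter is settled by continuity from $\cZ_0=0$, which I expect to invoke by appeal to the uniqueness of the instantaneous exponent discussed in \appref{app:dyn_gen_descr}. Everything else---the linearity and uniqueness of \eref{eq:gen_mat_basis}---is routine.
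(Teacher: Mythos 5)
Your proof is correct and follows essentially the same route as the paper's (\appref{app:dyn_gen_descr}): both rest on establishing $\cZ_t=\int_0^t\!\dd\tau\,\cL_\tau$ for commutative dynamics and then integrating the unique decomposition \eref{eq:gen_mat_basis} term by term, so that the SS requirement on $\cZ_t$ becomes exactly $\int_0^t\!\dd\tau\,\msf{D}(\tau)\ge0$. The only difference is in how that key identity is justified---the paper argues top-down, using $\cL_t=\int_0^1\!\dd s\;\ee^{s\cZ_t}\circ\cX_t\circ\ee^{-s\cZ_t}$ and the commutation of all exponents to conclude $\cL_t=\cX_t$, whereas you argue bottom-up via an ODE-uniqueness argument showing the unordered exponential solves the master equation---an equally valid and, if anything, more self-contained treatment of the same step.
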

\noindent Note that the condition \eref{eq:ss_cond_comm} is clearly weaker than positive semi-definiteness, $\msf{D}(t)\ge0$, at all times. Hence, there exist commutative dynamics which are SS but not CP-divisible. However, let us emphasise that there also exist commutative dynamics which are physical but \emph{not} even SS. An explicit example is provided by the eternally non-Markovian model of \refcite{Hall2014}, as well as by other instances of random unitary \cite{Andersson2007,Chruscinski2013}  and phase covariant \cite{Smirne2016} qubit dynamics, which we discuss in detail in \appref{app:dyn_gens_qubit_dyns}.
\begin{figure}[t]
\begin{center}
\includegraphics[width=\linewidth]{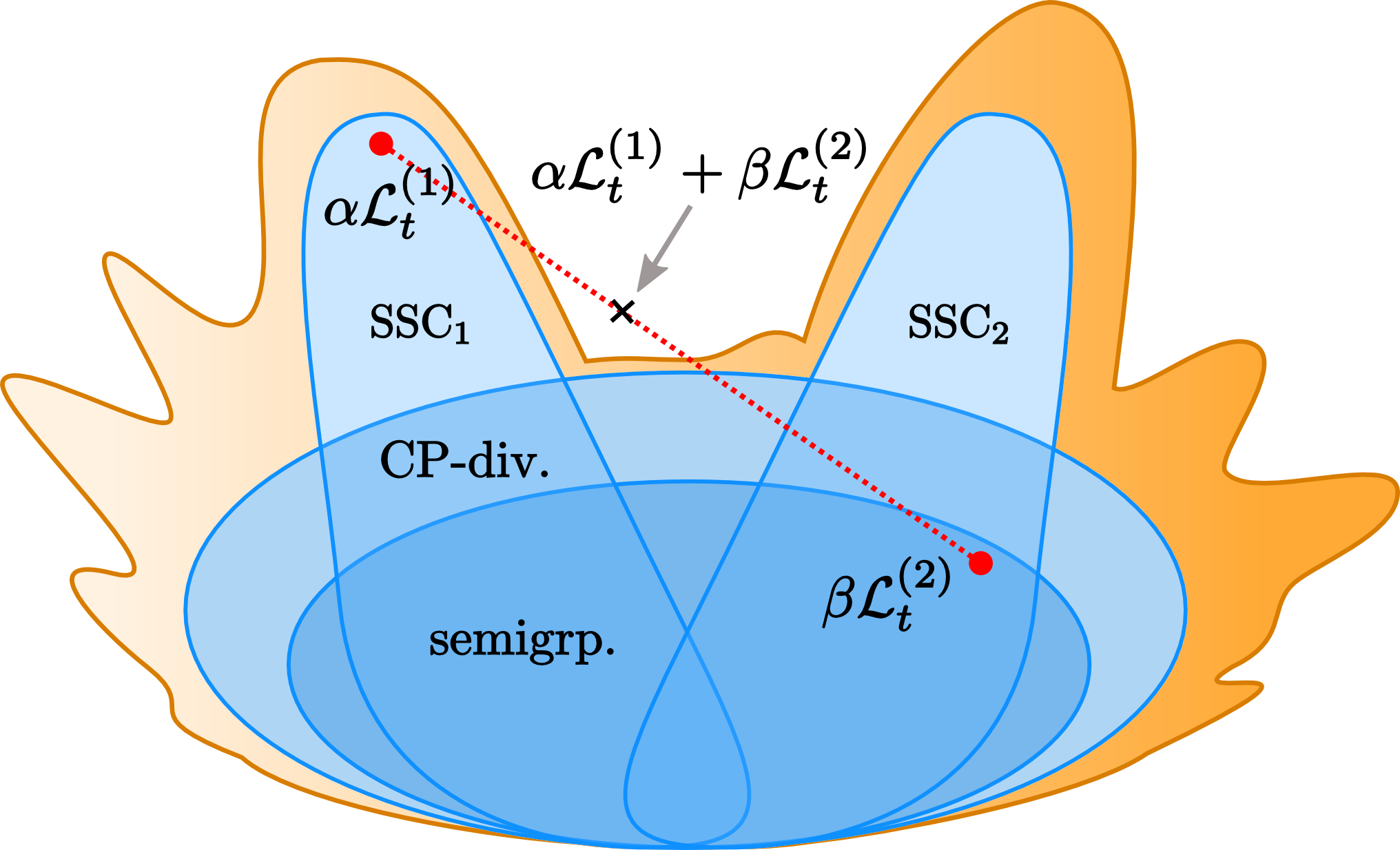}
\end{center}
\caption{\textbf{Cross-section of the vector space defined by the families of dynamical generators.} The non-convex set (\emph{orange}) describes a cut through the set of all physical families, while the inner convex sets (\emph{blue}) correspond to cuts through convex cones of various dynamical subclasses possessing additive generators. Sets containing CP-divisible and semigroup evolutions are indicated, as well as two exemplary SSC classes of dynamics. Physicality can be broken by adding to a family $\cL^{(1)}_t$, which is SSC but \emph{non}-CP-divisible, another family $\cL^{(2)}_t$ that lies outside of the particular SSC class, even a semigroup.}
\label{fig:generatorgeometry}
\end{figure}
%

%%%%%%%%%%%%%%%%%%%%%%%%%%%%%%%%%%%%%%%%%%%%%%%%%%%%%%%%%%%%%%%%%%%%%%%%%%%%%%%%%%%%%%
\subsection{Additivity of dynamical generators}
\label{sub:gen_add}
We define the notion of \emph{additivity} for families of dynamical generators as follows:
\begin{definition}
Two physical families of generators $\cL^{(1)}_t$ and $\cL^{(2)}_t$ are additive if all their non-negative linear combinations, $\alpha \cL^{(1)}_t+\beta \cL^{(2)}_t$ with $\alpha,\beta\ge0$, are also physical.
\label{def:gen_additivity}
\end{definition}
\noindent Note that according to \defref{def:gen_additivity} a pair of generator families can be additive only if each of them is individually \emph{rescalable}---remains physical when multiplied by an non-negative scalar, i.e., $\cL_t\to\alpha\cL_t$ remains physical for any $\alpha\ge0$. However, as such a multiplication does not invalidate the GKSL form of the decomposition \eref{eq:gen_mat_basis} or the condition \eref{eq:ss_cond_comm}, it follows that any generator family which is CP-divisible or SSC must be rescalable. 

From the linear algebra perspective \cite{Rockafellar1970}, one may formally define the \emph{vector space} containing families of dynamical generators. Physical generators then form its particular subset. Rescalability of a given $\cL_t$ states then that the whole \emph{ray} $\{\alpha\cL_t\}_{\alpha\ge0}$ lies within the physical set. Additivity of $\cL^{(1)}_t$ and $\cL^{(2)}_t$, on the other hand, means that all the elements of the \emph{convex cone}, $\{\alpha \cL^{(1)}_t+\beta \cL^{(2)}_t\}_{\alpha,\beta\ge0}$, are physical.

For CP-divisible dynamics, we observe that when both $\cL^{(1)}_t$ and $\cL^{(2)}_t$ are of the GKLS form or even form a semigroup, so must any non-negative linear combination of them. Hence, it naturally follows that generator families describing CP-divisible evolutions constitute a convex cone contained in the physical set, with semigroups forming a subcone. Furthermore, as we demonstrate in \appref{app:dyn_gens_add}:
\begin{lem}
Any SSC generator families $\cL^{(1)}_t$ and $\cL^{(2)}_t$ are additive if upon addition, $\alpha \cL^{(1)}_t+\beta \cL^{(2)}_t$ with any $\alpha,\beta\ge0$, they yield commutative dynamics.
\end{lem}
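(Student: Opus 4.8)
The plan is to exploit the linearity of the Gorini decomposition \eref{eq:gen_mat_basis} together with the characterisation of commutative, semigroup-simulable dynamics furnished by \lemref{lem:ss_cond_comm}. First I would fix a \emph{single} orthonormal operator basis $\{F_i\}$ for both families, so that each $\cL^{(k)}_t$ ($k=1,2$) is specified by a Hamiltonian $H^{(k)}(t)$ and a Hermitian matrix $\msf{D}^{(k)}(t)$. Because the decomposition in this fixed basis is linear in the generator, the combined family $\alpha\cL^{(1)}_t+\beta\cL^{(2)}_t$ is then described by the Hamiltonian $\alpha H^{(1)}(t)+\beta H^{(2)}(t)$ and the dissipation matrix $\alpha\msf{D}^{(1)}(t)+\beta\msf{D}^{(2)}(t)$, with no cross-terms.

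Since each $\cL^{(k)}_t$ is SSC, and hence both commutative and SS, \lemref{lem:ss_cond_comm} yields $\int_0^t\dd\tau\,\msf{D}^{(k)}(\tau)\ge0$ for all $t\ge0$. The integrated dissipation matrix of the combined family follows by linearity,
\BE
\int_0^t\!\dd\tau\,\big[\alpha\msf{D}^{(1)}(\tau)+\beta\msf{D}^{(2)}(\tau)\big]
=\alpha\!\int_0^t\!\dd\tau\,\msf{D}^{(1)}(\tau)+\beta\!\int_0^t\!\dd\tau\,\msf{D}^{(2)}(\tau),
\EE
and for $\alpha,\beta\ge0$ the right-hand side is a non-negative combination of positive semidefinite matrices, hence itself positive semidefinite. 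Thus the combined family satisfies the integrated condition \eref{eq:ss_cond_comm}.

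It remains to invoke \lemref{lem:ss_cond_comm} in the \emph{converse} direction. By the hypothesis of the present lemma, the combined family $\alpha\cL^{(1)}_t+\beta\cL^{(2)}_t$ generates commutative dynamics; together with the integrated positivity just established, \lemref{lem:ss_cond_comm} then guarantees that this dynamics is SS. Semigroup-simulability in turn secures physicality, since the instantaneous exponent $\cZ_t=\log\Lambda_t$ is of GKSL form and so $\Lambda_t=\ee^{\cZ_t}$ is CPTP at every $t$. As this holds for all $\alpha,\beta\ge0$, every element of the convex cone $\{\alpha\cL^{(1)}_t+\beta\cL^{(2)}_t\}_{\alpha,\beta\ge0}$ is physical, which is exactly additivity in the sense of \defref{def:gen_additivity}.

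The crux---and the reason the commutativity premise cannot be dropped---is this last step: \lemref{lem:ss_cond_comm} converts the integrated positivity condition into semigroup-simulability \emph{only} for commutative dynamics, so without assuming the sum stays commutative one could not conclude physicality from the positivity of $\int_0^t\dd\tau\,\msf{D}(\tau)$. I expect the only genuine bookkeeping subtlety to be justifying that the integrated dissipation matrix of the sum equals the sum of the integrated matrices; this hinges on the common-basis choice, which ensures the Hamiltonian contributions $\alpha H^{(1)}+\beta H^{(2)}$ decouple from $\msf{D}$ and introduce no spurious shift to its spectrum. Once that is fixed, the positivity argument is immediate and the result follows.
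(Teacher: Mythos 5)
Your proof is correct and follows essentially the same route as the paper's: the paper also reduces the claim to the commutative-SS criterion of \lemref{lem:ss_cond_comm}, showing that the integrated dissipation matrices add, $\msf{\Gamma}'(t)=\alpha\,\msf{\Gamma}^{(1)}(t)+\beta\,\msf{\Gamma}^{(2)}(t)\ge0$, and then invoking commutativity of the sum to conclude semigroup-simulability and hence physicality. The only cosmetic difference is that the paper phrases the linearity step via the instantaneous exponents ($\cZ'_t=\alpha\cZ^{(1)}_t+\beta\cZ^{(2)}_t$, using $\cZ_t=\int_0^t\dd\tau\,\cL_\tau$ for commutative dynamics) rather than via the uniqueness of the Gorini decomposition in a fixed operator basis, but these amount to the same computation.
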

\noindent Hence, the non-negative linear span of any such SSC pair forms a convex cone contained in the physical set. Moreover, one may then naturally expand such a cone by considering more than two, in particular, a complete set of SSC generator families whose non-negative linear combinations are all commutative. We term the convex cone so-constructed a particular \emph{SSC class}.

In \figref{fig:generatorgeometry}, we schematically depict the cross-section of the set of physical generator families, which then also cuts through the convex cones containing generator families of the aforementioned dynamical subclasses. Importantly, as all physical dynamics do \emph{not} form a convex cone in the vector space, the ones lying within such a hyperplane are described by a \emph{non-convex} set that, in turn, contains the \emph{convex} sets of:~CP-divisible dynamics, its semigroup subset, as well as ones representing particular SSC classes.

Now, as indicated in \figref{fig:generatorgeometry} by the dashed line, by adding a generator family that is SSC but \emph{not} CP-divisible (i.e., non-Markovian \cite{Rivas2014,Breuer2016,deVega2017}) and another physical family, even a semigroup, which does not commute with the first---i.e., is not contained within the corresponding SSC class---one may obtain unphysical dynamics. Consider an example of two purely dissipative ($\cL_t=\cD_t$ in \eqnref{eq:QME_dyn_gen}) qubit generators:
\begin{subequations}
\label{eq:ex_gens}
\begin{align}
\cL^{(1)}_t[\rho] & = \gamma_1(t)\, (\sx \rho \sx - \rho),  \label{eq:gen_deph}\\ 
\cL^{(2)}_t[\rho] & = \gamma_2(t)\, (\sm \rho \sp - \frac{1}{2}\{\sp\sm,\rho\}), \label{eq:gen_emiss}
\end{align}
\end{subequations}
where $\hs_{\pm} = (\sx \pm \ii\sy)/2$ and $\sx$, $\sy$, $\sz$ are the Pauli operators, and $\gamma_1(t)$, $\gamma_2(t)$ are chosen such that the generators are physical. Importantly, the families $\cL^{(1)}_t$ and $\cL^{(2)}_t$ despite being commutative, do not commute between one another. They belong to different SSC classes of qubit dynamics (see \appref{app:dyn_gens_qubit_dyns}), namely, random-unitary \cite{Andersson2007,Chruscinski2013} and phase-covariant \cite{Smirne2016} evolutions, respectively. 

In order to prove the situation indicated in \figref{fig:generatorgeometry}, we construct examples in which both $\cL^{(1)}_t$ and $\cL^{(2)}_t$ are physical, but their sum is not. We take instances of  $\gamma_1(t)$ and $\gamma_2(t)$ with one rate being constant (semigroup), and the other taking negative values for some times (non-Markovian) while fulfilling $\int_0^t\dd\tau\gamma(\tau)\ge0$ of \eqnref{eq:ss_cond_comm} (SSC). Two simple examples are provided by choosing $\gamma_1(t) = \sin(\omega t)$ and $\gamma_2(t) = \gamma$ and \emph{vice versa}, with $\gamma$ and $\omega$ being positive constants. We consider then the generator family  $\cL^{(1)}_t + \cL^{(2)}_t$ and solve analytically  in \appref{app:counter_gens_add} for the families of maps $\Lambda_t$ that arise in both cases. For each $\Lambda_t$, we compute the eigenvalues of its Choi-Jamio\l{}kowski (CJ) matrix---all of which must be non-negative at all times for the map to be CPTP (see \appref{app:dyn_gen_descr}).  We depict them as a function of time in \figref{fig:CJevals} for a choice of parameters which clearly demonstrates that the physicality is, indeed, invalidated at finite times. Their negativity, as demonstrated in \appref{app:counter_gens_add}, can also be verified analytically.

In \appref{app:counter_gens_add}, we also consider additional choices of $\gamma_1(t)$ and $\gamma_2(t)$ for the generators \eref{eq:ex_gens}, in order to show that the same conclusion holds when both semigroup and non-Markovian contributions come from explicit microscopic derivations. In particular, as the generators describe dephasing \eref{eq:gen_deph} and spontaneous-emission \eref{eq:gen_emiss} processes, we consider their non-Markovian forms derived (see \appref{app:qubit_classes}) from spin-boson and Jaynes-Cummings models, respectively \cite{Breuer}. 

Note that it follows from the above observations that physicality of a non-Markovian QME can be easily broken by addition of even a time-invariant (semigroup) dissipative term. Moreover, one should be extremely careful when dealing with dynamics described by generator families that are not even rescalable, e.g., see \appref{app:dyn_gens_rescal}:~ones that exhibit singularities at finite times \cite{Andersson2007}, are derived assuming weak-coupling interactions \cite{Gaspard1999}, or lead to physical (even commutative) but non-SS dynamics \cite{Hall2014}.
\begin{figure}[t!]
\includegraphics[width=0.8\linewidth]{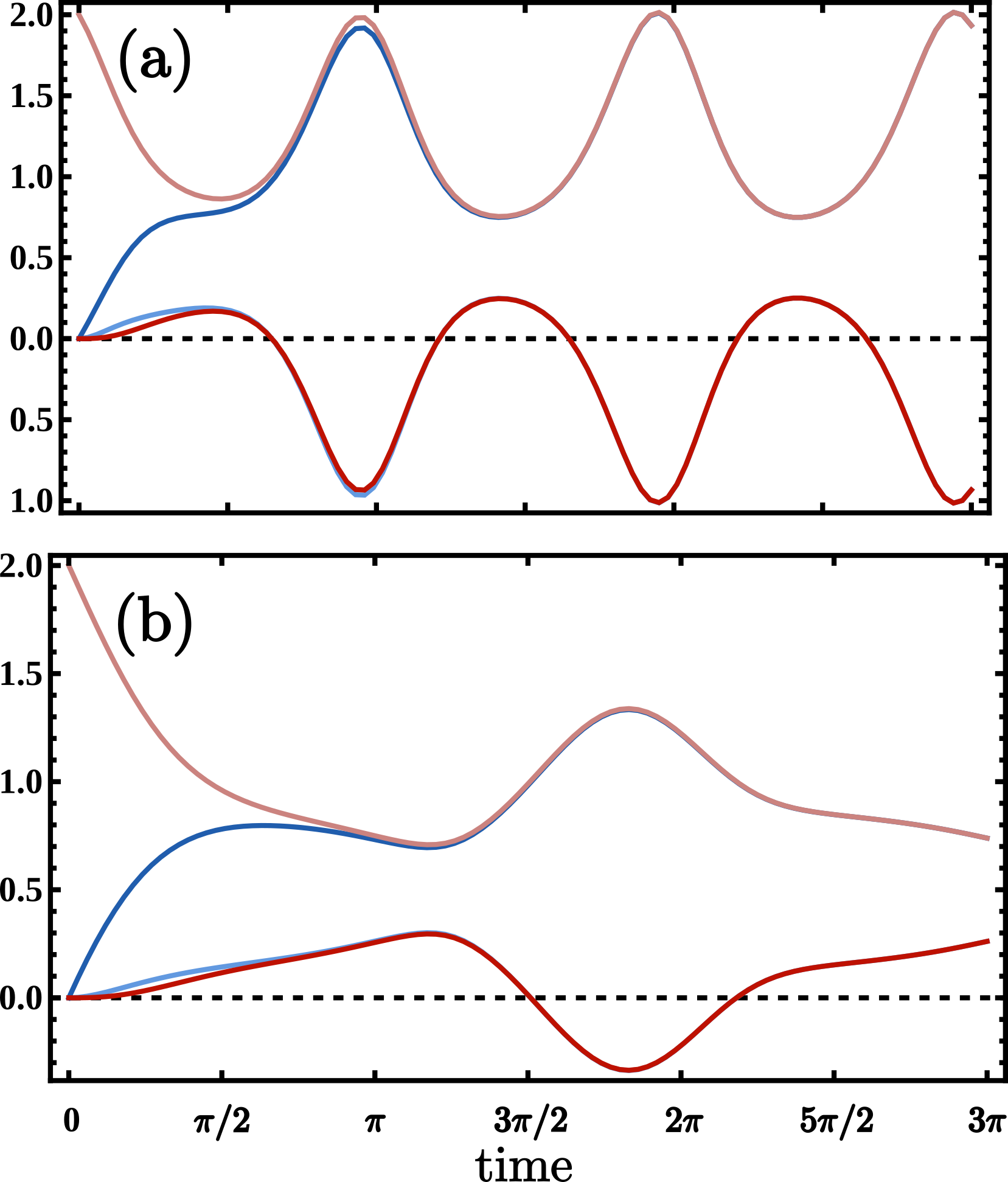}
\caption{%
\textbf{Eigenvalues of the CJ matrix as a function of time}, whose negativity demonstrates non-physicality of the dynamical maps generated by $\cL^{(1)}_t + \cL^{(2)}_t$ with $\cL^{(\bullet)}_t$ as defined in \eqnref{eq:ex_gens}. We choose in (a):~$\gamma_1(t) = \sin(2 t)$ and $\gamma_2(t) = 1$;~while in (b):~$\gamma_1(t) = 1/2$ and $\gamma_2(t) = \sin(t)$.}
\label{fig:CJevals}
\end{figure}
%

%%%%%%%%%%%%%%%%%%%%%%%%%%%%%%%%%%%%%%%%%%%%%%%%%%%%%%%%%%%%%%%%%%%%%%%%%%%%%%%%%%%%%%
%%%%%%%%%%%%%%%%%%%%%%%%%%%%%%%%%%%%%%%%%%%%%%%%%%%%%%%%%%%%%%%%%%%%%%%%%%%%%%%%%%%%%%
\section{Microscopic approach to QME\MakeLowercase{s}}
\label{sec:micro}
Let us recall that the QME \eref{eq:QME_dyn_gen} constitutes an effective description of the reduced dynamics, whose form must always originate from an underlying physical mechanism responsible for both free (noiseless) and dissipative parts of the system evolution. In particular, given a \emph{microscopic model} one should arrive at \eqnref{eq:QME_dyn_gen} starting from a closed dynamics describing the evolution of:~the system, its environment, as well as their interaction;~after tracing out the environmental degrees of freedom \cite{Gardiner,Breuer}.

%%%%%%%%%%%%%%%%%%%%%%%%%%%%%%%%%%%%%%%%%%%%%%%%%%%%%%%%%%%%%%%%%%%%%%%%%%%%%%%%%%%%%%
\subsection{Microscopic derivation of a QME}
\label{sec:QME_micro_der}
In a microscopic model of an evolving open quantum system, as illustrated in \figref{fig.settings}(a), one considers a system of interest $S$, coupled to an environment $E$ that is taken sufficiently large for the total system to be closed. The global evolution is then unitary, $U_{SE}(t)=\exp[-\ii(H_S + H_E + H_I)t]$, being determined by the free Hamiltonians $H_S$ and $H_E$, and the system-environment interaction $H_I$. In the \emph{Schr\"odinger picture}, the reduced state of the system, $\rhoS(t)=\tr_E \rhoSE(t)$, evolves as
\BE
\frac{d}{dt}\rhoS(t) = - \ii \tr_E \left[H_S + H_E + H_I , \rhoSE(t)\right],
\label{eq:ODE_S+E}
\EE
where $\rhoSE$ is the total system-environment state. 

If the environment and the system are initially uncorrelated, so that $\rhoSE(0)=\rhoS(0)\otimes\rhoE$, and $\rhoE$ is stationary, i.e., $\left[H_{E},\rhoE\right]=0$, \eqnref{eq:ODE_S+E} can be conveniently rewritten as (see also \appref{app:QME_integrodiff}) \cite{Breuer,Rivas2012}:
\BE
\frac{d}{dt}\bar{\rho}_S(t) = - \int_0^t ds \tr_E [ \bar{H}_{I}(t) , [ \bar{H}_{I}(s) , \bar{\rho}_{SE}(s) ]],
\label{eq:exact_integrodiff_singlebath}
\EE
where by the bar, $\bar{\bullet} :=\ee^{\ii\left(H_{S}+H_{E}\right)t}\bullet\,\ee^{-\ii\left(H_{S}+H_{E}\right)t}$, we denote the \emph{interaction picture} with respect to the free system-environment Hamiltonian $H_S + H_E$.
\eqnref{eq:exact_integrodiff_singlebath} constitutes the integro-differential QME discussed at the beginning of \secref{sec:qmes} that, in practice, is typically recast into the time-local form \eref{eq:QME_dyn_gen}, which after returning to the Schr\"{o}dinger picture (see \appref{app:QME_TL}~and~\ref{app:QME_SP}) reads: 
\BE
\frac{d}{dt}\rho_S(t) = -\ii[H_S,\rho_S(t)]+\Lenv_t\!\left[\rho_S(t)\right].
\label{eq:QME}
\EE
Importantly, $\Lenv_t$ above can be unambiguously identified as the dynamical generator---containing both Hamiltonian and dissipative parts as in \eqnref{eq:QME_dyn_gen}---that arises purely due to the interaction with the environment;~with the system \emph{free} evolution (dictated by the system Hamiltonian $H_S$) being explicitly separated.
\begin{figure}[!t]
\includegraphics[width=\columnwidth]{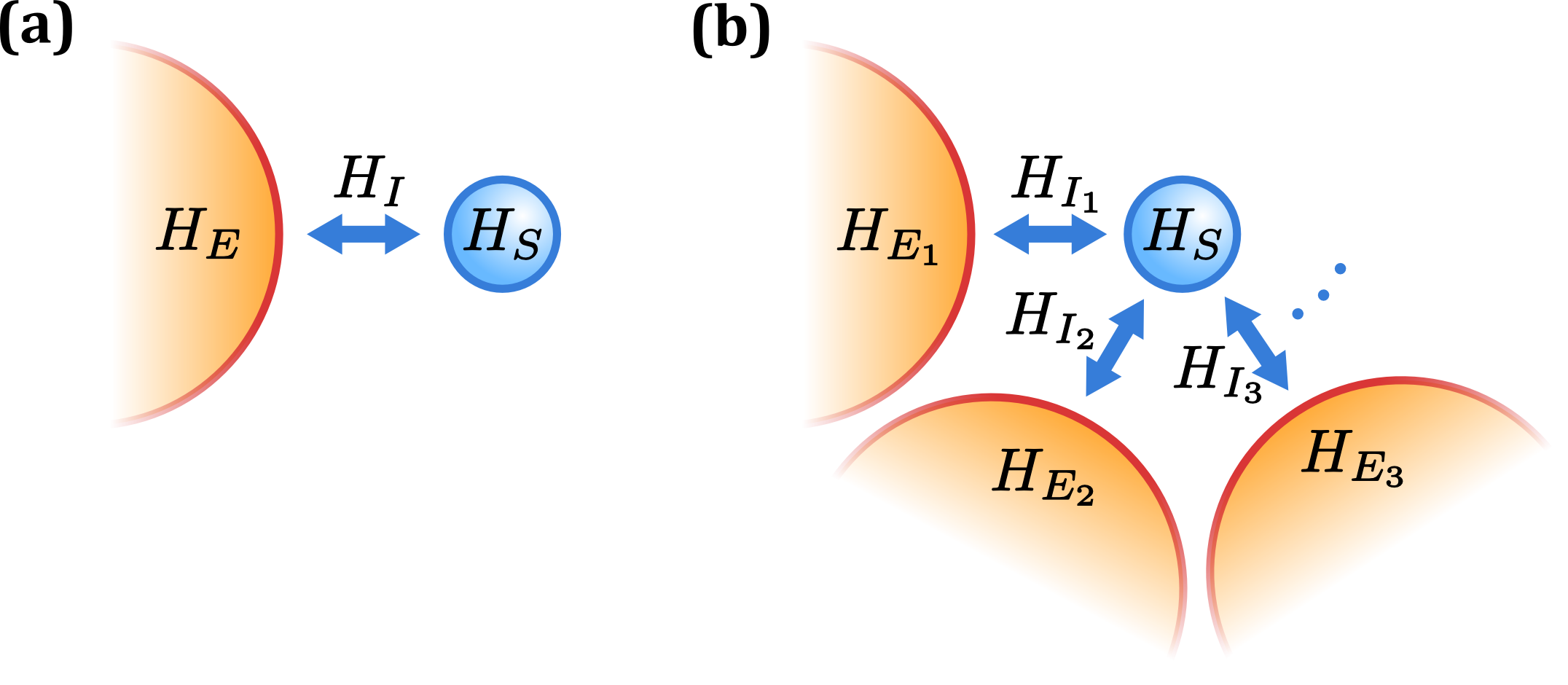}
\caption{%
\textbf{Microscopic description of an open quantum system $S$},
(a):~interacting with a single environment $E$; (b):~simultaneously interacting with multiple, independent environments $E_1$, $E_2$, $E_3$, $\dots$.}
\label{fig.settings}
\end{figure}
%

%%%%%%%%%%%%%%%%%%%%%%%%%%%%%%%%%%%%%%%%%%%%%%%%%%%%%%%%%%%%%%%
\subsubsection{Dependence on the system Hamiltonian}
However, as detailed in \appref{app:HS_cov}, despite the separation of terms in \eqnref{eq:QME} the form of the dynamical generator, $\Lenv_t$, may in general strongly depend on the system Hamiltonian $H_S$. Crucially, this means that the evolution of systems with different $H_S$, which interact with the same type of environment, cannot generally be modelled with the same QME after simply changing the $H_S$ in \eqnref{eq:QME}. However, under certain circumstances this can be justified.

In \appref{app:QME_HS}, we discuss in detail the natural cases when variations of $H_S$ do not affect the form of $\Lenv_t$ in \eqnref{eq:QME}, yet we summarise them here by the following lemma:
~\\
\begin{lem}
Consider a change $H_S \rightarrow H^{\prime}_S(t) = H_S + V(t)$ in \eqnref{eq:ODE_S+E}. The corresponding time-local QME can be obtained by just replacing $H_S$ with $H^{\prime}_S(t)$ in \eqnref{eq:QME} while keeping $\Lenv_t$ unchanged, if at all times $[V(t),H_I] = 0$ and either $[H_S, H_I] = 0$ or $[V(t), H_S] = 0$ (or both).
\end{lem}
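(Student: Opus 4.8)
The plan is to trace the full $H_S$-dependence of $\Lenv_t$ to two distinct channels and verify that the stated commutators leave both inert. Recall from the derivation around \eqnref{eq:exact_integrodiff_singlebath} that the interaction-picture generator $\bar{\tilde\cL}_t$ is assembled solely from the rotated coupling $\bar H_I(t)=\ee^{\ii(H_S+H_E)t}\,H_I\,\ee^{-\ii(H_S+H_E)t}$ together with the fixed stationary state $\rhoE$ and its correlation functions, while the return to the Schr\"odinger picture \eqnref{eq:QME} reinstates $H_S$ only through the free propagator $U_S(t)=\ee^{-\ii H_S t}$, via $\Lenv_t[\,\cdot\,]=U_S(t)\,\bar{\tilde\cL}_t[\,U_S^\dagger(t)\,\cdot\,U_S(t)\,]\,U_S^\dagger(t)$. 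Replacing $H_S\to H'_S(t)=H_S+V(t)$ turns $U_S(t)$ into the time-ordered exponential $U'_S(t)=\mathcal T\exp[-\ii\int_0^t H'_S(s)\,\dd s]$ and $\bar H_I$ into $\bar H_I'$, so it suffices to show that $\bar H_I'=\bar H_I$ and that the relative system unitary $W(t):=U_S^\dagger(t)\,U'_S(t)$ commutes with every $\bar H_I(s)$.

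The core step is the short computation $\bar H_I'(t)=W^\dagger(t)\,\bar H_I(t)\,W(t)$ followed by a case split. In the branch $[V(s),H_S]=0$ the ordered exponential factorises, $U'_S(t)=\ee^{-\ii H_S t}\,\mathcal T\exp[-\ii\int_0^t V(s)\,\dd s]$, so that $W(t)=\mathcal T\exp[-\ii\int_0^t V(s)\,\dd s]$; since each $V(s)$ commutes with $H_S$ and with $H_E$ (the latter acting on a different factor), conjugation by $\ee^{\pm\ii(H_S+H_E)\tau}$ fixes $V(s)$ and hence $[V(s),\bar H_I(\tau)]=\ee^{\ii(H_S+H_E)\tau}[V(s),H_I]\,\ee^{-\ii(H_S+H_E)\tau}=0$ by the hypothesis $[V(s),H_I]=0$. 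As $W(t)$ is a limit of products of the factors $\ee^{-\ii V(s)\,\dd s}$, it inherits this commutation, giving $\bar H_I'=\bar H_I$. In the branch $[H_S,H_I]=0$ the reduction is more direct: then $\bar H_I(\tau)=\ee^{\ii H_E\tau}H_I\,\ee^{-\ii H_E\tau}$ carries no $H_S$-dependence, and because both $H_S$ and every $V(s)$ commute with $H_I$, the system unitaries $U_S(t)$ and $U'_S(t)$ -- hence $W(t)$ -- commute with $H_I$ and therefore with each $\bar H_I(\tau)$; pushing $U'_S(\tau)$ through $H_I$ again yields $\bar H_I'=\bar H_I$.

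It then remains to propagate these facts through the microscopic construction. Because $\bar{\tilde\cL}_t$ is built from nested commutators of the $\bar H_I(s)$ partially traced over $E$, and $W(t)$ is a \emph{system} unitary that commutes with every $\bar H_I(s)$, with $\rhoE$, and with $\tr_E$, conjugation by $W(t)$ commutes with the whole superoperator: $\bar{\tilde\cL}_t[W^\dagger(t)\,X\,W(t)]=W^\dagger(t)\,\bar{\tilde\cL}_t[X]\,W(t)$. Substituting $U'_S=U_S W$ into the back-transformation, the surplus $W$-factors cancel and one recovers $\tilde\cL'_t=\Lenv_t$; thus the new QME is exactly \eqnref{eq:QME} with $H_S$ replaced by $H'_S(t)$ and the same generator.

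I expect the principal obstacle to be the time-ordering bookkeeping in the branch where $V(t)$ is time-dependent and does \emph{not} commute with $H_S$: one must justify the factorisation of $U'_S(t)$ (equivalently, that the interaction picture of $V(s)$ with respect to $H_S$ reduces to $V(s)$ itself) and argue that a hypothesis imposed pointwise in $s$ suffices to control the entire ordered propagator $W(t)$ as well as the correlation-function structure of $\bar{\tilde\cL}_t$. A secondary care-point, to be settled by the analysis of \appref{app:QME_HS}, is checking that the Born--Markov (or time-convolutionless) truncations introduce no $H_S$-dependence beyond the two channels isolated above, so that invariance of $\bar H_I$ together with the commutativity of $W(t)$ genuinely suffices.
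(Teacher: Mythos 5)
Your proposal is correct, and it proves the lemma by a route that is partly different from the paper's. The paper (in \appref{app:QME_HS}) argues case-by-case using covariance notions: when $[H_S,H_I]=0$ and $[V(t),H_I]=0$ it invokes $H'_S$-covariance (built on \appref{app:HS_cov}) to conclude that $\Lenv_t$ is fixed by $H_E$, $H_I$, $\rhoE$ alone; when instead $[V(t),H_S]=[V(t),H_I]=0$ it uses a redefinition trick — absorb the system Hamiltonian into a fictitious interaction $H_I':=H_I+H_S$ (leaving $H_T$ untouched), note $[V(t),H_I']=0$, and conclude the dynamics is $V(t)$-covariant so that $V(t)$ simply adds to the Hamiltonian part of the QME. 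Your argument replaces this with a unified, explicitly computational scheme: you isolate the only two places where $H_S$ enters the exact generator — the rotated coupling $\bar H_I$ and the back-transformation unitary — introduce the relative propagator $W(t)=U_S^\dagger(t)U'_S(t)$, and verify in each branch that $\bar H_I'=\bar H_I$ and that conjugation by the system unitary $W(t)$ commutes with the partial trace, with $\rhoE$, and with every $\bar H_I(s)$, hence with the whole superoperator $\LIP_t=\dot{\bar{\Lambda}}_t\circ\bar{\Lambda}_t^{-1}$, so the surplus $W$-factors cancel in $\Lenv[\prime]_t=\cU^{S\prime}_t\circ\LIP[\prime]_t\circ\cU^{S\prime\dagger}_t$. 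Your branch with $[H_S,H_I]=0$ is in substance the paper's covariance argument made explicit; your branch with $[V(t),H_S]=0$ is genuinely different from the paper's $H_I'$-redefinition, and it buys transparent control of the time-ordering issues raised by a time-dependent $V(t)$ (the factorisation $U'_S(t)=\ee^{-\ii H_S t}\,\mathcal{T}\exp[-\ii\int_0^t V(s)\,\dd s]$ needs only $[V(s),H_S]=0$ pointwise, with no assumption that the $V(s)$ commute among themselves), at the cost of more bookkeeping than the paper's two-line conceptual argument. One small remark: your closing worry about Born--Markov or time-convolutionless truncations is moot — the lemma concerns the exact time-local generator $\LIP_t=\dot{\bar{\Lambda}}_t\circ\bar{\Lambda}_t^{-1}$, and both your two-channel decomposition and the paper's proof operate at that exact level, so no truncation-dependence needs to be checked.
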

%
%~\\
Unfortunately, if the above sufficient condition cannot be met, one must, in principle, rederive the QME \eref{eq:QME} and the corresponding generator $\Lenv_t$ for $H^{\prime}_S(t)$. Moreover, such treatment is required independently of the interaction strength, i.e., also in the weak-coupling regime discussed below. A prominent physical example is provided by coherently driven systems, for which $V(t)$ represents the externally applied force. In their case, it is common that the time-dependence of $V(t)$ is naturally carried over onto, and significantly amends, the dynamical generator irrespectively of the coupling strength \cite{Rivas2010,Rivas2012}.

%%%%%%%%%%%%%%%%%%%%%%%%%%%%%%%%%%%%%%%%%%%%%%%%%%%%%%%%%%%%%%%%
\subsubsection{Generalisation to multiple environments}
Another important question one should pose is under what conditions the full derivation of the QME \eref{eq:QME} can also be bypassed when dealing with a system that simultaneously interacts with multiple environments---as depicted in \figref{fig.settings}(b). Motivated by the analysis of \secref{sub:gen_add}, one may then naively expect that, given multiple \emph{additive} generator families describing each separate interaction, $\Lenv[(i)]_t$, they should be simply added to construct the overall QME of the form \eref{eq:QME} with $\Lenv_t=\sum_i\Lenv[(i)]_t$ \footnote{Note that we are not concerned with the internal structure of the system that is crucial when discussing, e.g., additivity of decay rates for a bipartite system with each of its parts coupled to a different reservoir \cite{Yu2006}.}.

Such a procedure may, however, lead to incorrect dynamics, as may be demonstrated by considering explicitly the microscopic model that incorporates interactions with multiple environments---with now $H_E = \sum_i H_{E_i}$ and $H_I = \sum_i H_{I_i}$ in \eqnref{eq:ODE_S+E}. Following the derivation steps of the time-local QME \eref{eq:QME}, while assuming its existence both in the presence of each single environment and all of them, one arrives at a generalised QME (see also \appref{app:valid_add_gens}):
\begin{widetext}
\BE
\frac{d}{dt}\rhoS(t) =-\ii\left[H_{S},\rhoS(t)\right]+\sum_{i}\Lenv[(i)]_{t}\!\left[\rhoS(t)\right]
-\sum_{i\neq j}\int_{0}^{t}\!\!ds\,\ee^{-\ii H_{S}(t-s)}\tr_{E_{ij}}\left[\bar{H}_{I_{i}}(t-s),\left[H_{I_{j}},\rho_{SE_{ij}}(s)\right]\right]\ee^{\ii H_{S}(t-s)},
\label{eq:QME_multiple_envs}
\EE
\end{widetext}
where $\bar{H}_{I_{i}}(\tau)=\ee^{\ii\left(H_{S}+H_{E_{i}}\right)\tau}H_{I_{i}}\ee^{-\ii\left(H_{S}+H_{E_{i}}\right)\tau}$, $\Lenv[(i)]_t$ is the generator arising when only the $i$th environment is present, $\rho_{SE_{ij}}$ denotes the joint-reduced state of the system and environments $i$ and $j$, while $\tr_{E_{ij}}$ stands for the trace over these environments. 

Crucially, the naive addition of generators would lead to a QME that contains only the first two terms in \eqnref{eq:QME_multiple_envs}. In particular, it would completely ignore the last term, which we here name the \emph{cross-term}, as it accounts for the cross-correlations that may emerge between each two environments due to their indirect interaction being mediated by the system.

%%%%%%%%%%%%%%%%%%%%%%%%%%%%%%%%%%%%%%%%%%%%%%%%%%%%%%%%%%%%%%%%%%%%%%%%%%%%%%%%%%%%%%
\subsection{Microscopic validity of generator addition}
\label{sec:gen_add_micro}
The generalisation of the QME to multiple environments \eref{eq:QME_multiple_envs} allows one to unambiguously identify when the true dynamics derived microscopically coincides with the evolution obtained by naively adding the generators.
\begin{obs}
A dynamical generator corresponding to a system simultaneously interacting with multiple environments can be constructed by simple addition of the generators associated with each individual environment iff the cross-term in \eqnref{eq:QME_multiple_envs} identically vanishes. 
\label{obs:crossterm_vanish}
\end{obs}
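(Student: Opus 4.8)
The plan is to read the claim off the structure of the generalised QME \eqnref{eq:QME_multiple_envs}, combined with the fact---already invoked in \secref{sec:qmes}---that a time-local QME determines its generator uniquely at each instant. Introduce the shorthand $\cL_t^{\mrm{add}}:=-\ii[H_S,\,\cdot\,]+\sum_i\Lenv[(i)]_t$ for the superoperator obtained by naively adding the single-environment generators, and let $\mathcal{C}_t$ denote the cross-term, i.e.\ the last sum in \eqnref{eq:QME_multiple_envs}. By the standing assumption a time-local generator $\cL_t$ exists for the joint interaction with all environments, so that $\tfrac{\dd}{\dd t}\rhoS(t)=\cL_t[\rhoS(t)]$. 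The target is then the equivalence $\cL_t=\cL_t^{\mrm{add}}$ for all $t$ $\Leftrightarrow$ $\mathcal{C}_t\equiv0$, which is what ``construction by simple addition'' versus ``identical vanishing of the cross-term'' means.

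The forward implication is immediate: if the cross-term vanishes identically then \eqnref{eq:QME_multiple_envs} collapses to $\tfrac{\dd}{\dd t}\rhoS(t)=\cL_t^{\mrm{add}}[\rhoS(t)]$, and uniqueness of the time-local generator forces $\cL_t=\cL_t^{\mrm{add}}$. In fact, both directions follow at once once one establishes the superoperator identity $\mathcal{C}_t=\cL_t^{\mrm{add}}-\cL_t$: equating the two expressions for $\tfrac{\dd}{\dd t}\rhoS(t)$, namely $\cL_t[\rhoS(t)]$ and $\cL_t^{\mrm{add}}[\rhoS(t)]-\mathcal{C}_t$ from \eqnref{eq:QME_multiple_envs}, yields precisely this relation, whence $\mathcal{C}_t\equiv0$ if and only if the two generators coincide.

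The main obstacle is to make rigorous the reinterpretation of $\mathcal{C}_t$ as a genuine time-local superoperator, which is what endows ``identically vanishes'' with an unambiguous meaning. As written in \eqnref{eq:QME_multiple_envs}, the cross-term is a functional of the two-environment reduced states $\rho_{SE_{ij}}(s)$ over the entire history $0\le s\le t$, and is therefore not manifestly a function of the instantaneous system state $\rhoS(t)$ alone; a priori it vanishes only along the physical trajectory, not as an operator acting on arbitrary inputs. This is exactly where the assumption that a time-local QME exists for the full problem must be used: writing $\cL_t=\dot{\Lambda}_t\Lambda_t^{-1}$ for the associated (invertible) dynamical map $\Lambda_t$ recasts the whole right-hand side of \eqnref{eq:QME_multiple_envs}---and hence $\mathcal{C}_t$---as a well-defined superoperator on $\rhoS(t)$. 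Only then does the identity $\mathcal{C}_t=\cL_t^{\mrm{add}}-\cL_t$ become an equality of superoperators rather than an agreement on a single state, so that $\cL_t=\cL_t^{\mrm{add}}$ is equivalent to the \emph{identical} vanishing of the cross-term. I would defer the explicit recasting to \appref{app:valid_add_gens}.
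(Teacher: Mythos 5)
Your proposal is correct and follows essentially the same route as the paper: the Observation is read directly off the structure of \eqnref{eq:QME_multiple_envs}, whose validity (together with the existence of the time-local generators for each single environment and for all of them jointly) the paper takes as the standing assumption, so that the naive sum reproduces the true dynamics precisely when the cross-term vanishes. Your additional care in recasting the cross-term as a bona fide superoperator via $\Lambda_t^{-1}$ (so that ``identically vanishes'' and ``equality of generators'' are equivalences of superoperators, not just agreement along a single trajectory) is a welcome sharpening of a step the paper leaves implicit, but it is a refinement of the same argument rather than a different one.
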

\noindent %
In what follows, we show that \obsref{obs:crossterm_vanish} allows one to prove the validity of generator addition in the weak-coupling regime. However, as the cross-term in \eqnref{eq:QME_multiple_envs} involves a time-convolution integral, in order to prove that it identically vanishes given the microscopic Hamiltonians satisfy particular commutation relations, we must consider the dynamics in its integrated form---at the level of the corresponding dynamical (CPTP) map.

%%%%%%%%%%%%%%%%%%%%%%%%%%%%%%%%%%%%%%%%%%%%%%%%%%%%%%%%%%%%%%
\subsubsection{Weak-coupling regime}
\label{app:valid_add_gens_weakcoupling}
\begin{lem}
The dynamical generator of the evolution of a system simultaneously interacting with multiple environments in the weak coupling regime can be constructed by simple addition of the generators associated with each individual environment. 
\label{lem:weak_coupling}
\end{lem}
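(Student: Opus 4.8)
The plan is to reduce the claim, via \obsref{obs:crossterm_vanish}, to showing that the cross-term in \eqnref{eq:QME_multiple_envs} vanishes to the order at which the weak-coupling QME is defined. The immediate subtlety is one of power counting: each individual generator $\Lenv[(i)]_t$ is already second order in the interaction strength, and so, naively, is the cross-term, since it too is bilinear in the interaction Hamiltonians $H_{I_i}$ and $H_{I_j}$. Hence the vanishing cannot follow from a crude order argument alone, and this is precisely the point that requires care.

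First I would invoke the Born approximation underlying the weak-coupling derivation: to leading order the joint state appearing under the integral factorises as $\rho_{SE_{ij}}(s)\approx\rhoS(s)\otimes\rho_{E_i}\otimes\rho_{E_j}$, with each environment remaining in its stationary state and the two baths uncorrelated. I would substitute this into the cross-term and expand the inner double commutator into its four operator orderings. The key step is then to exploit that $H_{I_i}$ acts nontrivially only on $S$ and $E_i$, while $H_{I_j}$ acts only on $S$ and $E_j$. Writing $H_{I_i}=\sum_\alpha A^{(i)}_\alpha\otimes B^{(i)}_\alpha$ with environment operators $B^{(i)}_\alpha$, the partial trace $\tr_{E_{ij}}$ over the two independent baths factorises, and every one of the four terms acquires a factor $\tr_{E_i}[\bar{B}^{(i)}_\alpha\rho_{E_i}]$ times $\tr_{E_j}[B^{(j)}_\beta\rho_{E_j}]$. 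Using that $\rho_{E_i}$ is stationary---so the interaction-picture average equals the Schr\"{o}dinger one---together with the standard zero-mean convention $\tr_{E_i}[B^{(i)}_\alpha\rho_{E_i}]=0$, always enforceable by absorbing any nonzero mean into $H_S$, each such factor vanishes. Therefore the whole integrand, and with it the cross-term, is identically zero at second order.

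Finally I would argue robustness beyond the strictly factorised state: the neglected leading Born correction to $\rho_{SE_{ij}}(s)$ is itself a sum of terms each correlating $S$ with only a single environment, so plugging it in still leaves a dangling one-point function on the other bath; genuine cross-correlations between $E_i$ and $E_j$---the only thing that could rescue the cross-term---are mediated by the system and appear only at higher order, outside the weak-coupling regime. Thus the cross-term vanishes to the relevant order, and \obsref{obs:crossterm_vanish} then yields $\Lenv_t=\sum_i\Lenv[(i)]_t$, completing the argument.

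The main obstacle I anticipate is exactly the order-counting tension flagged above: one must make it manifest that the bilinear cross-term, despite matching the individual generators in coupling order, collapses because the Born factorisation turns it into a product of individually vanishing one-point bath averages rather than a genuine two-point correlation. Making this clean may require being explicit about the interaction-picture structure of $\bar{H}_{I_i}(t-s)$ and about why stationarity lets the environment averages be evaluated in the Schr\"{o}dinger picture, but no genuinely hard estimate is involved once the factorisation and zero-mean convention are in place.
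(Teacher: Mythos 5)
Your core mechanism is the same as the paper's: reduce via \obsref{obs:crossterm_vanish} to the vanishing of the cross-term in \eqnref{eq:QME_multiple_envs}, Schmidt-decompose each $H_{I_i}$ into system and bath operators, use the tensor-product structure of the global state to factorise every two-bath correlation function into a product of single-bath one-point averages, and kill those averages by a zero-mean convention enforced through a shift of Hamiltonians that leaves the total Hamiltonian unchanged. To that extent the proposal is sound, and your opening remark about order counting is resolved the same way the paper resolves it: the cross-term does not vanish by power counting but because factorisation turns it into products of one-point functions.

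The genuine gap is the generality of the ansatz. You fix each bath to remain in its stationary state, $\rho_{SE_{ij}}(s)\approx\rhoS(s)\otimes\rho_{E_i}\otimes\rho_{E_j}$, i.e., the Born--Markov approximation. That is precisely the ``more restricted regime'' of \refcite{CohenTannoudji1998,Schaller2015} which the paper's proof in \appref{app:valid_add_gens_weakcoupling} is explicitly constructed to go beyond. The paper's weak-coupling ansatz \eref{eq:weakcopling_appr} reads $\rhoSE(t)\approx\rhoS(t)\otimes\bigotimes_i\varrho_{E_i}(t)$ with \emph{arbitrary} reference states $\varrho_{E_i}(t)$, constrained only by $\varrho_{E_i}(0)=\rho_{E_i}$; this freedom is needed to cover, e.g., second-order time-convolutionless derivations where the bath reference state is not held fixed. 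In that setting both of your uses of stationarity fail: the one-point averages $\Cfun_{i;k}(t)=\tr\{B_{k}^{E_i}\,\varrho_{E_i}(t)\}$ become genuinely time-dependent, so the interaction-picture average no longer equals the Schr\"odinger one, and a time-independent absorption of the mean into $H_S$ cannot make them vanish at all times. The paper's fix is a \emph{time-dependent} shift, $H'_S(t)=H_S+\sum_{i,k}\Cfun_{i;k}(t)\,A_{i;k}$ together with $H'_{I_i}(t)=H_{I_i}-\sum_k\Cfun_{i;k}(t)\,(A_{i;k}\otimes\1_{E_i})$, which preserves the total Hamiltonian and makes the shifted one-point functions vanish identically, so that every summand of the cross-term is killed exactly rather than ``to the relevant order.'' Your closing robustness paragraph does not substitute for this: it concerns corrections to the Born factorisation of the state, not the time dependence of the reference bath states, and it is heuristic. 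With the time-dependent shift inserted, your argument becomes the paper's proof.
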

\noindent %
Here, we summarise the proof of \lemref{lem:weak_coupling} that can be found in \appref{app:valid_add_gens_weakcoupling}, and generalises the argumentation of \refcite{CohenTannoudji1998,Schaller2015} applicable to the more restricted regime in which the Born-Markov approximation is valid. 

In particular, it applies to any QME valid in the \emph{weak-coupling} regime (c.f.~\cite{deVega2017,Breuer,Rivas2012}) which is derived using the ansatz:
\BE
\rhoSE(t) \approx  \rhoS(t) \otimes \bigotimes_i  \varrho_{E_i}(t),
\label{eq:weakcopling_appr}
\EE
where $\rhoS(t)$ is the reduced system state at $t$, while $\varrho_{E_i}(t)$ can be arbitrarily chosen for $t>0$---it does not need to represent the reduced state of the $i$th environment, $\rho_{E_i}(t)=\tr_{\neg E_i}\rhoSE(t)$, as long as it initially coincides with its stationary state, i.e., $\varrho_{E_i}(0)=\rho_{E_i}$. Let us emphasise that the assumption \eref{eq:weakcopling_appr} is employed only at the derivation stage, so that the QME so-obtained, despite correctly reproducing the reduced dynamics of the system under weak coupling, may yield upon integration closed dynamics with overall system-environment states that significantly deviate from the ansatz \eref{eq:weakcopling_appr} and, in particular, its tensor-product structure \cite{Rivas2010}.

In \appref{app:valid_add_gens_weakcoupling}, we first employ operator Schmidt decomposition \cite{Bengtsson2006} to reexpress each of the interaction Hamiltonians in \eqnref{eq:QME_multiple_envs} as $H_{I_i} = \sum_{k} A_{i;k} \otimes B_{k}^{E_i}$, i.e, as a sum of operators that act separately on the system and corresponding environment. This decomposition, together with the ansatz \eref{eq:weakcopling_appr} allows us to rewrite the overall QME \eref{eq:QME_multiple_envs} in terms of correlation functions involving only pairs of baths. Furthermore, the tensor-product structure of \eqnref{eq:weakcopling_appr} ensures that each of these reduces to a product of single-bath correlation functions. Hence, as any single-bath (one-time) correlation function can always be assumed to be zero, every summand in the cross-term of the QME \eref{eq:QME_multiple_envs} must independently vanish.

Note that, in particular, this holds for all QMEs derived using the \emph{time-convolutionless} approach \cite{Breuer2001} up to second order in all the interaction parameters representing coupling strengths for each environment.

%%%%%%%%%%%%%%%%%%%%%%%%%%%%%%%%%%%%%%%%%%%%%%%%%%%%%%%%%%%%%%%%%
\subsubsection{Commutativity of microscopic Hamiltonians}
\begin{figure}[!t]
{\centering \includegraphics[width=0.85\linewidth]{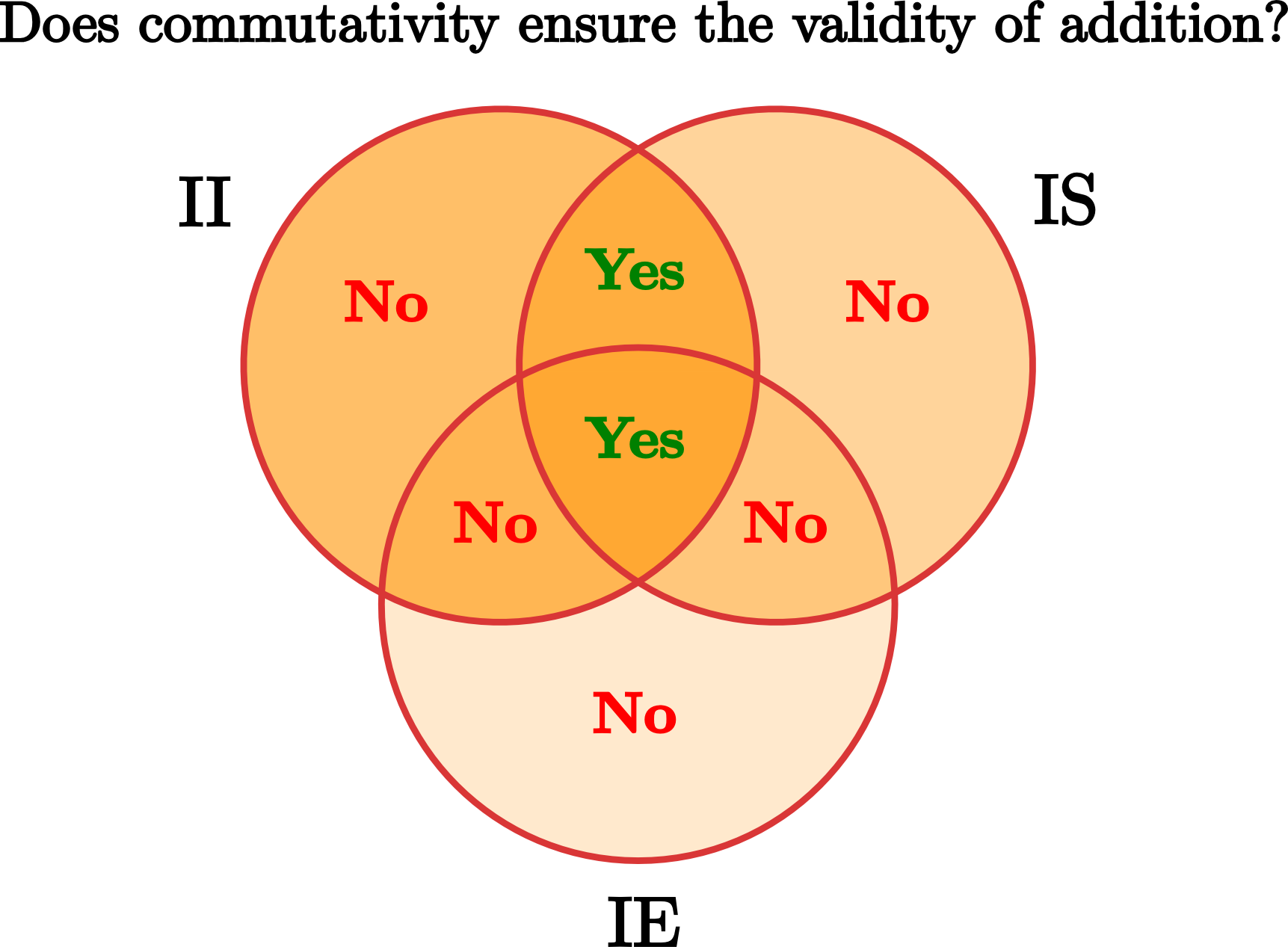}}
\caption{
\textbf{Validity of generator addition as assured by the commutativity of microscopic Hamiltonians}. The QME \eref{eq:QME_multiple_envs} describing the multiple-environment scenario of \figref{fig.settings}(b) is considered. Each set (circle) above indicates that commutativity of the interaction Hamiltonians with:~II -- each other, IS -- the system Hamiltonian, IE -- all the free Hamiltonians of environments; can be assumed.}
\label{fig:venn}
\end{figure}
Next, we investigate the implications that commutativity of the system, environment, and interaction Hamiltonians has on the validity of generator addition. We consider the cases when all $H_{I_i}$ commute with each other (II), with $H_S$ (IS), or with all the $H_{E_i}$ (IE), and summarise the results in \figref{fig:venn}. We find that:
\begin{lem}
Only when the interaction Hamiltonians commute among themselves and with the system free Hamiltonian, i.e., $[H_{I_i},H_{I_j}]=0$ and $[H_{I_i},H_S]=0$ for all $i,j$;~can the overall QME be constructed by adding dynamical generators associated individually with each environment---ignoring the cross-term in \eqnref{eq:QME_multiple_envs}. 
\label{lem:commutativity}
\end{lem}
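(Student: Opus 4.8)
The plan is to prove the statement in its \emph{necessity} form, via the contrapositive. By \obsref{obs:crossterm_vanish}, constructing the overall generator by bare addition is legitimate exactly when the cross-term in \eqnref{eq:QME_multiple_envs} vanishes identically---for all times and all admissible initial system states. I would therefore establish that identical vanishing of the cross-term forces both commutation conditions, $[H_{I_i},H_{I_j}]=0$ (labelled II in \figref{fig:venn}) and $[H_{I_i},H_S]=0$ (IS), to hold for every pair $i,j$. The strategy is to show that each of II and IS is \emph{individually indispensable}: dropping either one---even while retaining the other, and even under the environment-commutativity condition IE of \figref{fig:venn}---already permits a non-vanishing cross-term. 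Since the simultaneous presence of II and IS is what defines the guaranteed-additivity region of \figref{fig:venn}, proving that neither can be removed is precisely the ``only when''.

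First I would record \emph{why} necessity cannot be read off from any low-order expansion, which is the conceptual crux. Under the tensor-product ansatz underlying \lemref{lem:weak_coupling}, every summand of the cross-term factorises into products of single-bath one-time correlation functions, which may be taken to vanish; hence the cross-term vanishes to second order in the couplings \emph{irrespective} of the commutators. The obstruction to additivity is thus an intrinsically beyond-weak-coupling effect, carried by the system-mediated correlations that accumulate in the exact joint state $\rho_{SE_{ij}}(s)$. Consequently the argument must run at the level of the exact, integrated dynamical map, as anticipated in \secref{sec:gen_add_micro}, rather than on the generator or its short-time series.

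Concretely, I would supply the two required witnesses using the exactly solvable qubit--multiple-spin-bath model developed in \secref{sec:magnets}, whose true reduced map $\Lambda_t$ is available in closed form. To show that IS is necessary, I would take interaction Hamiltonians that mutually commute, $[H_{I_i},H_{I_j}]=0$, and commute with every $H_{E_i}$ (so II and IE hold), but that do \emph{not} commute with $H_S$; comparing the exact $\Lambda_t$ with the map generated by $\sum_i\Lenv[(i)]_t$ then exhibits a concrete discrepancy (e.g.\ in a fixed density-matrix element or expectation value) on an open interval of times. To show that II is necessary, I would instead let each $H_{I_i}$ commute with $H_S$ while the $H_{I_i}$ fail to commute among themselves, and again read a non-zero cross-term off the exact solution. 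The same family covers the remaining regions of \figref{fig:venn} (in particular IE alone, and IS or II in isolation), confirming that no weaker commutation-based combination suffices.

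The main obstacle is the genuine gap between reduced-dynamics additivity and full unitary factorisation. Under II$\,\wedge\,$IS the total generator splits into mutually commuting pieces $H_S+\sum_i(H_{E_i}+H_{I_i})$ and the global unitary factorises, giving additivity; but matching only the \emph{reduced} dynamics is a strictly weaker demand, so one must exclude \emph{accidental} cancellations of the cross-term produced by the partial trace against the specific stationary states $\rho_{E_i}$. I expect this to be the delicate point, and I would control it precisely through exact solvability: since $\Lambda_t$ is known explicitly, the discrepancy with the added-generator map can be shown non-zero on a full-measure set of parameters and times rather than at an isolated instant, ruling out coincidental vanishing and thereby securing the necessity of each commutation relation. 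A secondary difficulty is engineering each witness so that exactly one commutator is switched on while the others stay zero, guaranteeing that the observed failure is attributable to the intended violated relation.
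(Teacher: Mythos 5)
Your proposal matches the paper's own proof of this lemma: the ``only when'' is established exactly as you describe, by exhibiting exactly solvable spin-magnet counterexamples in the $\text{IS}\cap\text{IE}$ and $\text{II}\cap\text{IE}$ regions of \figref{fig:venn} (which together rule out II, IS, and IE in every weaker combination), working at the level of the integrated dynamics and checking $\LSPb[(12)]_t \neq \LSPb[(1)]_t + \LSPb[(2)]_t$---precisely because, as you correctly note, the cross-term cannot be probed at second order in the couplings, where it vanishes regardless of the commutators. Your passing sketch of the converse direction (factorisation of $U_{SE}$ under $\text{II}\cap\text{IS}$ into commuting pieces, hence commuting single-environment maps whose generators add) is likewise the paper's argument in \appref{app:valid_add_gens_comm}.
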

\noindent %
Again, we summarise here the proof of \lemref{lem:commutativity} that can be found in \appref{app:valid_add_gens_comm}. However, in contrast to the discussion of the weak-coupling regime, we are required to return to the microscopic derivation of the QME \eref{eq:QME_multiple_envs}. 

Crucially, the commutativity of interaction Hamiltonians with one another as well as with $H_S$---the region $\text{II}\cap\text{IS}$ marked `Yes' in \figref{fig:venn}---assures that the unitary of the global von Neumann equation \eref{eq:ODE_S+E} factorises, i.e.:
\BE
U_{SE}(t)\!=\! \ee^{-\ii (H_S + \sum_i \!H_{E_i} \!+\! H_{I_i}) t} \!=\! \ee^{-\ii H_S t}\prod_i \ee^{-\ii (H_{I_i} + H_{E_i})t}.
\EE
As a result, the system dynamics is described by a product of commuting CPTP maps, $\bar{\rho}_S(t) = \prod_i \LambdaE_t^{(i)} [\bar{\rho}_S(0)]$, associated with each individual environment and given by $\LambdaE_t^{(i)}[\bullet] = \tr_{E_i}\!\{\ee^{-\ii(H_{I_i} + H_{E_i})t}\, (\bullet \ot\rho_{E_i})\, \ee^{\ii(H_{I_i} + H_{E_i})t}\}$. 
By differentiating the dynamics with respect to $t$, it is then evident that the QME takes the form \eref{eq:QME_multiple_envs} with each $\Lenv[(i)]_t=\dot{\LambdaE}_{t}^{(i)}\circ(\LambdaE_{t}^{(i)})^{-1}$ and the cross-term being, indeed, absent. Note that, as all $\Lenv[(i)]_t$ must then represent generator families belonging to a common commutative class, if each of them yields dynamics that is also SS, they all must belong to the same SSC class in \figref{fig:generatorgeometry}.

In all other cases marked `No' in \figref{fig:venn}, the commutativity does not ensure the generators to simply add. We demonstrate this by providing explicit counterexamples based on a concrete microscopic model, for which the evolution of the system interacting with each environment separately, as well as all simultaneously, can be explicitly solved. It is sufficient to do so for the settings in which either all $H_{I_i}$ commute with all $H_{E_i}$ and $H_S$ (intersection $\text{IS}\cap\text{IE}$ in \figref{fig:venn}), or all $H_{I_i}$ commute with each other and all $H_{E_i}$ (intersection $\text{II}\cap\text{IE}$), since it then follows that neither II, IS, nor IE alone can ensure the validity of generator addition. 

Note that it is known that $[H_I,H_S]=0$ implies the evolution to be CP-divisible \cite{Khalil2014}. Thus, as families of CP-divisible generators are additive (c.f.~\figref{fig:generatorgeometry}), our counterexample for $\text{IS}\cap\text{IE}$ below corresponds to a case where generator addition results in dynamics which is physical but does \emph{not} agree with the microscopic derivation. For a setting in which $H_{I_i}$ do not commute neither among each other nor with $H_S$, validity of adding generators has been discussed in \refcite{Chan2014}.

%%%%%%%%%%%%%%%%%%%%%%%%%%%%%%%%%%%%%%%%%%%%%%%%%%%%%%%%%%%%%%%%%%%%%%%%%%%%%%%%%%%%%%
%%%%%%%%%%%%%%%%%%%%%%%%%%%%%%%%%%%%%%%%%%%%%%%%%%%%%%%%%%%%%%%%%%%%%%%%%%%%%%%%%%%%%%
\section{Spin-magnet model}
\label{sec:magnets}
In this section, we construct counterexamples to the validity of adding generators at the QME level for the relevant cases summarised in \figref{fig:venn}. Inspired by \refcite{Allahverdyan2013,Perarnau2016}, we consider a single \emph{qubit} (spin-$1/2$ particle) in contact with multiple \emph{magnets}---environments consisting of many spin-$1/2$ systems. Within this model, the closed dynamics of the global qubit-magnets system can be solved and, after tracing out the magnets' degrees of freedom, the exact open dynamics of the qubit can be obtained. As a result, we can determine the QME generators describing dynamics of the qubit when coupled to one or more magnets, so that comparison with evolutions obtained by adding the corresponding generators can be explicitly made.

%%%%%%%%%%%%%%%%%%%%%%%%%%%%%%%%%%%%%%%%%%%%%%%%%%%%%%%%%%%%%%%%%%%%%%%%%%%%%%%%%%%%%%
\subsection{Magnet as an environment}
Within our model, we allow the system free Hamiltonian $H_S$ to be chosen arbitrarily, yet, for simplicity, we take the free Hamiltonian of the environment to vanish, $H_E = 0$. As a result, any initial environment state is stationary, with $[\rho_E,H_E]=0$ trivially for any  $\rho_{E}$.

The environment is represented by a magnet that consists of $N$ spin-1/2 particles, for which we introduce the magnetisation operator:
\BE
\op m = \sum_{n=1}^{N} \sz^{(n)} = \sum_{k=0}^N m_k \, \Pi_k ,
\EE
where $\Pi_k$ is the projector onto the subspaces with magnetisation $m_k$ (i.e., with $k$ spins pointing up). The magnetisation $m_k$ takes $N+1$ equally spaced values between $-N$ and $N$:
\BE
m_k = -N +2k \qquad\trm{for}\qquad k=0,...,N.
\EE

Consistently with \secref{sec:micro}, the initial state of the spin-magnet system reads $\rhoSE(0) = \rhoS(0) \otimes \rhoE$, where we take the initial magnet state to be a classical mixture of different magnetisations, i.e.,
\BE
\rhoE = \sum_{k=0}^{N} q_k \Pi_k.
\label{eq:initial_magnet_state}
\EE
The initial probability for an observation of the magnetisation to yield $m_k$ is then
\BE
p(m_k) = \tr[\rhoE \Pi_k] = q_k \tr \Pi_k.
\label{eq:p_(m_k)}
\EE
In the limit of large $N$, $p(m_k)$ approaches a continuous distribution $p(m)$, whose moments can then be computed as follows:
\BE
\sum_{k=0}^N (m_k)^s p(m_k) 
\quad\underset{N\to\infty}{\longrightarrow}\quad
\int_{-\infty}^{\infty} dm \;  m^s \; p(m) .
\EE

We consider only interaction Hamiltonians which couple the system qubit to the magnet's magnetisation, i.e,
\BE
H_I = A \otimes\op m
\label{eq:HI_magnet}
\EE
with $A$ being an arbitrary qubit observable. However, in order to be consistent with \secref{sec:micro}, we must impose that 
\BE
\tr_E\{H_I\rho_E\}=A \tr\{\op{m}\rho_E\}=A\,\sum_{k=0}^N m_k\, p (m_k)=0,
\EE 
which implies that we must restrict to distributions $p(m_k)$ (and $p(m)$ in the $N\to\infty$ limit) with zero mean. 

In the examples discussed below, we consider two initial magnetisation distributions for the magnet in the asymptotic $N$ limit. In particular, we consider a \emph{Gaussian} distribution:
\begin{align}
p(m)= \frac{1}{\sqrt{2 \pi}\sigma} \ee^{-\frac{m^2}{2\sigma^2}},
\label{eq.gaussianmagnet}
\end{align}
which formally corresponds to the asymptotic limit of a magnet being described by a microcanonical ensemble \cite{Allahverdyan2013}---its every spin configuration being equally probable, with $q_k = 1/2^{N}$ in \eqnref{eq:p_(m_k)}, yielding a binomial distribution of magnetisation with variance equal to the number of spins ($\sigma^2=N$).
We also consider the case when the magnetisation follows a \emph{Lorentzian} distribution in the $N\to\infty$ limit, i.e.:
\begin{align}
\label{eq.lorentzianmagnet}
p(m)= \frac{\lambda}{\pi (\lambda^2+m^2)} ,
\end{align}
parametrised by the scale parameter $\lambda$ (specifying the half width at half maximum).

Given the above initial magnet state \eref{eq:initial_magnet_state} and the interaction Hamiltonian \eref{eq:HI_magnet}, the global system-magnet state constitutes at all times a mixture of states with different magnet magnetisation. In particular, it can be decomposed at any $t\ge0$ as
\BE
\label{eq.globalstate}
\rhoSE(t) = \sum_k q_k \rho_S^{(k)}(t) \otimes \Pi_k,
\EE
where every $\rho_S^{(k)}$ can be understood as the (normalised) state of the system conditioned on the magnet possessing the magnetisation $m_k$. Consequently, the full reduced system state at time $t$ reads
\BE
\rho_S(t) = \tr_E \rhoSE(t) = \sum_k p(m_k) \rho_S^{(k)}(t).
%\nonumber\\
%&\quad\underset{N\to\infty}{\longrightarrow}\quad \int_{-\infty}^{\infty}dm\, p(m)\, \rho_S (m,t) .
\label{eq.systemstate}
\EE

Crucially, within the model each of the conditional states $\rho_S^{(k)}$  in \eqnref{eq.systemstate} evolves independently. In order to show this, we substitute the system-environment state \eref{eq.globalstate} and the microscopic Hamiltonians into the global von Neumann equation \eref{eq:ODE_S+E} to obtain
\begin{align}
 \dot{\rho}_{SE}(t)  &=   -\ii [H_S + H_I , \rhoSE(t)] \nonumber\\
& = -\ii [H_S\otimes\1 + A\otimes\op m , \rhoSE(t)] \nonumber\\
& = -\ii [H_S\otimes\1 + \sum_k m_k A\otimes\Pi_k, \sum_l q_l \rho_S^{(l)}(t) \otimes \Pi_l ] \nonumber\\
& = -\ii \sum_k q_k [H_S + m_k A , \rho_S^{(k)}(t) ] \otimes\Pi_k.
\end{align}

As no coupling between different magnetisation subspaces (labelled by $k$) is present, after rewriting the l.h.s.~above using \eqnref{eq.globalstate}, one obtains a set of uncoupled differential equations for each conditional state:
\BE
\dot{\rho}_S^{(k)}(t) = -\ii [H_S + m_k A , \rho_S^{(k)}(t) ],
\label{eq:vN_conditional}
\EE
with $\rho_S^{(k)}(t)=\rho_S(0)$ for each $k$. 

Hence, every $\rho_S^{(k)}$ evolves unitarily within our model with $U_S^{(k)}(t):=\exp[-\ii (H_S+m_kA)t]$, while the overall evolution of the qubit \eref{eq.systemstate} is given by the dynamical map, $\Lambda_t$, corresponding to a mixture of such (conditional) unitary transformations distributed according to the initial magnetisation distribution of the magnet, $p(m_k)$, i.e.:
\BE
\rho_S(t) = \Lambda_t[\rho_S(0)]=\sum_k p(m_k)\;U_S^{(k)}(t)\,\rho_S(0)\, U_S^{(k)\dagger}(t).
\label{eq:spin_reduced_state}
\EE
Furthermore, as $\Lambda_t$ constitutes a mixture of unitaries in the model, it must be unital, i.e., for all $t\ge0$:~$\Lambda_t[\1]=\1$.

%%%%%%%%%%%%%%%%%%%%%%%%%%%%%%%%%%%%%%%%%%%%%%%%%%%%%%%%%%%%%
\subsubsection{Bloch ball representation}
We rewrite the above qubit dynamics employing the Bloch ball representation \cite{Bengtsson2006}, i.e., $\rho \equiv \frac{1}{2}(\1 + \rV\cdot\sV)$ with the Bloch vector $\rV$ unambiguously specifying a qubit state $\rho$. Then, \eqnsref{eq:vN_conditional}{eq:spin_reduced_state} read, respectively:
\BE
\label{eq.commeqn}
\dot{\rV}^{(k)}(t)\cdot\sV = -\ii [H_S + m_k A \, , \, \rV^{(k)}(t)\cdot\sV ]
\EE
and
\BE
\rV(t) = \mat{D}_t\,\rV(0) = \left[\sum_k p(m_k)\, \mat{R}^{(k)}(t)\right] \rV(0).
\label{eq:bloch_vector}
\EE
The rotation matrices above, $\mat{R}^{(k)}$, constitute the SO(3) representations of the unitaries $U_S^{(k)}\in \text{SU(2)}$ in \eqnref{eq:spin_reduced_state}, and are thus similarly mixed according to $p(m_k)$. 

The qubit dynamical map, $\Lambda_t$ of \eqnref{eq:spin_reduced_state}, is represented by an affine transformation of the Bloch vector:
%\footnote{In fact, as for the magnet model the dynamical map, $\Lambda_t$ in \eqnref{eq:spin_reduced_state}, is a unital, i.e., $\forall_{t\ge0}:\;\Lambda_t[\1]=\1$, $\mat{D}_t$ corresponds to a \emph{linear} transformation.}:
\BE
\mat{D}_t := \sum_k p(m_k)\, \mat{R}^{(k)}(t)
\;\underset{N\to\infty}{\longrightarrow}\;
\int_{-\infty}^{\infty}\!\!\! dm \; p(m)\, \mat{R}(m,t),
\label{eq:affine_map}
\EE
which is linear due to $\Lambda_t$ being unital within the magnet model, i.e., does not contain a translation.

Now, as the spaces of physical $\Lambda_t$ (dynamical maps) and $\mat{D}_t$  (affine transformations) are isomorphic \cite{Bengtsson2006}, the dynamical generators of the former $\LSP_t:=\dot{\Lambda}_t \circ\Lambda_t^{-1}$ (see \appref{app:dyn_gen_descr}) directly translate onto $\LSPb_t:=\dot{\mat{D}}_t \,\mat{D}_t^{-1}$ of the latter, with the map composition and inversion replaced by matrix multiplication and inversion, respectively. Moreover, as the vector spaces containing families of generators defined in this manner must also be isomorphic, all the notions described in \secref{sub:gen_add}---in particular, rescalability and additivity---naturally carry over.

However, in order to define the Bloch ball representation of the environment-induced generator $\Lenv_t$ in the QME \eref{eq:QME}, we must correctly relate it to the interaction and Schr\"{o}dinger pictures of the dynamics, summarised in \appref{app:SP_and_IP}.  In general (see \appref{app:QME_SP} for the derivation from the dynamical maps perspective), the Bloch ball representation of $\Lenv_t$ reads
\BE
\Lenvb_t := \LSPb_t - \dot{\mat{R}}_S(t)\,\mat{R}_S(t)^{-1}
=
\mat{R}_S(t)\,  \LIPb_t \,\mat{R}_S(t)^{-1},
\label{eq:Lb_env_ind}
\EE
where $\mat{R}_S(t)\in\text{SO(3)}$ is the rotation matrix of the Bloch vector that represents the qubit unitary map, $U_S(t):=\exp[-\ii H_S t]\in\text{SU(2)}$, induced by the system free Hamiltonian $H_S$. 

As stated in \eqnref{eq:Lb_env_ind}, $\Lenvb_t$  may be equivalently specified with help of $\LIPb_t:=\dot{\bar{\mat{D}}}_t \,\bar{\mat{D}}^{-1}_t$, i.e, the Bloch ball representation of the dynamical generator defined in the interaction picture, $\LIP_t:=\dot{\bar{\Lambda}}_{t}\circ\bar{\Lambda}_{t}^{-1}$---see also \appref{app:QME_TL} for its formal microscopic definition. Importantly, $\LIPb_t$ may be directly computed for a given $\mat{D}_t$ of \eqnref{eq:affine_map} by first transforming it to the interaction picture, i.e., determining $\bar{\mat{D}}_t:=\mat{R}_S^{-1}(t)\,\mat{D}_t$ that is the Bloch-equivalent of $\bar{\Lambda}_t[\bullet]:=U_S^\dagger(t)\,\Lambda_{t}\!\left[\bullet\right] U_S(t)$ discussed in \appref{app:SP_and_IP}. 

On the other hand, as $\Lenvb_t$ and $\LIPb_t$ are linearly related via \eqnref{eq:Lb_env_ind}, their vector spaces must be isomorphic. Hence, in what follows, we may  equivalently stick to the interaction picture and consider $\LIPb_t$ instead, in particular, $\LIPb[(1)]_t+\LIPb[(2)]_t\;\Leftrightarrow\;\Lenvb[(1)]_t+\Lenvb[(2)]_t$, when verifying the validity of generator addition.

%%%%%%%%%%%%%%%%%%%%%%%%%%%%%%%%%%%%%%%%%%%%%%%%%%%%%%%%%%%%%%%
\subsubsection{Example: magnet-induced dephasing}
\label{sec:magnet_dephasing}
To illustrate the model, we first consider the case when it is employed to provide a simple microscopic derivation of the qubit \emph{dephasing dynamics}. We take:
\BE
\label{eq.magmodelonebath}
H_{E}=0 ,\quad H_S=0 , \quad H_{I} = \frac{1}{2} g \,\sz \otimes \op m,
\EE
with the system Hamiltonian being absent, so that all the generators, $\LSPb_t=\LIPb_t=\Lenvb_t$ in \eqnref{eq:Lb_env_ind}, become equivalent. From \eqnref{eq.commeqn} we get
\BE
\dot{\rV}^{(k)}(t)\cdot\sV = -\frac{\ii}{2} g m_{k} [ \sz \, , \, \rV^{(k)}(t)\cdot\sV ] ,
\EE
which just yields rotations of the Bloch ball around the $z$ axis with angular speed depending on the magnetisation $m_k$, i.e., \eqnref{eq:affine_map} with (in Cartesian coordinates)
\BE
\label{eq.singlebathsol}
\mat{R}(m,t) = 
\begin{pmatrix}
 \cos \left(g m t\right) & -\sin \left(g m t\right) & 0 \\
 \sin \left(g m t\right) & \cos \left(g m t\right) & 0 \\
 0 & 0 & 1 \\
\end{pmatrix}
\EE
for the $N\to\infty$ limit.

Integrating \eqnref{eq.singlebathsol} over the initial magnetisation distribution $p(m)$, we obtain the affine transformation \eref{eq:affine_map} to asymptotically read
\BE
\mat{D}_t =
\begin{pmatrix}
 \ee^{- f(t)} & 0 & 0 \\
 0 & \ee^{- f(t)}  & 0 \\
 0 & 0 & 1 \\
\end{pmatrix},
\EE
with $f(t)=\frac{1}{2} \sigma^2 g^2 t^2$ and $f(t)=\gamma g t$ in case of the Gaussian \eref{eq.gaussianmagnet} and Lorentzian \eref{eq.lorentzianmagnet} distributions $p(m)$, respectively. Hence, the corresponding generators \eref{eq:Lb_env_ind} in Bloch ball representation take a simple form:
\BE
\LSPb_t =
\begin{pmatrix}
 - \gamma(t) & 0 & 0 \\
 0 &  - \gamma(t) & 0 \\
 0 & 0 & 0 \\
\end{pmatrix},
\EE
which corresponds to the standard \emph{dephasing generator} with a time-dependent rate (as defined in \eqnref{eq:dyn_gens_dephasing} of \appref{app:dyn_gens_qubit_dyns}), i.e.:
\BE
\LSP_t[\bullet] = \gamma(t)\,(\sz \bullet \sz - \bullet)
\EE
with $\gamma(t)=\sigma^2 g^2 t$ in the Gaussian case, and constant (semigroup) $\gamma(t)=\lambda g$ in the Lorentzian case.

%%%%%%%%%%%%%%%%%%%%%%%%%%%%%%%%%%%%%%%%%%%%%%%%%%%%%%%%%%%%%%%%%%%%%%%%%%%%%%%%%%%%%%
\subsection{Counterexamples to sufficiency of the commutativity assumptions}
\label{subsec:Multiple}
We now prove the regions marked `No' in \figref{fig:venn}. In particular, we provide explicit counterexamples which assure that the commutativity assumption---associated with the particular region of the Venn diagram---is generally \emph{not} sufficient for the system dynamics to be recoverable by simple addition of the generators attributed to each of the environments. In order to do so, it is enough to consider the scenario in which the qubit is independently coupled to just two magnets via the mechanism described above.

%%%%%%%%%%%%%%%%%%%%%%%%%%%%%%%%%%%%%%%%%%%%%%%%%%%%%%%%%%%
\subsubsection{$\text{IS}\cap\text{IE}$ commutativity assumption}
\label{app:counter_ISandIE}
We start with an example of dynamics, in which the interaction Hamiltonians (trivially) commute with the free system and all the environmental Hamiltonians, but not among each other. In particular, we simply set:
\BEA
H_S & =& H_{E_1}=H_{E_2}=0, \label{eq.magmodelonlyint} \\
H_{I_1} &=& \frac{1}{2} g_1 \,\sz \otimes \op m_1 , 
\quad
H_{I_2} = \frac{1}{2} g_2\, \sx \otimes \op m_2 \nonumber
\EEA
with subscripts $\{1,2\}$ labelling to the first and the second magnet. As in the case of the dephasing-noise derivation above, the system Hamiltonian is absent, so the generators in \eqnref{eq:Lb_env_ind} coincide with $\LSPb_t=\LIPb_t=\Lenvb_t$.

In the case of simultaneous coupling to two magnets, \eqnref{eq.globalstate} naturally generalises to
\BE
\rho_{SE_1E_2}(t) = \sum_{k,k'} q_{k,k'} \,\rho_S^{(k,k')}(t) \otimes \Pi_k \otimes \Pi_{k'} ,
\EE
where $q_{k,k'}$ now represents the joint probability of finding the first and the second magnet in magnetisations $m_k$ and $m_{k'}$, respectively, 
while $\rho_S^{(k,k')}(t)$ stands for the corresponding conditional reduced state of the system.

Consequently, the (conditional) von Neumann equation \eref{eq.commeqn}, which now must be derived for $H_I=H_{I_1}+H_{I_2}$, describes the dynamics of Bloch-vectors that represent each conditional state, $\rho_S^{(k,k')}(t)$, being also parametrised by the two indices $k$ and $k'$, i.e.:
\BE
\dot{\rV}^{(k,k')}(t)\cdot\sV = -\frac{\ii}{2}[g_1 m_{1,k} \sz + g_2 m_{2,k'} \sx \, , \, \rV^{(k,k')}(t)\cdot\sV ].
\label{eq:Bloch_dyn_case1}
\EE
\eqnref{eq:Bloch_dyn_case1} leads to coupled equations in the Cartesian basis, i.e. (dropping the indices $k,k'$ and the explicit time-dependence for simplicity):
\begin{subequations}
\label{firsteqsmotion}
\begin{align}
\dot{r}_x & = - g_1 m_1 r_y , \\
\dot{r}_y & = g_1 m_1 r_x - g_2 m_2 r_z , \\
\dot{r}_z & = g_2 m_2 r_y,
\end{align}
\end{subequations}
which can be analytically solved to obtain the $\mat{R}$-matrix in \eqnref{eq:affine_map}---labelled $\mat{R}_{12}$ to indicate that both magnets are involved. $\mat{R}_{12}(m_1,m_2,t)$ possesses now two magnetisation parameters associated with each of the magnets, and we state its explicit form in \appref{app:IS_IE}.

Furthermore, we can straightforwardly obtain the solution of the equations of motion when only one of the magnets is present by simply setting either $g_1=0$ or $g_2=0$ in $\mat{R}_{12}$. In presence of only the first magnet ($g_2=0$), we recover the magnet-induced dephasing noise described above---with $\mat{R}_{12}(m_1,m_2,t)$ simplifying to $\mat{R}_1(m_1,t)$ that takes exactly the form \eref{eq.singlebathsol}. On the other hand, when only the second magnet ($g_1=0$) is present, which couples to the system via $\sx$ rather than $\sz$, see \eqnref{eq.magmodelonlyint}, we obtain $\mat{R}_2(m_2,t)$ as in \eqnref{eq.singlebathsol} but with coordinates cyclically exchanged---see \appref{app:IS_IE} for explicit expressions.

We then average each $\mat{R}_\msf{x}$, where $\msf{x}=\{12,1,2\}$ denotes the magnet(s) being present, over the initial magnetisations in the $N\!\to\!\infty$ limit. This way, we obtain the affine maps \eref{eq:affine_map} representing the corresponding qubit dynamics for all the three cases in the asymptotic $N$ limit as:
\BE
\mat{D}^{(\msf{x})}_t = \int_{-\infty}^{\infty} \!\!\!\dd m_\msf{x} \;
p(m_\msf{x}) \; \mat{R}_\msf{x}(m_\msf{x},t),
\label{eq.numint}
\EE
where in presence of both magnets $m_{12}\equiv(m_1,m_2)$ and $p(m_{12})\equiv p(m_1)\,p(m_2)$;~and we take each $p(m_i)$ to follow a Gaussian distribution \eref{eq.gaussianmagnet} with variance $\sigma_i$. Similarly, we obtain the integral expressions for the time-derivatives of the affine maps, $\dot{\mat{D}}_t^{(\msf{x})}=\int \!\dd m_\msf{x}\, p(m_\msf{x})\, \dot{\mat{R}}_\msf{x}(m_\msf{x},t)$ after also computing analytically all the corresponding $\dot{\mat{R}}_{\msf{x}}$.

Finally, we choose particular values of $g_1$, $g_2$, $\sigma_1$, $\sigma_2$ and time $t$, in order to numerically compute the integrals over the magnetisation parameters and obtain all $\mat{D}_t^{(\msf{x})}$ and $\dot{\mat{D}}_t^{(\msf{x})}$. Our choice, allows us then to explicitly construct dynamical generators $\LSPb[(\msf{x})]_t = \dot{\mat{D}}_t^{(\msf{x})}(\mat{D}_t^{(\msf{x})})^{-1}$, which importantly exhibit $\LSPb[(12)]_t \neq \LSPb[(1)]_t + \LSPb[(2)]_t$, see \appref{app:IS_IE}.

Hence, we conclude that the commutativity of the interaction Hamiltonians with both system and environment Hamiltonians, but not with each other, \emph{cannot} assure the generators to simply add at the level of the QME---as denoted by the `No' label in the region of \figref{fig:venn} representing the $\text{IS}\cap\text{IE}$ commutativity assumption.

%%%%%%%%%%%%%%%%%%%%%%%%%%%%%%%%%%%%%%%%%%%%%%%%%%%%%%%%%%%%%%%%
\subsubsection{$\text{II}\cap\text{IE}$ commutativity assumption}
\label{app:counter_IIandIE}
In order to construct an example of reduced dynamics in which, at the microscopic level, the interaction Hamiltonians commute with each other and all free environmental Hamiltonians but not the system Hamiltonian, we consider again a two-magnet model but this time set:
\BEA
H_S &=& \frac{1}{2}\omega \sx, \quad H_{E_1}=H_{E_2}=0, \label{eq.magmodel2} \\
H_{I_1} &=& \frac{1}{2} g_1 \,\sz \otimes \op m_1 , 
\quad
H_{I_2} = \frac{1}{2} g_2\, \sz \otimes \op m_2 . \nonumber
\EEA
In contrast to the previous example, since $H_S\neq0$, in order to investigate the validity of generator addition we must either consider the environment-induced generators $\Lenvb_t$ defined in \eqnref{eq:Lb_env_ind} or the generators $\LIPb_t$ directly computed in the interaction picture.

We do the latter and compute both $H_{I_i}$ in the interaction picture, i.e., $\bar{H}_{I_i}(t)$, which are obtained by replacing $\sz$ Pauli operators in \eqnref{eq.magmodel2} with
\BE
\bsz(t):=\ee^{\ii H_S t} \sz \ee^{-\ii H_S t} = \cos(\omega t) \sz  + \sin(\omega t) \sy.
\EE
Importantly, $\bsz(t)$ should be interpreted as a (time-dependent) operator $A$ in the general expression \eref{eq:HI_magnet} for $H_I$ that, in contrast to the previous case, is now identical for both magnets. Thus, inspecting the general expression for the dynamics \eref{eq.commeqn}, we obtain the equation of motion for the Bloch vector in the interaction picture, $\bar{\rV}^{(k,k')}(t):=\mat{R}_S^{-1}(t)\,\rV^{(k,k')}(t)$, that represents the qubit state conditioned on the first and second magnet possessing magnetisations $m_k$ and $m_{k'}$, respectively, as
\begin{align}
\dot{\bar{\rV}}^{(k,k')}(t)\cdot\sV =& -\frac{\ii}{2} (g_1 m_{1,k} + g_2 m_{2,k'})\times
\label{eq:Bloch_dyn_case2}\\
&\quad\times [\cos(\omega t) \sz  + \sin(\omega t) \sy \, , \, \bar{\rV}^{(k,k')}(t)\cdot\sV ] , 
\nonumber
\end{align}
which leads to coupled equations (again, dropping the indices $k,k'$ and the explicit time-dependence):
\begin{subequations}
\label{secondeqsmotion}
\begin{align}
\dot{\bar{r}}_x & = (g_1 m_1 + g_2 m_2) (\sin(\omega t) \bar{r}_z - \cos(\omega t) \bar{r}_y) , \\
\dot{\bar{r}}_y & = (g_1 m_1 + g_2 m_2) \cos(\omega t) \bar{r}_x  , \\
\dot{\bar{r}}_z & = - (g_1 m_1 + g_2 m_2) \sin(\omega t) \bar{r}_x.
\end{align} 
\end{subequations}

As before, see \appref{app:II_IE}, we solve the above equations of motion in order to obtain the $\bar{\mat{R}}$-matrix of \eqnref{eq:affine_map} in the interaction picture, i.e., $\bar{\mat{R}}_{12}(m_1,m_2,t)$. Again, by setting either $g_2=0$ or $g_1=0$, we obtain expressions for $\bar{\mat{R}}_1$ and $\bar{\mat{R}}_2$, respectively, corresponding to the cases when only first or second magnet is present. We then also compute all $\dot{\bar{\mat{R}}}_\msf{x}$ with $\msf{x}=\{12,1,2\}$, in order to arrive at integral expressions for both the affine maps and their time-derivatives, i.e., $\bar{\mat{D}}_t^{(\msf{x})}$ and $\dot{\bar{\mat{D}}}_t^{(\msf{x})}$, respectively, computed now in the interaction picture.

As in the previous example, we take initial magnetisation distributions of both magnets to be Gaussian and fix all the model parameters (i.e., $\omega$ and $g_1$, $g_2$  for the system and interaction, $\sigma_1$, $\sigma_2$ for the magnets, as well as the time $t$) in order to numerically perform the integration over magnetisations $m_\msf{x}$. We then find, see \appref{app:II_IE}, a choice of parameters for which it is clear that the dynamical generators, $\LIPb[(\msf{x})]_t = \dot{\bar{\mat{D}}}_t^{(\msf{x})}(\bar{\mat{D}}_t^{(\msf{x})})^{-1}$, fulfil $\LIPb[(12)]_t \neq \LIPb[(1)]_t + \LIPb[(2)]_t$.

Hence, we similarly conclude that the commutativity of all the interaction Hamiltonians with each other, and all the free Hamiltonians of environments also \emph{cannot} assure the generators to simply add at the level of the QME---proving the `No' label in the region of \figref{fig:venn} representing the $\text{II}\cap\text{IE}$ commutativity assumption.

%%%%%%%%%%%%%%%%%%%%%%%%%%%%%%%%%%%%%%%%%%%%%%%%%%%%%%%%%%%%%%%%%%%%%%%%%%%%%%%%%%%%%%
%%%%%%%%%%%%%%%%%%%%%%%%%%%%%%%%%%%%%%%%%%%%%%%%%%%%%%%%%%%%%%%%%%%%%%%%%%%%%%%%%%%%%%
\section{Conclusions}
\label{sec:conclusion}
We have investigated under what circumstances modifications to open system dynamics can be effectively dealt with at the master equation level by adding dynamical generators. We have identified a condition---semigroup simulability and commutativity preservation---applicable beyond Markovian (CP-divisible) dynamics which guarantees generator addition to yield physical evolutions. We have also demonstrated by considering simple qubit generators that even mild violation of this condition may yield unphysical dynamics under generator addition.

Moreover, even when physically valid, generator addition does not generally correspond to the real evolution derived from a microscopic model describing interactions with multiple environments. We have formulated a general criterion under which the addition of generators associated with each individual environment yields the correct dynamics. 

We have then shown that this condition is generally satisfied in the weak-coupling regime, whenever it is correct to use a master equation derived assuming a tensor-product ansatz for the global state describing the system and environments. Finally, we have demonstrated that, at the microscopic level, the commutativity of interaction Hamiltonians among each other and with the system Hamiltonian also ensures addition of dynamical generators to give the correct dynamics. 

We believe that our results may prove useful in areas where the master equation description of open quantum systems is a common workhorse, including quantum metrology, thermodynamics, transport, and engineered dissipation.

%%%%%%%%%%%%%%%%%%%%%%%%%%%%%%%%%%%%%%%%%%%%%%%%%%%%%%%%%%%%%%%%%%%%%%%%%%%%%%%%%%%%%%
\begin{acknowledgments}
We would like to thank L.~Aolita and N.~Bernades for interesting discussions on non-Markovianity that sparked this work, as well as A.~Smirne, M.~Lostaglio, S.~Huelga, L.~Correa and A.~S.~S{\o}rensen for helpful exchanges. J.K.~and B.B.~acknowledge support from the Spanish MINECO (Grant QIBEQI FIS2016-80773-P and Severo Ochoa SEV-2015-0522), Fundaci\'{o} Privada Cellex, Generalitat de Catalunya (SGR875 and CERCA Program). J.K.~is also supported by the EU Horizon 2020 programme under the MSCA Fellowship Q-METAPP (no. 655161), while B.B.~by the ICFO-MPQ Fellowship. M.P.-L.~acknowledges also support from the Alexander von Humboldt Foundation.
\end{acknowledgments}

%%%%%%%%%%%%%%%%%%%%%%%%%%%%%%%%%%%%%%%%%%%%%%%%%%%%%%%%%%%%%%%%%%%%%%%%%%%%%%%%%%%%%%
%Bibliography
%\bibliographystyle{apsrev4-1}
\bibliography{sum_gens}
%%%%%%%%%%%%%%%%%%%%%%%%%%%%%%%%%%%%%%%%%%%%%%%%%%%%%%%%%%%%%%%%%%%%%%%%%%%%%%%%%%%%%%

%%%%%%%%%%%%%%%%%%%%%%%%%%%%%%%%%%%%%%%%%%%%%%%%%%%%%%%%%%%%%%%%%%%%%%%%%%%%%%%%%%%%%%
%%%%%%%%%%%%%%%%%%%%%%%%%%%%%%%%%%%%%%%%%%%%%%%%%%%%%%%%%%%%%%%%%%%%%%%%%%%%%%%%%%%%%%
% APPENDICES
%\onecolumngrid
%\section*{Appendices}
\appendix
%%%%%%%%%%%%%%%%%%%%%%%%%%%%%%%%%%%%%%%%%%%%%%%%%%%%%%%%%%%%%%%%%%%%%%%%%%%%%%%%%%%%%%
%%%%%%%%%%%%%%%%%%%%%%%%%%%%%%%%%%%%%%%%%%%%%%%%%%%%%%%%%%%%%%%%%%%%%%%%%%%%%%%%%%%%%%

%%%%%%%%%%%%%%%%%%%%%%%%%%%%%%%%%%%%%%%%%%%%%%%%%%%%%%%%%%%%%%%%%%%%%%%%%%%%%%%%%%%%%%
%%%%%%%%%%%%%%%%%%%%%%%%%%%%%%%%%%%%%%%%%%%%%%%%%%%%%%%%%%%%%%%%%%%%%%%%%%%%%%%%%%%%%%
\section{QMEs as families of dynamical generators}
\label{app:QMEs_dyn_gen_fams}

%%%%%%%%%%%%%%%%%%%%%%%%%%%%%%%%%%%%%%%%%%%%%%%%%%%%%%%%%%%%%%%%%%%%%%%%%%%%%%%%%%%%%%
\subsection{Describing open system dynamics}
\label{app:dyn_gen_descr}

%%%%%%%%%%%%%%%%%%%%%%%%%%%%%%%%%%%%%%%%%%%%%%%%%%%%%%%%%%%%%%%%
\subsubsection{Physically valid quantum dynamics}
A particular evolution of an open quantum system is formally represented by a continuous family of density matrices $\left\{ \rho_{t}\in\cB(\mathcal{H}_{d}) \right\} _{t\ge0}$ that describe the system state at each time $t\ge0$. The system evolution is then defined by a \emph{family of dynamical maps} (quantum channels \cite{Bengtsson2006}), $\left\{ \Lambda_{t}\right\} _{t\ge0}$ with $\Lambda_{0}=\cI$ being the identity map, such that for any initial system state, $\rho_{0}$, the state at time $t\ge0$ is given by
\BE
\rho_{t}=\Lambda_{t}\!\left[\rho_{0}\right].
\label{eq:dynamics}
\EE
Importantly, for a given dynamics to be \emph{physical} the family $\left\{ \Lambda_{t}\right\} _{t\ge0}$ must consist of \emph{completely-positive trace preserving} (CPTP) maps. Only then, for any given enlarged initial state $\varrho(0)\in\cB(\cH_d\ot\cH_d')$ with arbitrary $d'=\dim\cH_d'$, the state at every time $t\ge0$, i.e., $\varrho(t)=\Lambda_t\ot\cI[\varrho(0)]$, is guaranteed to be correctly described by a positive semidefinite matrix.

In practice, any linear map $\Lambda:\,\mathcal{B}(\mathcal{H}_{d})\to\mathcal{B}(\mathcal{H}_{d'})$ may be verified to be CPTP by constructing its corresponding \emph{Choi-Jamio\l kowski} (CJ) \emph{matrix} $\Omega_{\Lambda}\in\mathcal{B}(\mathcal{H}_{d'}\otimes\mathcal{H}_{d})$ defined as \citep{Jamiolkowski1972,Choi1975}:
\BE
\Omega_{\Lambda}:=\Lambda\ot\cI\left[\left|\psi\right\rangle \!\left\langle \psi\right|\right],
\label{eq:CJ_matrix}
\EE
with $\left|\psi\right\rangle =\sum_{i=1}^{d}\left|i\right\rangle \!\left|i\right\rangle$ and $\left\{ \left|i\right\rangle \right\} _{i=1}^{d}$ being some orthonormal basis spanning $\mathcal{H}_{d}$. In particular, a map $\Lambda$ is CP and TP iff its CJ matrix is positive semi-definite, i.e., $\Omega_{\Lambda}\ge0$, and satisfies $\mathrm{Tr}_{\mathcal{H}_{d'}}\!\left\{ \Omega_{\Lambda}\right\} =\openone_{d}$, respectively.

%%%%%%%%%%%%%%%%%%%%%%%%%%%%%%%%%%%%%%%%%%%%%%%%%%%%%%%%%%%%%%%%%
\subsubsection{Dynamical generators}
One may associate with any given dynamics \eref{eq:dynamics} the \emph{family of dynamical generators}, $\left\{ \cL_t\right\} _{t\ge0}$, that specify for each $t\ge0$ the time-local QME stated in \eqnref{eq:QME_dyn_gen} of the main text \citep{Andersson2007,Chruscinski2014,Hall2014}, i.e.:
\BE
\dot{\rho}_{t}=\cL_t\!\left[\rho_{t}\right]
\quad\Big(\Longleftrightarrow\;
\dot{\Lambda}_{t}=\cL_t\circ\Lambda_{t}\Big)
\label{eq:master_equation}
\EE
with $\dot{\bullet}\equiv\frac{{\dd}}{{\dd}t}\bullet$ and the \emph{dynamical generator} being then formally defined at each $t\ge0$ as
\BE
\cL_t:=\dot{\Lambda}_{t}\circ\Lambda_{t}^{-1},
\label{eq:dyn_gen_def}
\EE
where $\Lambda_{t}^{-1}$ is the inverse ($\Lambda_{t}^{-1}\circ\Lambda_{t}=\cI$) of the dynamical map at time $t$, and is not necessarily CPTP. However, $\Lambda_{t}^{-1}$ in general may cease to exist at certain time-instances, at which the corresponding generators $\cL_{t}$ then become singular, even though the family of maps is perfectly smooth. Nevertheless, under particular conditions \citep{Andersson2007}, the resulting QME \eref{eq:master_equation} can still be integrated and yield correctly the original dynamics \eref{eq:dynamics}.

In the other direction, given a family of dynamical generators $\left\{ \cL_t\right\} _{t\ge0}$, one may write the corresponding dynamical map at any time $t$ with help of a time-ordered exponential, expressible in the Dyson-series
form \citep{Andersson2007,Chruscinski2014}:
\BE
\Lambda_{t} = \mathcal{T}_{\leftarrow}\exp\!\left\{ \int_{0}^{t}\cL_{\tau}{\dd}\tau\right\} =\sum_{i=0}^{\infty}\mathcal{S}_{t}^{(i)}\left[\cL_{\bullet}\right],
\label{eq:map_via_dyn_gens}
\EE
where $\mathcal{S}_{t}^{(0)}\left[\bullet\right]=\mathcal{I}$ and
for all $i\ge1$:
\AL{
\mathcal{S}_{t}^{(i)}\left[\cL_{\bullet}\right]
&:=
\frac{1}{i!}\int_{0}^{t}\!\!\dd t_{1}\int_{0}^{t}\!\!{\dd}t_{2}\,\dots\!\int_{0}^{t}\!\!{\dd}t_{i}\;
\mathcal{T}_{\leftarrow}\,\cL_{t_{1}}\!\circ\!\cL_{t_{2}}\!\circ\!\dots\!\circ\!\cL_{t_{i}} \nonumber\\
&=
\int_{0}^{t}\!\!{\dd}t_{1}\int_{0}^{t_{1}}\!\!{\dd}t_{2}\,\dots\!\int_{0}^{t_{i-1}}\!\!\!\!{\dd}t_{i}\;
\cL_{t_{1}}\!\circ\!\cL_{t_{2}}\!\circ\!\dots\!\circ\!\cL_{t_{i}}.
\label{eq:dupa_blada}
}

%%%%%%%%%%%%%%%%%%%%%%%%%%%%%%%%%%%%%%%%%%%%%%%%%%%%%%%%%%%%%%%
\subsubsection{Instantaneous generators}
\label{app:dyn_gens_fams_inst}
Given a family of dynamical maps $\left\{ \Lambda_{t}\right\} _{t\ge0}$ defining the evolution \eref{eq:dynamics}, one can also construct the corresponding \emph{family of instantaneous generators} $\left\{ \cX_t \right\} _{t\ge0}$, defined for each $t\ge0$ as \citep{Chruscinski2014}:
\BE
\cX_t:=\frac{\dd}{\dd t}\log\Lambda_{t}=\dot{\mathcal{Z}}_{t}
\quad\text{with}\quad
\cZ_t:=\log\Lambda_{t}
\label{eq:inst_gens_exps}
\EE
being the so-called \emph{instantaneous exponent}, such that: 
\BE
\Lambda_{t}=\exp\cZ_t=\exp\!\left\{ \int_{0}^{t}{\dd}\tau\mathcal{X}_{\tau}\right\},
\label{eq:map_via_inst_gens}
\EE
where the above expression, in contrast to \eqnref{eq:map_via_dyn_gens}, does not involve the time-ordering operator $\mathcal{T}_{\leftarrow}$.

Although the instantaneous generators $\cX_t$ cannot be directly used to construct the QME \eref{eq:master_equation}, the family of dynamical generators $\{\cL_t\}_{t\ge0}$ can be formally related to the family of instantaneous ones. In particular, by substituting into \eqnref{eq:dyn_gen_def}
\BE
\dot{\Lambda}_{t}=\frac{{\dd}}{{\dd}t}\exp\cZ_t=\int_{0}^{1}\!{\dd}s\;\;\ee^{s\cZ_t}\circ\cX_t\circ\ee^{\left(1-s\right)\cZ_t}
\EE
and $\Lambda_{t}^{-1}=\ee^{-\cZ_t}$, one observes that
\BE
\cL_t=\int_{0}^{1}\!{\dd}s\;\;\ee^{s\cZ_t}\circ\cX_t\circ\ee^{-s\cZ_t}.
\label{eq:back_trans}
\EE

%%%%%%%%%%%%%%%%%%%%%%%%%%%%%%%%%%%%%%%%%%%%%%%%%%%%%%%%%%%%%%%%
\subsubsection{Commutative dynamics}
\label{app:dyn_gens_fams_comm}
A dynamics is defined to be \emph{commutative} if the maps describing the evolution in \eqnref{eq:dynamics}, or equivalently---as follows from \eqnref{eq:map_via_dyn_gens}---all the dynamical generators defining the QME \eref{eq:master_equation} commute between one another, i.e., for all $s,t\ge0$:
\BE
\left[\Lambda_{s},\Lambda_{t}\right]=0
\quad\Longleftrightarrow\quad
\left[\cL_{s},\cL_t\right]=0.
\EE

Moreover, \eqnref{eq:map_via_inst_gens} implies then that also all instantaneous exponents must commute with each another, $\left[\mathcal{Z}_{s},\cZ_t\right]=0$, but also with instantaneous generators, $0=\partial_{s}\left[\mathcal{Z}_{s},\cZ_t\right]=\left[\mathcal{X}_{s},\cZ_t\right]$. Hence, in case of commutative dynamics the dynamical and instantaneous generators must coincide at all times, as
\BE
\cL_t=\int_{0}^{1}\!{\dd}s\;\;\ee^{s\cZ_t}\circ\cX_t\circ\ee^{-s\cZ_t}=\int_{0}^{1}\!{\dd}s\;\cX_t=\cX_t,
\label{eq:inst_dyn_gens_coinc_for_comm_dyn}
\EE
so that \eqnsref{eq:map_via_dyn_gens}{eq:map_via_inst_gens} both, respectively, read
\BE
\Lambda_{t}=\exp\!\left\{ \int_{0}^{t}{\dd}\tau\cL_{\tau}\right\}=\exp\!\left\{ \int_{0}^{t}{\dd}\tau\mathcal{X}_{\tau}\right\}.
\label{eq:map_via_gens_comm}
\EE

%%%%%%%%%%%%%%%%%%%%%%%%%%%%%%%%%%%%%%%%%%%%%%%%%%%%%%%%%%%%%%%%
\subsubsection{CP-divisible dynamics}
\label{app:dyn_gens_fams_CPdiv}
The dynamics is said to be divisible into CPTP maps---\emph{CP-divisible}
(or Markovian \citep{Rivas2014,Breuer2016,deVega2017})---if its corresponding family of maps,
$\left\{ \Lambda_{t}\right\} _{t\ge0}$, satisfies for all $0\le s\le t$:
\BE
\Lambda_{t}=\tilde{\Lambda}_{t,s}\circ\Lambda_{s},\label{eq:CP_div}
\EE
where $\tilde{\Lambda}_{t,s}$ is a CPTP map itself. 

At the level of dynamical generators, this is equivalent to the statement that all $\left\{ \cL_t\right\} _{t\ge0}$
are of the \emph{Gorini-Kosakowski-Sudarshan-Linblad} (GKSL) form \citep{Gorini1976,Lindblad1976}:
\BE
\cL_t\!\left[\bullet\right] =-\ii\left[H_{t},\bullet\right]+\Phi_{t}\!\left[\bullet\right]-\frac{1}{2}\left\{ \Phi_{t}^{\star}\!\left[\openone\right],\bullet\right\} ,
\label{eq:L_GKSL}
\EE
where $H_{t}$ is a time-dependent Hermitian operator, $\Phi_{t}$ is a completely-positive (CP) map, and they, respectively, represent the Hamiltonian, $\cH_t$, and dissipative, $\cD_t$, parts in the QME \eref{eq:QME_dyn_gen} of the main text. $\Phi_{t}^{\star}$ is the dual map of $\Phi_{t}$ that---given a Kraus representation of the CP map $\Phi_{t}$, i.e., a set of operators $\left\{ V_{i}(t)\right\} _{i}$ satisfying $\sum_{i}V_{i}(t)^{\dagger}V_{i}(t)=0$ for all $t\ge0$ such that $\Phi_{t}\!\left[\bullet\right]=\sum_{i}V_{i}(t)\bullet V_{i}(t)^{\dagger}$ \citep{Sudarshan1961}---is defined as $\Phi_{t}^\star\!\left[\bullet\right]:=\sum_{i}V_{i}^\dagger(t)\bullet V_{i}(t)$. 

Thus, one may rewrite \eqnref{eq:L_GKSL} also as \citep{Breuer}:
\BE
\cL_t\!\left[\bullet\right]=-\ii\left[H_{t},\bullet\right]+\sum_{i}V_{i}(t)\bullet V_{i}(t)^{\dagger}-\frac{1}{2}\left\{ V_{i}(t)^{\dagger}V_{i}(t),\bullet\right\},
\EE
which---after fixing a particular orthonormal basis of matrices $\left\{ F_{j}\right\} _{j}$ satisfying $\tr\!\left\{ F_{i}^{\dagger}F_{j}\right\} =\delta_{ij}$ in which each $V_{i}(t)=\sum_{j}\msf{V}_{ij}(t)F_{j}$---can be further rewritten as in \eqnref{eq:gen_mat_basis} of the main text:
\BE
\cL_t\!\left[\bullet\right]=-\ii\left[H_{t},\bullet\right]+\sum_{i,j}\msf{D}_{ij}(t)\left(F_{j}\bullet F_{i}^{\dagger}-\frac{1}{2}\left\{ F_{i}^{\dagger}F_{j},\bullet\right\} \right)
\label{eq:L_GKSL_D}
\EE
with the time dependence of the dissipative part being now fully contained within the matrix $\msf{D}(t)$. 

Although any dynamical generator $\cL_t$, constituting a traceless and Hermiticity-preserving operator, can be decomposed as above, the GKSL form \eref{eq:L_GKSL} ensures that for all $t\ge0$ there exists a matrix $\msf{V}(t)$ such that $\msf{D}(t)=\msf{V}(t)^{\dagger}\msf{V}(t)$. Hence, it follows that any dynamics is CP-divisible iff one may at all times decompose its corresponding dynamical generators according to \eqnref{eq:L_GKSL_D} with some positive semi-definite $\msf{D}(t)\ge0$.

%%%%%%%%%%%%%%%%%%%%%%%%%%%%%%%%%%%%%%%%%%%%%%%%%%%%%%%%%%%%%%%%%
\subsubsection{Semigroup dynamics}
\label{subsec:CP_div_semi}
An important subclass of commutative and CP-divisible dynamics are \emph{semigroups}, for which the whole evolution is determined by a single fixed generator $\cL$,
\BE
\left\{ \cL_t \equiv \cL \right\} _{t\ge0}\quad\implies\quad\left\{ \Lambda_{t}=\exp\!\left[t\,\cL \right] \right\} _{t\ge0},
\label{eq:semigroup}
\EE
which in order to describe physical dynamics (so that all $\Lambda_t$ are CPTP) \emph{must} be of the GKSL form \eref{eq:L_GKSL} with both the Hamiltonian $H$ and the positive semi-definite matrix $\msf{D}\ge0$ in \eqnref{eq:L_GKSL_D} being now time-independent.

%%%%%%%%%%%%%%%%%%%%%%%%%%%%%%%%%%%%%%%%%%%%%%%%%%%%%%%%%%%%%%%%
\subsubsection{Semigroup-simulable (SS) dynamics}
\label{app:dyn_gens_fams_SS}
\begin{definition}
\label{def:sem_sim} 
We define a map $\Lambda_{t}$ to be (instantaneously) semigroup-simulable (SS) at time $t$ if its corresponding instantaneous exponent $\cZ_t=\log\Lambda_{t}$ in \eqnref{eq:inst_gens_exps} is of the GKSL form, i.e.:
\begin{align}
\cZ_t\!\left[\bullet\right] & =-\ii\left[\tilde{H}_{t},\bullet\right]+\tilde{\Phi}_{t}\!\left[\bullet\right]-\frac{1}{2}\left\{ \tilde{\Phi}_{t}^{\star}\!\left[\openone\right],\bullet\right\} ,\label{eq:Z_GKSL}
\end{align}
where, similarly to \eqnref{eq:L_GKSL}, $\tilde{H}_{t}$ and $\tilde{\Phi}_{t}$ are some Hermitian operator and CP map, respectively. If all instantaneous exponents, $\{\cZ_t\}_{t\ge0}$, can be decomposed according to \eqnref{eq:Z_GKSL}, we term the whole dynamics to be SS.
\end{definition}
Importantly, given that $\Lambda_{t}$ is SS at time $t$, a semigroup parametrised by $\tau\ge0$ with physical (of GKSL form) generator $\cL=\cZ_t$ may be defined:
\BE
\left\{ \tilde{\Lambda}(t)_{\tau}\right\} _{\tau\ge0}
\quad\text{with}\quad
\tilde{\Lambda}(t)_{\tau}=\ee^{\cZ_t\tau},
\label{eq:semi_sim}
\EE
so that it coincides with the original map at $\tau=1$, $\tilde{\Lambda}(t)_{\tau=1}=\Lambda_{t}$, or, in other words, ``simulates'' its action at this particular instance of ``fictitious time'' $\tau$.
\begin{obs}
\label{obs:ss_implies_phys}
The SS property provides a sufficient but not necessary condition for physicality of dynamics.
\end{obs}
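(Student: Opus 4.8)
The plan is to establish the two halves of the observation separately: first that semigroup-simulability implies physicality (sufficiency), and then that the converse fails by exhibiting an explicit physical dynamics that is not SS (non-necessity).

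For sufficiency, I would argue directly from \defref{def:sem_sim} together with the fictitious-semigroup construction of \eqnref{eq:semi_sim}. Fix any $t\ge0$ and suppose the dynamics is SS, so that the instantaneous exponent $\cZ_t=\log\Lambda_{t}$ is of the GKSL form \eref{eq:Z_GKSL}. The key step is to invoke the Gorini-Kossakowski-Sudarshan-Lindblad theorem \cite{Gorini1976,Lindblad1976}: any generator of GKSL form generates a one-parameter semigroup of CPTP maps. Applying this to $\cZ_t$ treated as a fixed (frozen) generator, the fictitious semigroup $\tilde{\Lambda}(t)_{\tau}=\ee^{\cZ_t\tau}$ consists of CPTP maps for every $\tau\ge0$. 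Evaluating at the particular instance $\tau=1$ yields $\Lambda_{t}=\ee^{\cZ_t}$, which is therefore CPTP. Since $t$ was arbitrary, the whole family $\left\{\Lambda_{t}\right\}_{t\ge0}$ consists of CPTP maps and the dynamics is physical.

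For the claim that SS is not necessary, I would exhibit a counterexample: a family $\left\{\Lambda_{t}\right\}_{t\ge0}$ that is physical yet admits at some time $t$ an instantaneous exponent $\cZ_t$ that cannot be cast in GKSL form. A convenient candidate is the eternally non-Markovian qubit model of \refcite{Hall2014}, whose maps are CPTP at all times but whose associated $\cZ_t$ violates the GKSL condition $\msf{D}(t)\ge0$ on a set of times of positive measure. The work here is to verify explicitly that (a) the dynamical maps are CPTP---e.g.\ by checking non-negativity of the eigenvalues of the corresponding CJ matrix \eref{eq:CJ_matrix}---and that (b) the dissipative part of $\cZ_t$ acquires a negative eigenvalue at some $t$, so that no GKSL decomposition of $\cZ_t$ exists; I defer this computation to \appref{app:dyn_gens_qubit_dyns}.

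The sufficiency direction is essentially immediate once the GKSL theorem is applied to the frozen generator $\cZ_t$, so I expect the main obstacle to lie entirely in the non-necessity part. One must exhibit a genuinely physical evolution whose instantaneous logarithm is not of GKSL form, and establishing both positivity of the maps and the failure of the GKSL condition for the \emph{same} model cannot be done by a generic argument---it requires an explicit, exactly solvable example such as the one in \appref{app:dyn_gens_qubit_dyns}.
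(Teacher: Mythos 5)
Your proposal is correct and follows essentially the same route as the paper: sufficiency is argued via the fictitious-semigroup interpretation $\Lambda_t=\left.\ee^{\cZ_t\tau}\right|_{\tau=1}$ of the frozen GKSL-form exponent $\cZ_t$, and non-necessity via the eternally non-Markovian model of \refcite{Hall2014}. The paper verifies that counterexample exactly as you outline—physicality through the closed-form random-unitary condition (equivalent to your CJ-eigenvalue check), and failure of SS because $\Gamma_z(t)=\int_0^t\dd\tau\,\gamma_z(\tau)<0$ means the (unique) dissipative part of $\cZ_t$ is not positive semidefinite.
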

If for a dynamical family $\left\{\Lambda_{t}=\ee^{\cZ_t}\right\}_{t\ge0}$ all its instantaneous exponents are of the GKSL form \eref{eq:Z_GKSL}, it must consist of maps which coincide with semigroups at all $t\ge0$ and, hence, all must be CPTP. In the other direction, however, there exist dynamics that are \emph{not} SS but nonetheless physical. Examples may be found by considering instances of, e.g.,random unitary and phase covariant, qubit evolutions, as shown below in \appref{app:qubit_classes}.

In case of \emph{commutative} dynamics, it follows from \eqnref{eq:map_via_gens_comm} that $\cZ_t=\int_{0}^{t}{\dd}\tau\cL_{\tau}$, so that one may explicitly connect the decomposition \eref{eq:L_GKSL_D} of the dynamical generator at time $t$ with the one of the instantaneous exponent in \eqnref{eq:Z_GKSL}, as follows
\AL{
\cZ_t\!\left[\bullet\right]=
&-\ii\left[\int_{0}^{t}\!\dd\tau H_{\tau},\bullet\right] \label{eq:Z_GKSL_int}\\
&+
\sum_{i,j}\int_{0}^{t}\!\dd\tau\,\msf{D}_{ij}(\tau)\left(F_{j}\bullet F_{i}^{\dagger}-\frac{1}{2}\left\{ F_{i}^{\dagger}F_{j},\bullet\right\} \right), \nonumber
}
where $\int_{0}^{t}{\dd}\tau H_{\tau}$ constitutes then $\tilde{H}_{t}$ in \eqnref{eq:Z_GKSL}.

Crucially, the decomposition \eref{eq:Z_GKSL_int} proves \lemref{lem:ss_cond_comm} of the main text, as it becomes clear that the GKSL form of $\cZ_t$ is then fully ensured by the condition 
\BE
\msf{\Gamma}(t):=\int_{0}^{t}\!{\dd}\tau\,\msf{D}(\tau)\ge0,
\label{eq:ss_cond_for_comm}
\EE
stated in \eqnref{eq:ss_cond_comm} of the main text. Hence, given a commutative dynamics for which condition \eref{eq:ss_cond_for_comm} holds, it must also be SS---constitute a \emph{semigroup-simulable and commutative} (SSC) evolution.

In some previous works \citep{Wolf2008,Cubitt2012}, the SS property has been identified as Markovianity of the dynamics. Let us emphasise that such a notion is non-trivially related to the concept of CP-divisibility introduced in \appref{app:dyn_gens_fams_CPdiv}, which is more commonly associated with Markovianity \citep{Rivas2014,Breuer2016,deVega2017}. The CP-divisibility ensures $\msf{D}(t)\ge0$ in \eqnref{eq:L_GKSL_D} at all times, so that (in case of commutative dynamics) the SSC condition \eref{eq:ss_cond_for_comm} is trivially fulfilled. However, as $\msf{D}(t)\ge0$ is a stronger requirement, there must exist (also commutative) evolutions that \emph{are} SS but \emph{not} CP-divisible, e.g., instances of qubit dynamics discussed below in \appref{app:dyn_gens_qubit_dyns}. This fact can also be understood by inspecting \eqnref{eq:back_trans}, from which it is clear that the GKSL form \eref{eq:Z_GKSL} of the instantaneous exponent $\cZ_t$ (and, hence, of $\cX_t=\dot{\mathcal{Z}}_{t}$) does not generally ensure the corresponding dynamical generator $\cL_t$ to also be of GKSL form \eref{eq:L_GKSL}.

%%%%%%%%%%%%%%%%%%%%%%%%%%%%%%%%%%%%%%%%%%%%%%%%%%%%%%%%%%%%%%%%%%%%%%%%%%%%%%%%%%%%%%
\subsection{Rescalability of dynamical generators}
\label{app:dyn_gens_rescal}
\begin{definition}
\label{def:gen_rescal} 
We define a physical family of dynamical generators $\left\{ \cL_t\right\} _{t\ge0}$ to be rescalable if by multiplying all its elements by any non-negative constant, $\alpha\ge0$, one obtains a generator family,
\BE
\{\cL'_t:=\alpha\cL_t\}_{t\ge0},
\label{eq:L_rescaling}
\EE
that also yields physical dynamics. 
\end{definition}

Given a family of dynamical maps $\left\{ \Lambda_{t}\right\} _{t\ge0}$, by rescaling its corresponding dynamic generators $\left\{ \cL_t\right\} _{t\ge0}$, as in \eqnref{eq:L_rescaling}, we obtain a family of maps, $\left\{ \Lambda_{t}^{\prime}\right\} _{t\ge0}$, that according to \eqnref{eq:map_via_dyn_gens} reads
\BE
\Lambda_{t}^{\prime}=\mathcal{T}_{\leftarrow}\exp\!\left\{ \int_{0}^{t}\cL_{\tau}^{\prime}{\dd}\tau\right\} =\sum_{i=0}^{\infty}\mathcal{S}_{t}^{(i)}\!\left[\cL_{\bullet}^{\prime}\right]=\sum_{i=0}^{\infty}\alpha^{\ii}\mathcal{S}_{t}^{(i)}\!\left[\cL_{\bullet}\right].
\label{eq:dyson_alpha}
\EE
Crucially, as the above Dyson series includes now the factor $\alpha\ge0$, it is non-trivial to determine whether the resulting map $\Lambda_{t}^{\prime}$ is CPTP;~even in case of commutative dynamics for which time-ordering, $\mathcal{T}_\leftarrow$, can be dropped.

The rescalability, however, is naturally ensured in case of CP-divisible evolutions (and, hence, semigroups), as the GKSL form \eref{eq:L_GKSL} of any dynamical generator is then trivially carried over to $\cL'_t$ in \eqnref{eq:L_rescaling} for any $\alpha\ge0$. 

On the other hand, any family of dynamical generators yielding SSC dynamics must also be rescalable. As the instantaneous exponents are then related to the dynamical generators via $\cZ_t=\int_{0}^{t}{\dd}\tau\cL_{\tau}$, they transform similarly to \eqnref{eq:L_rescaling} with $\mathcal{Z}'_{t}:=\alpha\cZ_t$. Thus, the condition \eref{eq:ss_cond_for_comm} ensuring their GKSL form \eref{eq:Z_GKSL} is fulfilled for any $\alpha\ge0$.

Nevertheless, non-rescalability of generators also naturally emerges in some particular situations, e.g., when dealing with:

%%%%%%%%%%%%%%%%%%%%%%%%%%%%%%%%
\paragraph{Dynamical generators with singularities,}\label{resc_sing}%
which emerge in case of evolutions whose family of dynamical maps, $\{\Lambda_t\}_{t\ge0}$, contains non-invertible CPTP maps. In this case, the dynamics can be unambiguously recovered from the dynamical generators only for times smaller than $T$, denoting the occurrence of the (first) singularity \cite{Andersson2007}. As a result, even though the dynamics is physical despite $\{\cL_t\}_{t\ge0}$ containing  singular generators, as soon as $\alpha\neq1$ in \eqnref{eq:L_rescaling} the integrability of the corresponding QME \eref{eq:master_equation}---and, hence, the physicality---is lost for times $t\ge T$. We provide an explicit example of such a phenomenon below in \appref{app:counter_gens_res}, where we discuss the Jaynes-Cummings model describing a qubit that undergoes spontaneous emission \cite{Breuer}.

%%%%%%%%%%%%%%%%%%%%%%%%%%%%%%%%
\paragraph{Weak-coupling-based generators,}%
which are approximate and only valid for a particular timescale $T$ ($0\le t\le T$). 
Consider a family $\left\{ \lambda^{2}\cL_t\right\} _{0\le t\le T}$ of generators derived by employing a microscopic model and assuming the system-environment coupling constant, $\lambda$, small enough, so that the weak-coupling approximation to $O(\lambda^{2})$ holds and $\cL^{real}_t \approx \lambda^2\cL_t$ \citep{Gaspard1999} (e.g., by assuming the Redfield form of the QME \citep{Redfield1957}). One may then simply interpret the rescaling factor as the square of the coupling constant, $\alpha=\lambda^{2}$.  Importantly, such a generator family is guaranteed to yield physical dynamics---a family of CPTP maps---only on timescales with $T\ll\lambda^{-2}$ \citep{Dumcke1979}. Hence, by rescaling the generators with large enough $\alpha=\lambda^{2}$ or, in other words, by choosing strong enough coupling, one must at some point invalidate the weak-coupling approximation and, eventually, the physicality.

%%%%%%%%%%%%%%%%%%%%%%%%%%%%%%%%
\paragraph{Commutative but not SS dynamics.}
Although all families of dynamical generators that lead to SSC dynamics must be rescalable, the commutativity property alone is not enough. 
A direct example is provided by the eternally non-Markovian model introduced in \refcite{Hall2014} and discussed below in \appref{app:qubit_classes}. In particular, when rescaling its generators according to \eqnref{eq:L_rescaling}, one obtains dynamics that is \emph{not} physical for any $0\le\alpha<1$ \citep{Benatti2017}.

%%%%%%%%%%%%%%%%%%%%%%%%%%%%%%%%%%%%%%%%%%%%%%%%%%%%%%%%%%%%%%%%%%%%%%%%%%%%%%%%%%%%%%
\subsection{Additivity of dynamical generators}
\label{app:dyn_gens_add}
The \defref{def:gen_additivity} of the main text may be restated in a more detailed form as:
\begin{definition}
\label{def:gen_add}
Two families of physical and rescalable dynamical generators $\left\{ \cL_t^{(1)}\right\}_{t\ge0}$ and $\left\{ \cL_t^{(2)}\right\}_{t\ge0}$ are additive, if all their non-negative linear combinations,
\BE
\cL'_t:=\alpha\cL_t^{(1)}+\beta\cL_t^{(2)}
\label{eq:L'}
\EE
with $\alpha,\beta\ge0$, yield families of dynamical generators, $\left\{ \cL_t^{\prime}\right\}_{t\ge0}$, that are physical. 
\end{definition}

Firstly, we realise that (as for rescalability) all pairs of generator families describing CP-divisible evolutions must be additive, as by adding families of CP-divisible dynamics according to \eqnref{eq:L'} one obtains generators that are also of the GKSL form \eref{eq:L_GKSL}.

On the other hand, by considering generator families describing SSC dynamics, we observe that:
\begin{lem}
Any pair of SSC dynamics with generator families $\left\{ \cL_t^{(1)}\right\}_{t\ge0}$ and $\left\{ \cL_t^{(2)}\right\}_{t\ge0}$ whose addition \eref{eq:L'} yields commutative dynamics $\left\{ \cL_t^{\prime}\right\}_{t\ge0}$ for any $\alpha,\beta\ge0$ must be additive.
\label{lem:comm_ss_gens_are_add} 
\end{lem}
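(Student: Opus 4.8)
The plan is to reduce the claimed physicality of the combined family to the positivity condition \eqnref{eq:ss_cond_for_comm} by exploiting the commutativity that, by hypothesis, survives the addition. First I would fix once and for all the operator basis $\{F_i\}$ entering the canonical decomposition \eqnref{eq:gen_mat_basis}, so that the assignment $\cL_t\mapsto(H(t),\msf{D}(t))$ of a generator to its Hamiltonian operator and dissipator matrix is a fixed \emph{linear} map. Writing $\cL'_t=\alpha\cL^{(1)}_t+\beta\cL^{(2)}_t$ as in \defref{def:gen_add}, linearity then immediately yields $\msf{D}'(t)=\alpha\,\msf{D}^{(1)}(t)+\beta\,\msf{D}^{(2)}(t)$ for the dissipator matrix of the sum (the Hamiltonian parts likewise add, but play no role in what follows).

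Next I would invoke the hypothesis that $\cL'_t$ generates commutative dynamics for the chosen $\alpha,\beta\ge0$. For commutative families the instantaneous exponent coincides with the time-integral of the dynamical generator, $\cZ_t=\int_0^t\!{\dd}\tau\,\cL_\tau$, as recorded in \eqnref{eq:map_via_gens_comm}. Integrating the dissipator identity above then gives
\BE
\int_0^t\!{\dd}\tau\,\msf{D}'(\tau)=\alpha\!\int_0^t\!{\dd}\tau\,\msf{D}^{(1)}(\tau)+\beta\!\int_0^t\!{\dd}\tau\,\msf{D}^{(2)}(\tau).
\EE
Because both $\cL^{(1)}_t$ and $\cL^{(2)}_t$ are SSC, \lemref{lem:ss_cond_comm} guarantees that each of the two integrals on the right is positive semidefinite at every $t\ge0$; with $\alpha,\beta\ge0$ their non-negative combination is positive semidefinite as well, so $\int_0^t\!{\dd}\tau\,\msf{D}'(\tau)\ge0$ for all $t$.

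Finally, I would close the argument by applying \lemref{lem:ss_cond_comm} in the forward direction to $\cL'_t$ itself: the combined dynamics is commutative by hypothesis and its integrated dissipator matrix is positive semidefinite, hence it is SS. By \obsref{obs:ss_implies_phys} semigroup-simulability implies physicality, so every $\Lambda'_t=\exp\cZ'_t$ is CPTP and the pair is additive in the sense of \defref{def:gen_add}.

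I expect the only genuinely delicate point to be the step that trades the full CPTP check for positivity of the integrated dissipator, namely the passage through $\cZ'_t=\int_0^t\!{\dd}\tau\,\cL'_\tau$. This identity is exactly what the commutativity hypothesis buys: in its absence the map family is given by the time-ordered Dyson series \eqnref{eq:map_via_dyn_gens}, and then neither does $\cZ'_t$ reduce to the integral of the generator nor does positivity of $\int_0^t\!{\dd}\tau\,\msf{D}'(\tau)$ control complete positivity. It is therefore essential that commutativity is assumed to hold for the sum for \emph{all} $\alpha,\beta\ge0$, and not merely for the individual families; the remainder of the proof is then the linear, order-preserving bookkeeping sketched above.
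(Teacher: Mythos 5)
Your proof is correct and follows essentially the same route as the paper's: both exploit the commutativity of the summed family to identify $\cZ'_t=\int_0^t{\dd}\tau\,\cL'_\tau$, reduce physicality to the SSC condition \eref{eq:ss_cond_for_comm}, and close by linearity, since $\msf{\Gamma}'(t)=\alpha\,\msf{\Gamma}^{(1)}(t)+\beta\,\msf{\Gamma}^{(2)}(t)\ge0$ for $\alpha,\beta\ge0$. The only cosmetic difference is that you phrase the addition at the level of the dissipator matrices $\msf{D}(t)$ and then integrate, whereas the paper adds the instantaneous exponents $\cZ_t$ directly; these are equivalent bookkeeping choices.
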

\begin{proof}
As all the families $\left\{ \cL_t^{(1)}\right\}_{t\ge0}$, $\left\{ \cL_t^{(2)}\right\}_{t\ge0}$ and $\left\{ \cL_t^{\prime}\right\}_{t\ge0}$ are commutative, their instantaneous exponents also add according to \eqnref{eq:L'}, i.e., $\cZ'_{t}=\alpha\cZ_t^{(1)}+\beta\cZ_t^{(2)}$. Moreover, as $\left\{ \cL_t^{(1)}\right\}_{t\ge0}$, $\left\{ \cL_t^{(2)}\right\}_{t\ge0}$ are SS, both $\cZ_t^{(1)}$ and $\cZ_t^{(2)}$ must satisfy \eqnref{eq:ss_cond_for_comm} with $\msf{\Gamma}^{(1)}(t)\ge0$ and $\msf{\Gamma}{}^{(2)}(t)\ge0$. Hence, any family $\left\{ \cL'_t\right\}_{t\ge0}$ must also be SS (and, hence, physical), as recomputing condition \eref{eq:ss_cond_for_comm} for $\cZ'_{t}$ with help of \eqnref{eq:Z_GKSL_int} it reads
\BE
\msf{\Gamma}'(t)=\alpha\,\msf{\Gamma}^{(1)}(t)+\beta\,\msf{\Gamma}^{(2)}(t)\ge0,
\EE
and is trivially fulfilled for any $\alpha,\beta\ge0$.
\end{proof}
%

%%%%%%%%%%%%%%%%%%%%%%%%%%%%%%%%%%%%%%%%%%%%%%%%%%%%%%%%%%%%%%%%%%%%%%%%%%%%%%%%%%%%%%
%%%%%%%%%%%%%%%%%%%%%%%%%%%%%%%%%%%%%%%%%%%%%%%%%%%%%%%%%%%%%%%%%%%%%%%%%%%%%%%%%%%%%%
\section{Rescalability and additivity of qubit dynamical generators}
\label{app:dyn_gens_qubit_dyns}

%%%%%%%%%%%%%%%%%%%%%%%%%%%%%%%%%%%%%%%%%%%%%%%%%%%%%%%%%%%%%%%%%%%%%%%%%%%%%%%%%%%%%%
\subsection{RU and PC classes of qubit dynamics}
\label{app:qubit_classes}
We consider two important classes of commutative qubit dynamics, namely, \emph{random unitary} (RU) \cite{Andersson2007,Chruscinski2013} and \emph{phase-covariant} (PC) \cite{Smirne2016} evolutions. In order to provide their physically motivated instances, we explicitly discuss exemplary microscopic derivations for the (generalised) \emph{dephasing} and \emph{amplitude damping} models that fall into the RU and PC classes, respectively.

%%%%%%%%%%%%%%%%%%%%%%%%%%%%%%%%%%%%%%%%%%%%%%%%%%%%%%%%%%
\subsubsection{Random unitary (RU) dynamics}
\label{app:dyn_gens_qubit_dyns_RU}
Random unitary (RU) qubit dynamics are formed by considering smooth families of Pauli channels, which up to unitary transformations represent the most general qubit unital ($\Lambda[\openone]=\openone$) maps \cite{Vacchini2012}. In particular, any RU evolution is described by a qubit QME \eref{eq:master_equation} with \citep{Andersson2007,Chruscinski2013}:
\BE
\cL_t\!\left[\bullet\right]=\sum_{k=\{x,y,z\}}\gamma_{k}(t)\left(\hs_{k}\bullet\hs_{k}-\bullet\right),
\label{eq:dyn_gens_RU}
\EE
so that the RU generator family $\left\{ \cL_t\right\} _{t\ge0}$ is fully specified by the three rates $\gamma_{k}(t)$ defining a diagonal form of the general $\msf{D}$-matrix in \eqnref{eq:L_GKSL_D}, with Pauli operators $\left\{ \openone,\sx,\sy,\sz\right\}$ constituting a basis for two-dimensional Hermitian matrices. Hence, it directly follows that the GKSL form \eref{eq:L_GKSL} of RU generators, and hence the \emph{CP-divisibility} \eref{eq:CP_div} of the dynamics, is ensured iff at all times all $\gamma_{k}(t)\ge0$ are non-negative in \eqnref{eq:dyn_gens_RU}. 

One may straightforwardly verify that any RU dynamics \eref{eq:dyn_gens_RU} is \emph{commutative}, so that by \eqnref{eq:inst_dyn_gens_coinc_for_comm_dyn} dynamical and instantenuous generators coincide, and the instantaneous exponents according to \eqnref{eq:Z_GKSL_int} read:
\BE
\cZ_t\!\left[\bullet\right] =\sum_{k=\{x,y,z\}}\Gamma_{k}(t)\left(\hs_{k}\bullet\hs_{k}-\bullet\right)
\label{eq:Z_t_rand_unit}
\EE
with $\Gamma_{k}(t):=\int_{0}^{t}{\dd}\tau\gamma_{k}(\tau)$. Hence, it directly follows from the condition \eref{eq:ss_cond_for_comm} that any RU dynamics is \emph{SS} iff
\BE
\forall_{t\ge0,k=\{x,y,z\}}:\quad\Gamma_{k}(t)\ge0.
\label{eq:ss_rand_unit}
\EE
Note that, as one may easily construct families of dynamical generators \eref{eq:dyn_gens_RU} that satisfy \eqnref{eq:ss_rand_unit} without requiring $\gamma_{k}(t)\ge0$ for all $t$, there exist RU dynamics that are SS but \emph{not} CP-divisible.

On the other hand, an explicit condition for the physicality of RU dynamics is known \citep{Andersson2007,Chruscinski2015}. In particular, RU dynamics is physical iff for all the cyclic permutations of $i,j,k\in\{x,y,z\}$ (i.e., such that $\epsilon_{ijk}=1$):
\BE
\mu_{i}(t)+\mu_{j}(t)\le1+\mu_{k}(t),
\label{eq:rand_unit_phys}
\EE
where each $\mu_{i}(t):=\exp\!\left[-2\left(\Gamma_{j}(t)+\Gamma_{k}(t)\right)\right]$. It is easy to verify that the physicality condition \eref{eq:rand_unit_phys} is less restrictive than the SS condition \eref{eq:ss_rand_unit}. Hence, there exist RU dynamics that are physical but \emph{not} SS, despite being commutative.

%%%%%%%%%%%%%%%%%%%%%%%%%%%%%%%%%
\paragraph{Eternally non-Markovian model.}
An example of RU dynamics that is physical but \emph{not} SS is also provided by the \emph{eternally non-Markovian} model introduced in \refcite{Hall2014}, which corresponds to the following choice of rates in \eqnref{eq:dyn_gens_RU}:
\BE
\gamma_{x}(t)=\gamma_{y}(t)=\frac{1}{2},\quad\gamma_{z}(t)=-\frac{1}{2}\tanh\!\left(t\right),
\label{eq:gammas_eternalNM}
\EE
for which the physicality condition \eref{eq:rand_unit_phys} holds, even though $\gamma_{z}(t)<0$ (and hence $\Gamma_{z}(t)<0$) for all $t\ge0$.

%%%%%%%%%%%%%%%%%%%%%%%%%%%%%%%%%%
\paragraph{Dephasing dynamics.}
The simplest example of RU dynamics \eref{eq:dyn_gens_RU} is provided by the \emph{dephasing} model:
\BE
\cL_t\!\left[\bullet\right]=\gamma(t)\left(\hs_{\boldsymbol{n}}\bullet\hs_{\boldsymbol{n}}-\bullet\right),
\label{eq:dyn_gens_dephasing}
\EE
where $\hs_{\boldsymbol{n}}=\boldsymbol{n}\cdot \sV=\sum_{i}n_{i}\hs_{i}$, and $\hs_{\boldsymbol{n}}^{2}=\openone$ implies $\left\Vert \boldsymbol{n}\right\Vert =1$. The unit vector $\boldsymbol{n}$ should be interpreted as a choice (a passive rotation in the Bloch-ball picture) of the Pauli-operator basis, in which then \eqnref{eq:dyn_gens_dephasing} corresponds to (rank-one Pauli) RU dynamics \eref{eq:dyn_gens_RU} with only a single term present in the sum. One may easily verify that for the dephasing model to be physical $\Gamma(t)=\int_{0}^{t}{\dd}\tau\gamma(\tau)\ge0$, with the notions of physicality and SS then trivially coinciding.

The dephasing dynamics \eref{eq:dyn_gens_dephasing} can be explicitly obtained by considering various microscopic derivations, in which a qubit is coupled to a large environment via some $H_{\mathrm{int}}\propto \hs_{\boldsymbol{n}}\otimes O_\mathrm{env}$. In \secref{sec:magnet_dephasing} of the main text, we provide a compact example by using a toy-model of a qubit coupled to a large magnet. The most common microscopic derivation, however, is constructed by considering a qubit coupled to a large, thermal bosonic bath \cite{Breuer}. The interaction is then modelled by $H_{\mathrm{int}}\propto\hs_{\boldsymbol{n}}\otimes\left(\sum_{k}g_{k}\hat a_{k}+g_{k}^{\star}\hat a_{k}^{\dagger}\right)$ which couples the qubit to a bosonic reservoir of an Ohmic-like spectral density \citep{Leggett1987}:
\BE
J(\omega):=\sum_{k}g_{k}^{2}\delta(\omega-\omega_{k})=\frac{\omega^{s}}{\omega_{c}^{s-1}}\,\ee^{-\frac{\omega}{\omega_{c}}},
\EE
where $\omega_{c}$ represents the reservoir cutoff frequency, while $s\ge0$ is the so-called Ohmicity parameter. 

Assuming further the reservoir to be at zero temperature, the dynamical generators describing the qubit evolution take then exactly the form \eref{eq:dyn_gens_dephasing} with the dephasing rate reading \citep{Haikka2013}:
\BE
\gamma(t)=\omega_{c}\left[1-\left(\omega_{c}t\right)^{2}\right]^{-\frac{s}{2}}\Gamma[s]\,\sin\!\left[s\,\arctan\!\left(\omega_c t\right)\right],
\label{eq:deph_rate_Ohmic}
\EE
where $\Gamma[s]$ above represents the Euler gamma function. Moreover, one may show that the dephasing rate temporarily takes negative values iff $s>2$, so that the dynamics ceases then to be CP-divisible \citep{Haikka2013}.

%%%%%%%%%%%%%%%%%%%%%%%%%%%%%%%%%%%%%%%%%%%%%%%%%%%%%%%%%%%%%%%%%%%%%%%%%%%%%%%%%%%%%%
\subsubsection{Phase-covariant (PC) dynamics}
A phase-covariant (PC) qubit evolution corresponds to a family of dynamical maps that possess azimuthal symmetry with respect to rotations about the $z$ axis in the Bloch-ball representation. The most general PC dynamics is described by a qubit QME \eref{eq:master_equation} with \citep{Smirne2016}:
\begin{align}
\cL_t\!\left[\bullet\right] =&\quad\gamma_{-}(t)\left(\sm\bullet\sp-\frac{1}{2}\left\{ \sp\sm,\bullet\right\} \right) \nonumber\\
&+\gamma_{+}(t)\left(\sp\bullet\sm-\frac{1}{2}\left\{ \sp\sm,\bullet\right\} \right) \nonumber\\
&+\gamma_{z}(t)\bigg(\sz\bullet\sz-\bullet\bigg), 
\label{eq:dyn_gens_PC}
\end{align}
which represents a combination of relaxation, excitation and dephasing processes occurring with rates:~$\gamma_{-}(t)$, $\gamma_{+}(t)$ and $\gamma_{z}(t)$, respectively; while $\hs_{\pm}:=\frac{1}{\sqrt{2}}\left(\sx\pm\ii\sy\right)$ are the transition operators. 

Although dynamical generators commute within each of the RU \eref{eq:dyn_gens_RU} and PC \eref{eq:dyn_gens_PC} classes of dynamics, they do \emph{not} generally commute in between the two. In fact, their common commutative subset corresponds to \emph{all unital PC evolutions} for which $\cL_t$ in \eqnsref{eq:dyn_gens_RU}{eq:dyn_gens_PC} coincide with $\gamma_x(t)=\gamma_y(t)=\frac{1}{2}\gamma_+(t)=\frac{1}{2}\gamma_-(t)$. Note that, the eternally non-Markovian model with decay rates specified in \eqnref{eq:gammas_eternalNM} is, in fact, both RU and PC, while the dephasing (RU) dynamics belongs to the PC class only when aligned along the $z$ direction, i.e., when $\boldsymbol{n}=\{0,0,1\}$ in \eqnref{eq:dyn_gens_dephasing}.

As $\left\{ \openone,\sp,\sm,\sz\right\}$ equivalently constitute a basis for two-dimensional Hermitian matrices, the properties of PC dynamics can be determined analogously to the RU case. In particular, considering now $k=\{+,-,z\}$, a given PC evolution is \emph{CP-divisible} iff all $\gamma_{k}(t)\ge0$ at all times, while it is \emph{SS} iff all $\Gamma_{k}(t)=\int_{0}^{t}\gamma_{k}(\tau){\dd}\tau\ge0$
for any $t\ge0$. Moreover, it is not hard to verify that the family of PC generators \eref{eq:dyn_gens_PC} is \emph{physical} iff for all $t\ge0$ \citep{Smirne2016}:
\BE
\eta_{||}(t)\pm\kappa(t)\le1\quad\mathrm{and}\quad\left(1+\eta_{||}(t)\right)^{2}\ge4\eta_{\perp}(t)^{2}+\kappa(t)^{2},
\label{eq:phys_conds_PC_dyns}
\EE
where $\eta_{||}(t)\!:=\!\ee^{-\delta(t)}$, $\eta_{\perp}(t)\!:=\!\ee^{-\frac{1}{2}\left(\delta(t)-4\Gamma_{z}(t)\right)}$, $\kappa(t)\!:=\!\ee^{-\delta(t)}\int_{0}^{t}\!\dd\tau\,\ee^{\delta(\tau)}\!\left[\gamma_{+}(\tau)-\gamma_{-}(\tau)\right]$
and $\delta(t)\!:=\!\Gamma_{+}(t)+\Gamma_{-}(t)$.

Hence, similarly to the case of RU dynamics, as the physicality condition \eref{eq:phys_conds_PC_dyns} is less restrictive than the SS requirement ($\Gamma_k(t)\ge0$), there exist (commutative) PC dynamics that are \emph{not} SS but still physically legitimate. The eternally non-Markovian model \cite{Hall2014}, being both RU and PC, provides again an appropriate example.

%%%%%%%%%%%%%%%%%%%%%%%%%%%%%%%%%%
\paragraph{Amplitude damping dynamics.}
The most common example of PC dynamics, which is not RU, is the \emph{amplitude damping} evolution that represents the pure relaxation process, i.e., spontaneous emission of a two-level (qubit) system \citep{Breuer};~and corresponds to the choice $\gamma_{+}(t)=\gamma_{z}(t)=0$ in \eqnref{eq:dyn_gens_PC}, i.e.,
\BE
\cL_t\!\left[\bullet\right] = \gamma_{-}(t)\left(\sm\bullet\sp-\frac{1}{2}\left\{ \sp\sm,\bullet\right\} \right). 
\label{eq:dyn_gens_ampl_damp}
\EE

Its canonical microscopic derivation stems from the Jaynes-Cummings interaction model \citep{Breuer}, $H_{\mathrm{int}}\propto\sp\otimes\left(\sum_{k}g_{k}\hat a_{k}+g_{k}^{\star}\hat a_{k}^{\dagger}\right)$, in which the qubit is coupled to a cavity possessing Lorentzian spectral density \citep{Li2010}:
\BE
J(\omega)=\frac{1}{2\pi}\frac{\gamma_{0}\lambda^{2}}{(\omega_{0}-\omega-\Delta)^{2}+\lambda^{2}},
\EE
where $\Delta$ describes the difference between qubit transition, $\omega_{0}$, and cavity central frequencies, while $\lambda$ represents the cavity spectral width. Crucially, such a model---after tracing out degrees of freedom of the cavity---leads to a qubit QME \eref{eq:master_equation} with the dynamical generator \eref{eq:dyn_gens_ampl_damp}, whose relaxation rate reads \citep{Li2010}:
\BE
\gamma_{-}(t)=\mathrm{Re}\!\left\{ \frac{2\gamma_{0}\lambda}{\lambda-\ii\Delta+d\,\coth\left(\frac{dt}{2}\right)}\right\},
\label{eq:damp_JC}
\EE
and does not exhibit a singular behaviour as long as the real part of the complex parameter $d:=\sqrt{(\lambda-\ii\Delta)^{2}-2\gamma_{0}\lambda}$ is positive. 

However, this is not the case in the \emph{on-resonance} ($\Delta=0$), \emph{strong-coupling} ($\gamma_{0}>\frac{\lambda}{2}$) regime, in which $d$ becomes purely imaginary, $d=\ii\left|d\right|$ with $\left|d\right|=\sqrt{2\gamma_{0}\lambda-\lambda^{2}}$, so that the relaxation rate \eref{eq:damp_JC} simplifies to
\BE
\gamma_{-}(t)=\frac{2\gamma_{0}\lambda}{\lambda+\left|d\right|\cot\!\left(\frac{\left|d\right|t}{2}\right)},
\label{eq:damp_JC_div}
\EE
and diverges at every $t=\frac{2}{\left|d\right|}\!\left(\mathrm{arccot}\!\left(\frac{-\lambda}{\left|d\right|}\right)+n\pi\right)$ with $n\in\mathbb{N}^{+}$ \citep{Breuer}.

%%%%%%%%%%%%%%%%%%%%%%%%%%%%%%%%%%%%%%%%%%%%%%%%%%%%%%%%%%%%%%%%%%%%%%%%%%%%%%%%%%%%%%
\subsection{Counterexamples to rescalability and additivity}
\label{app:counter_qubit}

%%%%%%%%%%%%%%%%%%%%%%%%%%%%%%%%%%%%%%%%%%%%%%%%%%%%%%%%%%%%%%%%%
\subsubsection{Non-rescalable qubit dynamics}
\label{app:counter_gens_res}
An explicit example of \emph{non-rescalable} qubit dynamics is provided by the Jaynes-Cummings model of spontaneous emission described just above, considered in the \emph{on-resonance, strong-coupling} regime. It leads to an example of dynamics described in \appref{resc_sing} with dynamical generators being singular due to the damping rate \eref{eq:damp_JC_div} being divergent periodically in $t$. 

Considering then the rescaled version of the amplitude-damping generator \eref{eq:dyn_gens_ampl_damp}, i.e., $\cL'_t$ of \eqnref{eq:L_rescaling}, one may simply integrate the resulting QME \eref{eq:master_equation} for any $t$ in order to explicitly determine the form of the rescaled dynamical map $\Lambda'_t$ in \eqnref{eq:dyson_alpha}. For instance, when setting $\gamma_{0}=3/2$ and $\lambda=1$ for simplicity in \eqnref{eq:damp_JC_div}, the corresponding CJ-matrix \eref{eq:CJ_matrix}, $\Omega_{\Lambda'_{t}}$, may be explicitly computed and its non-zero eigenvalues read 
\BE
\lambda_{\pm}^{\mathrm{vals}}=1\pm2^{-\alpha}\ee^{-\alpha t}\left[\sqrt{2}\cos\!\left(\frac{t}{\sqrt{2}}\right)+\sin\!\left(\frac{t}{\sqrt{2}}\right)\right]^{2\alpha}.
\label{eq:evals}
\EE
Crucially, although in the case of original dynamics (when $\alpha=1$) both $\lambda_{\pm}^{\mathrm{vals}}\ge0$ for any $t\ge0$, \emph{only} for times before the occurrence of the first singularity, i.e., when $t < T=\frac{2}{\left|d\right|}\left(\mathrm{arccot}\!\left(\frac{-\lambda}{\left|d\right|}\right)+\pi\right)=\sqrt{2}\left[\pi-\arctan(\sqrt{2})\right]$, the eigenvalues are guaranteed to be real and non-negative independently of $\alpha$. In particular, for any $t \ge T$ one may easily find $\alpha\ge0$ ($\alpha\ne1$) such that the eigenvalues \eref{eq:evals} take complex values with the QME being, in fact, not even integrable.

%%%%%%%%%%%%%%%%%%%%%%%%%%%%%%%%%%%%%%%%%%%%%%%%%%%%%%%%%%%%%%%%%
\subsubsection{Non-additive qubit dynamics}
\label{app:counter_gens_add}
In order to construct counterexamples to additivity, we consider dynamical generators given in \eqnref{eq:ex_gens} of the main text and make specific choices for their dissipation rates $\gamma_1(t)$ and $\gamma_2(t)$. We then solve the QME obtained after adding the generators as in \eqnref{eq:L'} with some $\alpha,\beta \geq 0$, i.e.,
\BE
\label{eq.summedmastereq}
\frac{d}{dt} \rho(t) = \cL_t'[\rho(t)] = \alpha \cL^{(1)}_t[\rho(t)] + \beta \cL^{(2)}_t[\rho(t)],
\EE
in order to explicitly compute the corresponding family of maps $\{\Lambda'_t\}_{t\ge0}$. Crucially, we find in this way families containing maps that cease to be CPTP---with their CJ matrices, $\{\Omega_{\Lambda'_t}\}_{t\ge0}$ as defined in \eqnref{eq:CJ_matrix}, exhibiting negative eigenvalues at some time instances.

In order to solve the QME \eref{eq.summedmastereq}, we choose a qubit operator basis:
\BE
\hat{\mu}_0 = \1/\sqrt{2},\; \hat{\mu}_1 = \sx/\sqrt{2},\; \hat{\mu}_2 = \sy/\sqrt{2},\; \hat{\mu}_3 = \sz/\sqrt{2},
\EE
which allows us to use the matrix and vector representations for generators and states, respectively. As $\tr[\hat{\mu}_i \hat{\mu}_j] = \delta_{ij}$, any generator $\cL$ may then be represented by a matrix $\mat{M}$ with entries $M_{ij} = \tr[\hat{\mu}_i \cL[\hat{\mu}_j]]$, while any state $\rho$ by a vector $\vec{x}$ with components $x_i = \tr[\rho \hat{\mu}_i]$ ($x_0 = \tr[\rho]/\sqrt{2}=\sqrt{2}/2$ by definition). The QME \eref{eq.summedmastereq} is then equivalent to the set of linear, coupled differential equations:
\BE
\frac{d}{dt} \vec{x}(t)  = \mat{M}_t'\,\vec{x}(t),
\label{eq:eom_add_gens}
\EE
where $\mat{M}_t'$ is the matrix representation of $\cL_t'$.

\begin{figure*}[t!]
\begin{center}
\includegraphics[width=0.8\linewidth]{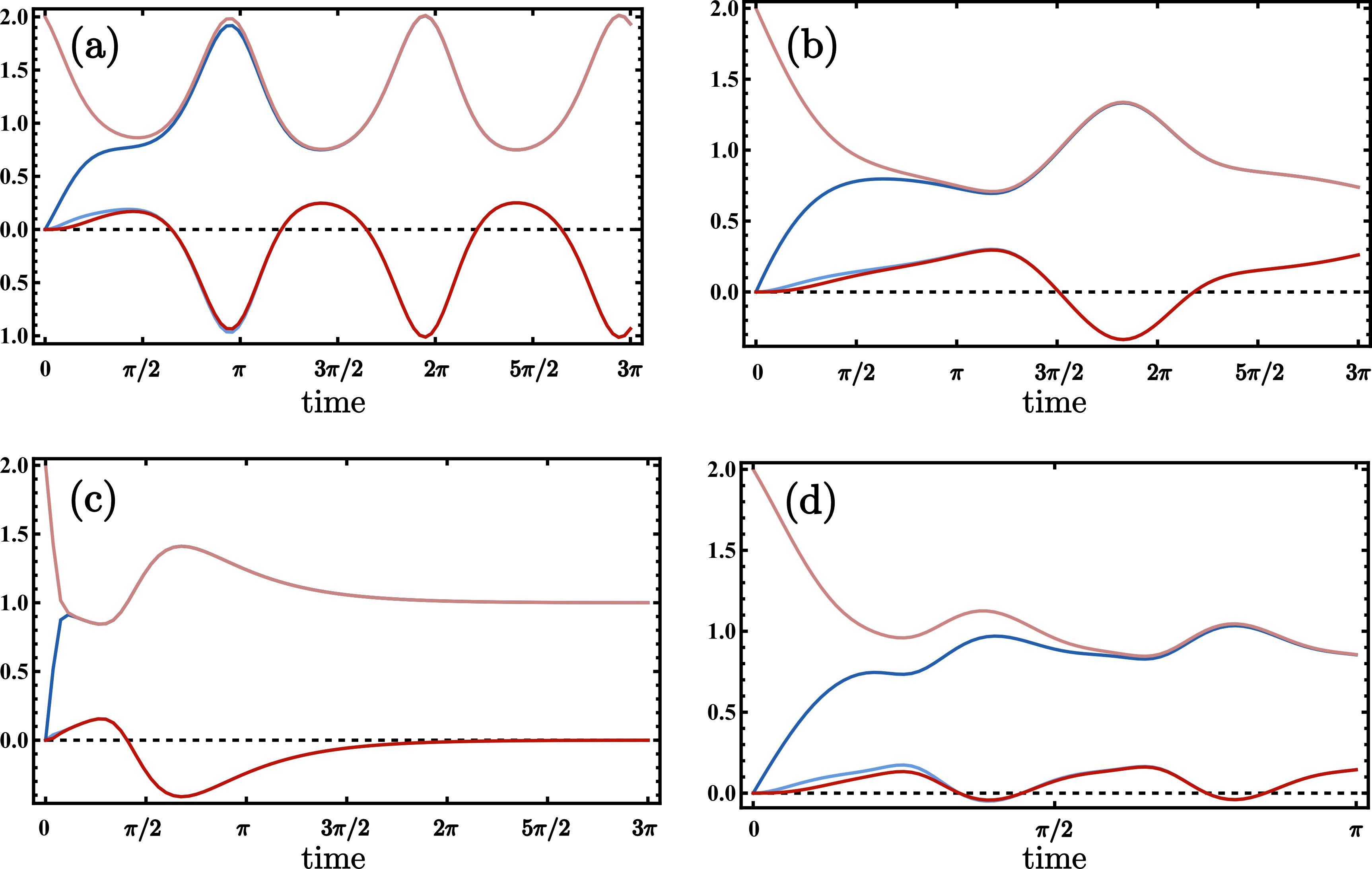}
\caption{%
\textbf{CJ eigenvalues as functions of time after adding generators}, $\alpha\cL_t^{(1)}+\beta\cL_t^{(2)}$. $\cL_t^{(1)}$ and $\cL_t^{(2)}$ describe qubit dephasing along $x$ and amplitude damping in $z$, respectively, as in \eqnref{eq:ex_gens} of the main text. In all plots $\alpha=\beta=1$, while the rate functions are chosen so that:~%
(a) $\gamma_1(t) = \sin(2t)$, while $\gamma_2(t) = 1$; 
(b) $\gamma_1(t) = 1/2$, while $\gamma_2(t) = \sin(t)$;
(c) $\gamma_1(t)$ is set according to \eqnref{eq:deph_rate_Ohmic} (super-Ohmic regime) with cut-off frequency $\omega_c=1$ and Ohmicity parameter $s=4.5$, while $\gamma_2(t) = 1$;
(d) $\gamma_1(t) = 1$, while $\gamma_2(t)$ is fixed according to \eqnref{eq:damp_JC} (off-resonant regime) with detuning $\Delta=3$, spectral width $\lambda=0.05$ and excited-state decay rate $\gamma_0=150$. \emph{Note that in all cases negative eigenvalues occur, indicating that each evolution ceases to be physical at some point in time.}
}
\label{fig:evalsunphys}
\end{center}
\end{figure*}

For our first example, we take dephasing and amplitude damping generators of \eqnref{eq:ex_gens} in the main text to be non-Markovian and semigroup, respectively, with
\BE
\gamma_1(t) = \sin(\omega t)
\qquad \text{and} \qquad
\gamma_2(t) = \gamma,
\label{eq:nonMarko_deph}
\EE
where $\omega,\gamma > 0$ are some fixed constants. Crucially, since for all $t\ge0$:
\BE
\int_0^t \dd s \,\gamma_1(s) = \frac{1 - \cos(\omega t)}{\omega}\; \geq\; 0,
\EE  
and similarly in the case of semigroup $\gamma_2(t)$, both generator families are SSC and hence rescalable, so that their additivity may be unambiguously considered.

Considering their non-negative linear combinations, $\cL_t' = \alpha \cL^{(1)}_t + \beta \cL^{(2)}_t$, one obtains generator families with
\BE
\mat{M}'_t =
\begin{pmatrix}
 0 & 0 & 0 & 0 \\
 0 & -\frac{\beta  \gamma }{2} & 0 & 0 \\
 0 & 0 & -\frac{\beta  \gamma }{2}-2 \alpha  \sin (\omega t ) & 0 \\
 \beta  \gamma  & 0 & 0 & -\beta  \gamma -2 \alpha  \sin ( \omega t) \\
\end{pmatrix}.
\EE
Solving \eqnref{eq:eom_add_gens}, one finds
\BE
\vec{x}(t) =
\begin{pmatrix}
 x_0(0) \\
 \ee^{-\frac{1}{2} \beta  \gamma  t} x_1(0)  \\
 \ee^{-\frac{2 \alpha }{\omega }+\frac{2 \alpha  \cos (\omega t )}{\omega }-\frac{\beta  \gamma  t}{2}} x_2(0)  \\
 \ee^{-\frac{2 \alpha }{\omega }+\frac{2 \alpha  \cos (\omega t )}{\omega }-\beta  \gamma  t} [ x_3(0) + \beta  \gamma  \ee^{\frac{2 \alpha }{\omega }} I(t)\, x_0(0) ]
\end{pmatrix},
\EE
where $x_0(0)=\sqrt{2}/2$ and
\BE
I(t) = \int_0^t \!\dd s\,\ee^{\beta  \gamma  s-\frac{2 \alpha  \cos (\omega s)}{\omega }}.
\label{eq:integral1}
\EE
The four eigenvalues of the CJ matrices, $\Omega_{\Lambda'_t}$, for the corresponding family of maps $\{\Lambda'_t\}_{t\ge0}$ read
\begin{widetext}
\BE
\lambda^\trm{vals}_{\mp,\pm}=
\frac{1}{2} \ee^{-\frac{2 \alpha }{\omega }-\beta  \gamma  t}
\left(
\ee^{\frac{2 \alpha }{\omega }+\beta  \gamma  t} \mp \ee^{\frac{2 \alpha  \cos (\omega t )}{\omega }} \pm \sqrt{\beta ^2 \gamma ^2 \ee^{\frac{4 \alpha  (\cos (\omega t )+1)}{\omega }} I(t)^2\mp\ee^{\beta  \gamma  t} \left(\ee^{\frac{2 \alpha }{\omega }} + \ee^{\frac{2 \alpha  \cos (\omega t )}{\omega }}\right)^2}
\right),
\EE
\end{widetext}
and are plotted in \figref{fig:evalsunphys}(a) for $\alpha=\beta=1$, $\omega=2$, and $\gamma=1$. For $t=\pi$, the integral in \eqnref{eq:integral1} evaluates to $I\approx 23.36$, and it is easy to check that two of the eigenvalues are negative. Hence, the evolution is clearly unphysical.

For the second example, we consider the symmetric case with the dissipation rates exchanged, i.e.: 
\BE
\gamma_1(t) = \gamma 
\qquad \text{and} \qquad
\gamma_2(t)  = \sin(\omega t).
\label{eq:nonMarko_amp}
\EE
By the same argumentation as before both generators are SSC (and thus rescalable) and upon addition yield
\BE
\mat{M}'_t = \sin (\omega t)
\begin{pmatrix}
 0 & 0 & 0 & 0 \\
 0 & -\frac{\beta}{2} & 0 & 0 \\
 0 & 0 & -\frac{2 \alpha\gamma}{\sin (\omega t)} -\frac{\beta}{2}\beta & 0 \\
 \beta & 0 & 0 & -\frac{2\alpha\gamma}{\sin (\omega t)} -\beta \\
\end{pmatrix},
\EE
which after solving \eqnref{eq:eom_add_gens} leads to
\BE
\mathbf{x}(t) =
\begin{pmatrix}
 x_0(0) \\
 \ee^{\frac{\beta  \cos (\omega t )}{2 \omega }-\frac{\beta }{2 \omega }} x_1(0)  \\
 \ee^{-\frac{\beta }{2 \omega }-2 \alpha  \gamma  t+\frac{\beta  \cos (\omega t )}{2 \omega }} x_2(0)  \\
 \ee^{-\frac{\beta }{\omega }-2 \alpha  \gamma  t+\frac{\beta  \cos (\omega t )}{\omega }} [x_3(0) + \beta  \ee^{\beta /\omega } I(t)\, x_0(0)]
\end{pmatrix}
\EE
with again $x_0(0)=\sqrt{2}/2$ and now
\BE
I(t) = \int_0^t \!\dd s\, \sin (\omega s)\, \ee^{2 \alpha  \gamma  s-\frac{\beta  \cos (\omega s)}{\omega }}.
\label{eq:integral2}
\EE
The four CJ eigenvalues this time read:
\begin{widetext}
\BE
\lambda^\trm{vals}_{\mp,\pm}=\frac{1}{2} \ee^{-\frac{\beta }{\omega }-2 \alpha  \gamma  t} \left(\ee^{\frac{\beta }{\omega }+2 \alpha  \gamma  t}\mp\ee^{\frac{\beta  \cos (\omega t )}{\omega }} \pm \sqrt{e^{\frac{\beta  (\cos (\omega t )+1)}{\omega }} \left(\beta ^2 \ee^{\frac{\beta  (\cos (\omega t )+1)}{\omega }} I(t)^2+\left(\ee^{2 \alpha  \gamma  t}+1\right)^2\right)}\right) ,
\EE
\end{widetext}
and are plotted in \figref{fig:evalsunphys}(b) for $\alpha=\beta=1$, $\omega=1$, and $\gamma=1/2$. For $t=2\pi$ the integral \eref{eq:integral2} yields $I\approx -204.81$, and again two of the eigenvalues are negative, proving  the evolution to be unphysical.

We repeat the above analysis, but considering this time the dissipation rates of either dephasing or amplitude-damping in \eqnref{eq:ex_gens} of the main text to have a functional form derived explicitly from an underlying microscopic model yielding non-Markovian dynamics.

Firstly, in an analogy to \eqnref{eq:nonMarko_deph}, we consider the dephasing rate to be specified by \eqnref{eq:deph_rate_Ohmic}---as if the qubit were coupled to a \emph{bosonic reservoir with an Ohmic-like spectrum}---while the damping rate to be constant. In this case, we can solve \eqnref{eq:eom_add_gens} numerically at each $t$ for given parameter settings, in order to compute the corresponding CJ eigenvalues. These are plotted in \figref{fig:evalsunphys}(c) for $\alpha=\beta=1$, $\omega_c=1$, $s=4.5$, which corresponds to a super-Ohmic spectrum \cite{Haikka2013}, and $\gamma=1$. We observe that the evolution becomes unphysical around $t=\pi/2$.

Secondly, we consider the symmetric case in an analogy to \eqnref{eq:nonMarko_amp}, this time setting the dephasing to be constant, while the damping rate to the one of \eqnref{eq:damp_JC}---derived basing on the \emph{Jaynes-Cummings microscopic model} in which the qubit is coupled to a cavity with a Lorentzian frequency spectrum---in the off-resonant ($\Delta\ne0$) regime. Again, we find the CJ eigenvalues by solving \eqnref{eq:eom_add_gens} numerically for fixed parameter values. These are plotted in \figref{fig:evalsunphys}(d) for $\alpha=\beta=1$, $\Delta=3$, $\lambda=0.05$, $\gamma_0=150$, and $\gamma=1$. We observe again that the evolution becomes unphysical, this time a bit before $t=\pi/2$.

%%%%%%%%%%%%%%%%%%%%%%%%%%%%%%%%%%%%%%%%%%%%%%%%%%%%%%%%%%%%%%%%%%%%%%%%%%%%%%%%%%%%%%
\section{Microscopic derivations of QMEs}
\label{app:QMEs_micro}
%
%For an extensive analysis of the derivations and interpretations of QMEs describing dynamics of open systems, we refer the reader to the books of \citet{Breuer} and \citet{Rivas2012}. In what follows, we present only derivations relevant to the claims and points presented in the main text.

We consider the situation depicted in \figref{fig.settings}(a) of the main text, in which a system of interest and its environment evolve under closed dynamics determined by a time-invariant \emph{total} (T) Hamiltonian---consisting of Hamiltonians associated with the \emph{system} (S), the \emph{environment} (E) and their \emph{interaction} (I):
\BE
H_{T}=H_{S}+H_{E}+H_{I}.
\label{eq:H_T}
\EE

%%%%%%%%%%%%%%%%%%%%%%%%%%%%%%%%%%%%%%%%%%%%%%%%%%%%%%%%%%%%%%%%%%%%%%%%%%%%%%%%%%%%%%
\subsection{Interaction and Schr\"odinger pictures}
\label{app:SP_and_IP}
The \emph{interaction picture} (IP), which we denote here with an over-bar, is then defined in the same manner for all operators and states acting on the system-environment Hilbert space, i.e., as
\begin{align}
\bar{O} & :=\ee^{\ii\left(H_{S}+H_{E}\right)t}\,O\,\ee^{-\ii\left(H_{S}+H_{E}\right)t}
\label{eq:IP_def}
\end{align}
for any given $O\in\cB(\cH_S\otimes\cH_E)$ that is specified in the \emph{Schr\"odinger picture} (SP).

In contrast, a general dynamical map, $\Lambda_{t,t_0}$, that describes the evolution of solely the system between the initial time $t_0$ and some later $t$ transforms from SP to IP (and \emph{vice versa}) as:
\BE
\bar{\Lambda}_{t,t_0} = \cU_t^{S\dagger}\,\circ\,\Lambda_{t,t_0}\,\circ\,\cU_{t_0}^{S}
\quad\Big(\!\Longleftrightarrow\,
\Lambda_{t,t_0}\! = \cU_t^{S}\,\circ\,\bar{\Lambda}_{t,t_0}\,\circ\,\cU_{t_0}^{S\dagger}\Big),
\label{eq:Lambda_ts_IP_SP}
\EE
where by 
\BE
\cU_t^{S}[\,\bullet\,]:=U_S(t)\,\bullet\,U_S^\dagger(t)=\ee^{-\ii H_{S}t}\,\bullet\, \ee^{\ii H_{S}t}
\label{eq:U_S_map}
\EE
we denote the unitary transformation induced by the system free Hamiltonian, $H_S$. However, as we consider throughout this work dynamical maps that commence at zero time ($t_0=0$), see \eqnref{eq:dynamics}, \eqnref{eq:Lambda_ts_IP_SP} simplifies to 
\BE
\bar{\Lambda}_t= \cU_t^{S\dagger}\circ\Lambda_t
\quad\Big(\!\Longleftrightarrow\quad
\Lambda_t= \cU_t^{S}\circ\bar{\Lambda}_t\Big), 
\label{eq:Lambda_t_IP_SP}
\EE
which allows us to explicitly compute how the corresponding dynamical generators of $\Lambda_t$ and $\bar{\Lambda}_t$ transform between the SP and IP. 

In particular, defining the \emph{IP-based dynamical generator} in accordance with \eqnref{eq:dyn_gen_def} as
\BE
\LIP_t :=\dot{\bar{\Lambda}}_{t}\circ\bar{\Lambda}_{t}^{-1},
\label{eq:L_IP}
\EE
and substituting for $\bar{\Lambda}_t$ according to \eref{eq:Lambda_t_IP_SP}, we obtain
\begin{align}
\LIP_t
&=
(\dot{\cU}_t^{S\dagger}\circ\Lambda_t+\cU_t^{S\dagger}\circ\dot{\Lambda}_t)\circ\Lambda_t^{-1}\circ\cU_t^{S} \\
&=
\dot{\cU}_t^{S\dagger}\circ\cU_t^{S}+\cU_t^{S\dagger}\circ\dot{\Lambda}_t\circ\Lambda_t^{-1}\circ\cU_t^{S} \\
&=
\ii[H_S,\bullet]+\cU_t^{S\dagger}\circ\LSP_t\circ\cU_t^{S},
\label{eq:L_t_IP_SP}
\end{align}
where in the last line we have used the definition of $\cU_t^S$ \eref{eq:U_S_map}, and accordingly defined the \emph{SP-based dynamical generator}, i.e., as in \eqnref{eq:dyn_gen_def}: 
\BE
\LSP_t :=\dot{\Lambda}_{t}\circ\Lambda_{t}^{-1},
\label{eq:L_SP}
\EE

%%%%%%%%%%%%%%%%%%%%%%%%%%%%%%%%%%%%%%%%%%%%%%%%%%%%%%%%%%%%%%%%%%%%%%%%%%%%%%%%%%%%%%
\subsection{QME in the integro-differential form}
\label{app:QME_integrodiff}
The von Neumann equation describing the unitary evolution of the closed system-enviroment (SE) system, i.e., \eqnref{eq:ODE_S+E} of the main text, in the IP reads:
\BE
\frac{d\bar{\rho}_{SE}(t)}{dt}=-\ii\left[\bar{H}_{I}(t),\bar{\rho}_{SE}(t)\right].
\label{eq:vN}
\EE
Assuming the SE to initially be in a product state,
\BE
\rhoSE(0)=\rho_{S}(0)\otimes\rho_{E}
\label{eq:init_prod}
\EE
with $\rho_{E}$ being a \emph{stationary} state of the environment that satisfies $\left[\bar{H}_{E}(t),\rho_{E}\right]=\left[H_{E},\rho_{E}\right]=0$, one may write the integral of \eqnref{eq:vN} as:
\BE
\bar{\rho}_{SE}(t)=\rho_{S}(0)\otimes\rho_{E}-\ii\int_{0}^{t}ds\left[\bar{H}_{I}(s),\bar{\rho}_{SE}(s)\right].
\label{eq:vN_integral}
\EE

Tracing out the environment in \eqnref{eq:vN}, so that its l.h.s.~reduces to
$d\bar{\rho}_{S}(t)/dt$ and substituting into its r.h.s.~for $\bar{\rho}_{SE}(t)$ according to \eqnref{eq:vN_integral}, one arrives at the integro-differential equation
describing the system density matrix in the IP at time $t$: 
\BEA
\frac{d\bar{\rho}_{S}(t)}{dt}&=&-\ii\tr_{E} \left[\bar{H}_{I}(t),\rho_{S}(0)\otimes\rho_{E}\right] \label{eq:QME_exact_init}\\
&&-\int_{0}^{t}ds\,\tr_{E} \left[\bar{H}_{I}(t),\left[\bar{H}_{I}(s),\bar{\rho}_{SE}(s)\right]\right].\nonumber
\EEA
The first term in \eqnref{eq:QME_exact_init} may be dropped, as without loss of generality one may impose
\BE
\tr_{E}\!\left\{ \bar{H}_{I}(t)\rho_{E}\right\} = 0.
\label{eq:first_mom_vanish}
\EE
by shifting the zero point energy of Hamiltonians, i.e., by changing $H_{I}$ and $H_{S}$ as follows
\BE
H'_{I}=H{}_{I}-\tr_{E}\!\left\{ H_{I}\rho_{E}\right\} \otimes\openone_{E},\; 
H'_{S}=H_{S}+\tr_{E}\!\left\{ H_{I}\rho_{E}\right\},
\label{eq:HI_shift}
\EE
so that condition \eref{eq:first_mom_vanish} is ensured, given $\left[H_{E},\rho_{E}\right]=0$, without affecting the total Hamiltonian $H_{T}$ in \eqnref{eq:H_T}.
% \begin{widetext}
% \begin{align}
% \tr_{E}\left\{ \bar{H}'_{I}(t)\rho_{E}\right\}  & =\tr_{E}\,\left\{ \ee^{\ii\left(H_{S}'+H_{E}\right)t}H{}_{I}e^{-\ii\left(H_{S}'+H_{E}\right)t}\rho_{E}-\ee^{\ii H_{S}'t}\tr_{E}\!\left\{ H_{I}\rho_{E}\right\} \ee^{-\ii H_{S}'t}\otimes\rho_{E}\right\} \\
%  & =\ee^{\ii H_{S}'t}\tr_{E}\,\left\{ \ee^{\ii H_{E}t}H{}_{I}\ee^{-\ii H_{E}t}\rho_{E}\right\} \ee^{-\ii H_{S}'t}-\ee^{\ii H_{S}'t}\tr_{E}\!\left\{ H_{I}\rho_{E}\right\} \ee^{-\ii H_{S}'t}\\
%  & =\ee^{\ii H_{S}'t}\tr_{E}\,\left\{ H_{I}\rho_{E}\right\} \ee^{-\ii H_{S}'t}-\ee^{\ii H_{S}'t}\tr_{E}\!\left\{ H_{I}\rho_{E}\right\} \ee^{-\ii H_{S}'t}=0,
% \end{align}
% \end{widetext}
% where in the last line we have used the steady-state property of the initial environment state, i.e., $\left[H_{E},\rho_{E}\right]=0$.

As a result, we obtain the QME in its \emph{integro-differential} form that does \emph{not} involve any approximations, but only assumes \eqnref{eq:init_prod} with $\left[H_{E},\rho_{E}\right]=0$,
\BE
\frac{d\bar{\rho}_{S}(t)}{dt}=-\int_{0}^{t}ds\,\tr_{E}\!\left\{ \left[\bar{H}_{I}(t),\left[\bar{H}_{I}(s),\bar{\rho}_{SE}(s)\right]\right]\right\},
\label{eq:QME_exact}
\EE
and constitutes \eqnref{eq:exact_integrodiff_singlebath} of the main text.

%%%%%%%%%%%%%%%%%%%%%%%%%%%%%%%%%%%%%%%%%%%%%%%%%%%%%%%%%%%%%%%%%%%%%%%%%%%%%%%%%%%%%%
\subsection{QME in the time-local form}
\label{app:QME_TL}
The QME \eref{eq:QME_exact} despite being compact and exact is typically not of much use, as it involves the full system-environment state and a time-convoluted integral. Nevertheless, one may always formally rewrite it as a function of the system state at a given time. 

After integrating the closed von Neumann dynamics \eref{eq:vN}, one should arrive at
\BE
\bar{\rho}_{SE}(t)=\bar{U}_{SE}(t)\left(\rho_{S}(0)\otimes\rho_{E}\right)\bar{U}_{SE}^{\dagger}(t)
\label{eq:bar_rho_SE(t)}
\EE
with the unitary rotation being formally defined as a time-ordered
exponential:
\BE
\bar{U}_{SE}(t):=\mathcal{T}_{\leftarrow}\exp\!\left\{ -\ii\intop_{0}^{t}\!ds\;\bar{H}_{I}(s)\right\}.
\label{eq:U_SE_IP}
\EE

Now, as the reduced state of the system is obtained at any time by tracing out the environment, 
the \emph{dynamical map}, $\bar{\Lambda}_{t}$, associated solely with the system evolution in the IP may be identified as
\BE
\bar{\rho}_{S}(t)=\bar{\Lambda}_{t}\!\left[\rho_{S}(0)\right]:=\tr_E\!\left\{\bar{U}_{SE}(t)\left(\rho_{S}(0)\otimes\rho_{E}\right)\bar{U}_{SE}^{\dagger}(t)\right\}.
\label{eq:map_IP}
\EE 

Hence, if at a given $t$ one can compute the inverse of the dynamical map, i.e., $\bar{\Lambda}_{t}^{-1}$ such that $\rho_{S}(0)=\bar{\Lambda}_{t}^{-1}\!\left[\bar{\rho}_{S}(t)\right]$, as well as its the time-differential $\dot{\bar{\Lambda}}_{t}$, which is now formally determined by \eqnref{eq:QME_exact} as
\begin{align}
&\dot{\bar{\Lambda}}_{t}\left[\bullet\right]= 
\label{eq:dyn_map_IP_macro}\\
&
-\int_{0}^{t}\!\!ds\,\tr_{E}\left[\bar{H}_{I}(t),\left[\bar{H}_{I}(s),\bar{U}_{SE}(s)\left(\bullet\otimes\rho_{E}\right)\bar{U}_{SE}^{\dagger}(s)\right]\right] \nonumber,
\end{align}
one may equivalently rewrite the exact QME \eref{eq:QME_exact} into its \emph{time-local} form (in the IP):
\BE
\frac{d\bar{\rho}_{S}(t)}{dt}=\LIP_t\left[\bar{\rho}_{S}(t)\right],
\label{eq:QME_TL_IP}
\EE
where $\LIP_t=\dot{\bar{\Lambda}}_{t}\circ\bar{\Lambda}_{t}^{-1}$ is the IP-based dynamical generator defined in \eqnref{eq:L_IP}.

%%%%%%%%%%%%%%%%%%%%%%%%%%%%%%%%%%%%%%%%%%%%%%%%%%%%%%%%%%%%%%%%%%%%%%%%%%%%%%%%%%%%%%
\subsection{QME in the Schr\"odinger picture and the environment-induced generator}
\label{app:QME_SP}

We rewrite the QME \eref{eq:QME_TL_IP} in the SP as
\BE
\frac{d\rho_{S}(t)}{dt} = \LSP_{t}\left[\rho_{S}(t)\right],
\label{eq:QME_SP}
\EE
where in accordance with \eqnref{eq:L_t_IP_SP} the IP-based dynamical generator must be transformed to
\BE
\LSP_t[\bullet] = -\ii[H_S,\bullet]+\cU_t^{S}\circ\LIP_t\circ\cU_t^{S\dagger}[\bullet].
\label{eq:chocalito}
\EE
As a result, we arrive at the time-local QME as stated in \eqnref{eq:QME} of the main text:
\BE
\frac{d\rho_{S}(t)}{dt} = -\ii\left[H_{S},\rho_{S}(t)\right]+\Lenv_{t}\left[\rho_{S}(t)\right],
\label{eq:QME_main}
\EE
and identify with help of \eqnref{eq:chocalito}:
\BE
\Lenv_{t}
:=
\LSP_t - \dot{\cU}_t^{S}\circ\cU_t^{S\dagger}
=
\cU_t^{S}\circ\LIP_t\circ\cU_t^{S\dagger}
\label{eq:L_env_ind}
\EE
as the \emph{environment-induced dynamical generator}, which can be then associated solely with the impact of the environment on the system.

Still, as $\Lenv_t$ generally contains both Hamiltonian and dissipative parts, i.e., $\Lenv_{t}\left[\bullet\right]=-\ii\left[H(t),\bullet\right]+\cD_{t}\left[\bullet\right]$, one may conveniently rewrite the QME \eref{eq:QME_SP} as \cite{Breuer,Rivas2012}:
\BE
\frac{d\rho_{S}(t)}{dt}=-\ii \left[H_{S}+H(t),\rho_{S}(t)\right]+\cD_{t}\left[\rho_{S}(t)\right],
\label{eq:QME_main_can}
\EE
which allows to explicitly identify $H(t)$ as the environment-induced Hamiltonian correction to the system free evolution, e.g., representing the Lamb shifts when describing atom-light interactions \cite{Breuer}. Whereas, $\cD_{t}$ in \eqnref{eq:QME_main_can} may then be entirely associated with the dissipative impact of the environment.

Lastly, let us emphasise that when investigating whether by \emph{adding generators} associated with different environments---i.e., the generators $\Lenv_t$ in \eqnref{eq:QME_SP} obtained by considering the impact of each environment separately---one reproduces the correct dynamics, it is equivalent to consider all the generators in the IP. 

As the IP-based generators, $\LIP_t$, are linearly related to the environment-induced ones, $\Lenv_t$, by \eqnref{eq:L_env_ind}, the vector spaces formed by their families must be isomorphic. Hence, the notions of \emph{rescalability} and \emph{additivity} of generator families, discussed in \secref{sub:gen_add} of the main text and \appref{app:QMEs_dyn_gen_fams} above, are naturally carried over between the two, e.g., for any $\alpha,\beta,t\ge0$:
\BE
\Lenv[\prime]_t=\alpha\Lenv[(1)]_t+\beta\Lenv[(2)]_t
\quad\Longleftrightarrow\quad
\LIP[\prime]_t=\alpha\LIP[(1)]_t+\beta\LIP[(1)]_t
\EE
with \eqnref{eq:L_env_ind} relating all $\Lenv[\msf{x}]_t=\cU_t^{S}\circ\LIP[\msf{x}]_t\circ\cU_t^{S\dagger}$ for each $\msf{x}=\{\prime,(1),(2)\}$.

%%%%%%%%%%%%%%%%%%%%%%%%%%%%%%%%%%%%%%%%%%%%%%%%%%%%%%%%%%%%%%%%%%%%%%%%%%%%%%%%%%%%%%
\subsection{$H_S$-covariant dynamics}
\label{app:HS_cov}
Although the dynamical generators $\Lenv_t$ defined in \eqnref{eq:L_env_ind} arise due to the presence of the environment, their form may still strongly depend on the system free Hamiltonian $H_{S}$. Thus, $\Lenv_t$ and, in particular, both its  Hamiltonian $H(t)$ and dissipative parts $\cD_t$ in \eqnref{eq:QME_main_can} \emph{cannot} be generally associated with the properties of just the environment and the interactions. In fact, only in very special cases the form of $\cL_t$ can be derived from $H_{I}$, $H_{E}$, and $\rho_{E}$.

An important example is provided when the system and interaction Hamiltonians alone commute: 
\BE
\left[H_{\mathrm{S}},H_{\mathrm{I}}\right]=0.
\label{eq:HS_cov_comm}
\EE
As the global unitary $\bar{U}_{SE}$ in \eqnref{eq:U_SE_IP} then also commutes with $H_S$, $\left[\bar{U}_{SE},H_{S}\right]=0$, the IP-based map $\bar{\Lambda}_{t}$ in \eqnref{eq:map_IP} is assured to be \emph{$H_S$-covariant}, i.e., to commute with any $H_S$-induced unitary \eref{eq:U_S_map} (and so must trivially the SP-based $\Lambda_t$), so that for any $s\ge 0$ \citep{Holevo1993}:
\BE
\cU_s^S \circ \bar{\Lambda}_{t} = \bar{\Lambda}_{t}\circ \cU_s^S 
\quad\Longleftrightarrow\quad
\cU_s^S \circ \LIP_{t} = \LIP_{t} \circ \cU_s^S.
\label{eq:H_S_cov_Lambda}
\EE
As noted above, the $H_S$-covariance must be naturally inherited by the IP-based dynamical generators $\LIP_{t}$ \citep{Holevo1993,Vacchini2010}, which, in turn, must then coincide with the environment-induced ones, with $\Lenv_{t}=\LIP_{t}$ in \eqnref{eq:L_env_ind}.
As a result, the form of $\Lenv_{t}$ must then, indeed, be independent of $H_S$.

%%%%%%%%%%%%%%%%%%%%%%%%%%%%%%%%%%%%%%%%%%%%%%%%%%%%%%%%%%%%%%%%%%%%%%%%%%%%%%%%%%%%%%
\subsection{Externally modifying the system Hamiltonian }
\label{app:QME_HS}
In general, a modification of the system free Hamiltonian $H_S$ may affect both the Hamiltonian and the dissipative parts of the generator $\Lenv_t$ in \eqnref{eq:QME_main_can}. Nevertheless, let us consider a transformation:
\BE
H_S \to H^{\prime}_S(t):=H_S+V(t)
\EE 
with $V(t)$ being an arbitrary (potentially time-dependent) Hermitian operator. Crucially, by considering particular commutation relations satisfied by the microscopic Hamiltonians of \eqnref{eq:H_T} and $V(t)$, one may identify two important cases for which the microscopic rederivation of the QME \eref{eq:QME_main_can} can be bypassed---with the impact of $V(t)$ being directly accountable for at the level of the QME:

\paragraph{$\left[H_{S},H_{I}\right]=0$, $\forall_{t\ge0}\!\!:\left[V(t),H_{I}\right]=0$.}
If the system Hamiltonian commutes with the interaction Hamiltonian---so that the dynamics is $H_S$-covariant---and so does the perturbation $V(t)$ for all $t$, then the modified dynamics must also be $H^{\prime}_S$-covariant, as $\left[H^{\prime}_{S}(t),H_{I}\right]=0$ at all times. Hence, the form of $\Lenv_t$ in \eqnref{eq:QME_main} is unaffected by the modification of $H_S$, remaining fully determined by $H_E$, $H_I$ and $\rhoE$. Moreover, the new dynamics is then correctly described by simply replacing $H_S$ with $H^{\prime}_S(t)$ in \eqnref{eq:QME_main} (or \eqnref{eq:QME_main_can}).

\paragraph{$\left[H_{S},H_{I}\right] \neq 0$, $\forall_{t\ge0}\!\!:\left[V(t),H_{I}\right]=\left[V(t),H_{S}\right]=0$.}
The above conclusion also holds when dealing with non-$H_S$-covariant dynamics, given that $V(t)$ commutes with both the interaction and the system Hamiltonian. As then $\left[H_{S},H_{I}\right] \neq 0$, the generator $\Lenv_t$ in principle depends on $H_S$. However, without affecting the total Hamiltonian $H_{T}$ in \eqnref{eq:H_T} and hence the dynamics, we may redefine the interaction Hamiltonian as $H_I':=H_I+H_S$, pretending the system Hamiltonian to be absent. In such a fictitious picture, the QME \eref{eq:QME_main} possesses just the second term with $\Lenv[\prime]_t$ now being derived based on $H_I'$. As importantly $\left[V(t),{H}_I'\right]=0$ is fulfilled at all times, it becomes clear that the dynamics must be $V(t)$-covariant. Hence, the perturbation must lead to a QME that may be equivalently obtained by simply adding $V(t)$ to the Hamiltonians in \eqnref{eq:QME_main_can}---even though $H(t)$ and $\cD_t$ non-trivially depend on the original $H_S$ (but not on $V(t)$).

Lastly, let us note that in case of $H_S$-covariant dynamics and $\left[H_{S},H_{I}\right]=0$, one may play a similar trick in order to deal with the case when $\left[V(t),H_{S}\right]=0$ but $\left[V(t),H_{I}\right]\neq 0$, so that the modified dynamics is no longer guaranteed to be $H^{\prime}_S$-covariant. By redefining the interaction Hamiltonian this time as $H_{I}':=V(t)+H_{I}$, which importantly commutes with $H_{S}$, it becomes clear that the $H_S$-covariance must be preserved. Nonetheless, although the dynamical generator $\Lenv_t$ in \eqnref{eq:QME_main} remains then independent of $H_S$, the form of $\Lenv_t$ may depend on $V(t)$ and must thus be rederived, i.e., based now on $H_I'$.

%%%%%%%%%%%%%%%%%%%%%%%%%%%%%%%%%%%%%%%%%%%%%%%%%%%%%%%%%%%%%%%%%%%%%%%%%%%%%%%%%%%%%%
\section{Microscopic validity of generator addition}
\label{app:valid_add_gens}

%%%%%%%%%%%%%%%%%%%%%%%%%%%%%%%%%%%%%%%%%%%%%%%%%%%%%%%%%%%%%%%%%%%%%%%%%%%%%%%%%%%%%%
\subsection{Weak-coupling regime}
\label{app:valid_add_gens_weakcoupling}
Below, we prove \lemref{lem:weak_coupling} stated in the main text, in particular, we show that under weak coupling the cross-term in \eqnref{eq:QME_multiple_envs} can always be assumed to vanish. Hence, in accordance with \obsref{obs:crossterm_vanish}, it is then valid to add dynamical generators corresponding to each individual environment, without necessity to rederive the overall QME. 

The following proof can be regarded as an extension of the argumentation found in \refcite{CohenTannoudji1998,Schaller2015}, which applies to the more stringent regime in which the Born-Markov approximation holds.

Firstly, we perform the operator Schmidt decomposition of each interaction Hamiltonian (indexed by $i$) \cite{Bengtsson2006}:
\BE
H_{I_i} \!=\! \sum_{k} A_{i;k} \otimes B_{k}^{E_i} 
\;\;\Leftrightarrow\;\;
\bar{H}_{I_i}(t) \!=\! \sum_{k} \bar{A}_{i;k}(t) \otimes \bar{B}_{k}^{E_i}(t).
\label{eq.HImultidecomp}
\EE
where $\{A_{i;k}\}_k$ and $\{B_k^{E_i}\}_k$ form then sets of Hermitian operators that act separately on the system and corresponding environment subspaces, i.e., $\cH_S$ and $\cH_{E_i}$, respectively. 

As noted above, this decomposition preserves its tensor-product structure in the IP, which is now defined according to \eqnref{eq:IP_def} with the free system-environment Hamiltonian incorporating multiple environments, $H_S + \sum_i H_{E_i}$. Hence, carrying out here the analysis in the IP for compactness of the expressions, we rewrite the general and exact QME \eref{eq:QME_multiple_envs} of the main text, which describes a system interacting with multiple environments, as
\begin{widetext}
\BEA
\frac{\dd}{\dd t}{\bar{\rho}}_S(t) 
& = & \sum_{i}\LIP[(i)]_t[{\bar{\rho}}_S(t)]+\sum_{i\ne j} \int_0^t \dd s \tr_{E_{ij}}\!\left\{  \left[\bar{H}_{I_i}(t), \left[ \bar{H}_{I_j}(s), \bar{\rho}_{SE_{ij}}(s) \right]\right]\right\} \label{eq:QME_IP_multiple}\\
& = & 
\sum_{i}\LIP[(i)]_t[{\bar{\rho}}_S(t)]+
\sum_{i\ne j} \sum_{k,l} \int_0^t ds\; \tr_{E_{ij}}\!\left\{  \left[\bar{A}_{i;k}(t)\otimes \bar{B}_{k}^{E_i}(t), \left[\bar{A}_{j; l}(s)\otimes \bar{B}_{l}^{E_j}(s), \bar{\rho}_{SE_{ij}}(s)\right]\right]\right\},
\label{eq:QME_ABdecompS}
\EEA
\end{widetext}
where the second term above is the crucial, inter-environment cross-term whose absence assures the validity of generator addition.

There exist various approaches to obtain simplified forms of QMEs for the weak-coupling regime \cite{deVega2017,Breuer,Schaller2015,Rivas2012}. Here, in order keep the derivation general and emphasise necessary requirements for our arguments to apply, we assume that the appropriate QME under weak coupling is derived after approximating the global system-environments state at every time $t\ge0$ as
\BE
\rho_{SE}(t) \approx  \rho_{S}(t) \otimes  \bigotimes_i \varrho_{E_i}(t),
\label{eq:sep_sys_envs_approx}
\EE
where $\rho_{S}(t)=\tr_{E}\rho_{SE}(t)$ is the reduced state of the system at time $t$. Although in the weak-coupling approximations \cite{Breuer,Rivas2012} the separable state of each environment in \eqnref{eq:sep_sys_envs_approx} is frequently taken to be its reduced state at time $t$, i.e., $\varrho_{E_i}(t)\equiv\rho_{E_i}(t):=\tr_{\neg E_i}\rho_{SE}(t)$, in what follows it can be chosen arbitrarily---as long as for all environments (labelled by $i$) $\varrho_{E_i}(0)=\rho_{E_i}$ to maintain consistency with the derivation in \appref{app:QME_integrodiff}.

Let us stress that the tensor-product ansatz of \eqnref{eq:sep_sys_envs_approx} for the system-environments state is employed only to obtain the form of the QME valid under the weak coupling, and does \emph{not} force the solutions of the QME to actually be separable states. In particular, the resulting QME, before tracing out environmental degrees of freedom, can yield upon integration states $\bar{\rho}_{SE}(t)$ that strongly deviate from the form \eref{eq:sep_sys_envs_approx} already at moderate times $t$, even though the validity of the dynamics---and, hence, the QME employed---is still assured by weak coupling \cite{Rivas2010}.

Thanks to the condition \eref{eq:sep_sys_envs_approx}, the crucial cross-term within the exact QME \eref{eq:QME_ABdecompS} can be reexpressed as 
\begin{align}
&
\sum_{i\ne j} \sum_{k,l} \int_0^t ds\quad \Cfun_{[i;k][j;l]}(t,s;s)\times 
\label{eq:cross_term_Cikjl}\\
& \quad\times\left(\bar{A}_{i;k}(t)\bar{A}_{j;l}(s)\bar{\rho}_{S}(s)-\bar{A}_{i;k}(t)\bar{\rho}_{S}(s)\bar{A}_{j;l}(s) \right.\nonumber\\
& \qquad\quad\left.-\bar{A}_{j;l}(s)\bar{\rho}_{S}(s)\bar{A}_{i;k}(t)+\bar{\rho}_{S}(s)\bar{A}_{j;l}(s)\bar{A}_{i;k}(t)\right), 
\nonumber
\end{align}
where now
\begin{align}
\Cfun_{[i;k][j;l]}(t,s;s') & 
:=
\tr\!\left\{ \left(\bar{B}_{k}^{E_{i}}(t)\otimes\bar{B}_{l}^{E_{j}}(s)\right)\bar{\varrho}_{E_{ij}}(s')\right\},
\label{eq:C_ikjl}
\end{align}
is the \emph{two-bath correlation function} that is independent of the reduced system state, being evaluated only on $\bar{\varrho}_{E_{ij}}(t):=\bar{\varrho}_{E_{i}}(t)\otimes\bar{\varrho}_{E_{j}}(t)$. Note that, as $i$ and $j$ are just labels of distinct environments, the correlation function \eref{eq:C_ikjl} is symmetric with $\Cfun_{[i;k][j;l]}(t,s;s')=\Cfun_{[j;l][i;k]}(s,t;s')$.

Moreover, \eqnref{eq:sep_sys_envs_approx} assures all the correlation functions \eref{eq:C_ikjl} to factorise, so that for any $s,s',t\ge0$:
\begin{align}
\Cfun_{[i;k][j;l]}(t,s;s') & \approx\tr\!\left\{ \bar{B}_{k}^{E_{i}}(t)\bar{\varrho}_{E_{i}}(s')\otimes\bar{B}_{l}^{E_{j}}(s)\bar{\varrho}_{E_{j}}(s')\right\} \\
 & =\Cfun_{i;k}(t,s')\;\Cfun_{j;l}(s,s'),
\label{eq:C_ikjl_weak}
\end{align}
reducing to a product of \emph{single-bath correlation functions}:
\BE
\Cfun_{i;k}(t,s):=\tr\!\left\{ \ee^{\ii H_{E_{i}}(t-s)}B_{k}^{E_{i}}\ee^{-\ii H_{E_{i}}(t-s)}\varrho_{E_{i}}(s)\right\}.
\label{eq:_Cik_ts}
\EE
Furthermore, as the two-bath correlation function in \eqnref{eq:cross_term_Cikjl} factorises to \eqnref{eq:C_ikjl_weak} with $s=s'$, the whole cross-term \eref{eq:cross_term_Cikjl} is guaranteed to vanish whenever at all times $t\ge0$ for each environment (labelled by $i$) and its each operator $B_{k}^{E_{i}}$ (labelled by $k$):
\BE
\Cfun_{i;k}(t):=\Cfun_{i;k}(t,t)=\tr\{B_{k}^{E_{i}}\,\varrho_{E_{i}}(t)\}\underset{?}{=}0.
\label{eq:_Cik_tt}
\EE

Crucially, the condition \eqnref{eq:_Cik_tt} can always be ensured by shifting adequately the interaction and the system Hamiltonians without affecting the total Hamiltonian \eref{eq:H_T} (similarly to \eqnref{eq:HI_shift} of \appref{app:QME_integrodiff}). In particular, one can redefine the system and each interaction Hamiltonian to be generally time-dependent and read:
\begin{align}
H'_{S}(t) &= H_{S}+\sum_{i,k}\Cfun_{i;k}(t)\,A_{i;k},\\
\forall_i:\;H'_{I_i}(t) &= H_{I_i}-\sum_k\Cfun_{i;k}(t)\,(A_{i;k}\otimes\1_{E_i}),
\end{align}
so that the decomposition \eref{eq.HImultidecomp} of the interaction Hamiltonian for each environment becomes
\BE
H_{I_i}'(t) = \sum_{k} A_{i;k} \otimes B_{k}^{\prime E_i}(t)= \sum_{k} A_{i;k} \otimes [B_{k}^{E_i}-\Cfun_{i;k}(t)\1_{E_i}],
\label{eq:HIprime_decomp}
\EE
with the new correlation function \eref{eq:_Cik_tt} identically vanishing by construction, as for any $t\ge 0$:
\begin{align}
\Cfun_{i;k}^\prime(t)
&=
\tr_{E_{i}}\!\left\{B_{k}^{\prime E_i} \varrho_{E_{i}}(t)\right\}
\label{eq:puigdemond_pipa}\\
&=
\tr_{E_{i}}\!\left\{(B_{k}^{E_i}-\Cfun_{i;k}(t))\varrho_{E_{i}}(t)\right\}=0. \nonumber
\end{align}

For consistency, le us also note that the necessary requirement $\tr_{E}\!\left\{ \bar{H}_{I_i}^\prime(t)\rho_{E_i}\right\}=0$, introduced in \appref{app:QME_integrodiff}, is then trivially fulfilled for every environment. Decomposing $H'_{I_i}(t)$ according to \eqnref{eq:HIprime_decomp} and remembering that $[H_{E_i},\rho_{E_i}]=0$ for each $i$, one gets (in the IP):
\BE
\sum_{k}\bar{A}_{i;k}(t)\tr_{E_{i}}\!\left\{B_{k}^{\prime E_{i}}\rho_{E_{i}}\right\} =\sum_{k}\Cfun_{i;k}(0)\,\bar{A}_{i;k}(t) = 0,
\EE
as each $\Cfun_{i;k}^\prime(0)=0$ is zero by \eqnref{eq:puigdemond_pipa}.

Note that, in particular, the above argumentation holds for all QMEs derived using the \emph{time-convolutionless} approach \cite{Breuer2001} up to the \emph{second order} in all the interaction parameters---in which case the QME \eref{eq:QME_IP_multiple} is from the start assumed to exhibit a time-local form, rather than involve a time-convolution integral. 

On the other hand, the most conservative \emph{Born-Markov approximation} discussed in \refcite{CohenTannoudji1998,Schaller2015} enforces every $\varrho_{E_i}(t)$ in \eqnref{eq:sep_sys_envs_approx} to be at all times the initial, stationary state $\rho_{E_i}$ of each environment. As a result, all the single-bath correlation functions, $\Cfun_{i;k}(t,s)$ in \eqnref{eq:_Cik_ts}, become then $t$- and $s$-independent due to $[H_{E_i},\rho_{E_i}]=0$, and identically vanish by \eqnref{eq:puigdemond_pipa}. Hence, then trivially $\Cfun_{[i; k][j;l]}=\Cfun_{i;k}\Cfun_{j;l}=0$ at all times.

%%%%%%%%%%%%%%%%%%%%%%%%%%%%%%%%%%%%%%%%%%%%%%%%%%%%%%%%%%%%%%%%%%%%%%%%%%%%%%%%%%%%%%
\subsection{Commutativity of microscopic Hamiltonians}
\label{app:valid_add_gens_comm}
Here, we provide the proof of \lemref{lem:commutativity} stated in the main text, which assures that dynamical generators associated with each individual environment can be simply added at the QME level, if the interaction Hamiltonians commute between each other and with the system Hamiltonian. This condition corresponds to the $\text{II}\cap\text{IS}$ region in the Venn diagram of \figref{fig:venn}---marked `Yes' to indicate the validity of generator addition.

Let us note that whenever for all $i$ and $j$:
\BE
\left[H_{\mathrm{I}_{i}},H_{\mathrm{I}_{j}}\right]=0 \qquad\trm{and}\qquad \left[H_{I},H_{S}\right]=0,
\label{eq:ass_IIandIS}
\EE
with $H_{I}:=\sum_i H_{I_i}$ being the full interaction Hamiltonian, the global unitary dynamical operator \eref{eq:U_SE_IP} can be decomposed in the SP, as follows
\BE
U_{SE}(t) = \ee^{-\ii (H_S + H_E + H_I) t} = \ee^{-\ii H_S t } \prod_i \ee^{-\ii (H_{E_i} + H_{I_i}) t},
\label{eq.totaluni}
\EE
so that at the level of the corresponding unitary maps:
\BE
\cU_t^{SE} = \cU_t^S \circ \prod_i \cU_t^{IE_i},
\label{eq.totaluni_maps}
\EE
where we have defined $\cU_t^{IE_i}[\bullet]:=\ee^{-\ii (H_{E_i} + H_{I_i})t}\bullet\ee^{\ii (H_{E_i} + H_{I_i})t}$, and by $\prod$ we denote also the conjugation of multiple maps, i.e., for a given set of maps $\{\Lambda_i\}_i$: 
\BE
\prod_{i=1}^n\Lambda_i
:=
\Lambda_n\circ\Lambda_{n-1}\circ\dots\circ\Lambda_2\circ\Lambda_1.
\EE

As a result, after straightforwardly generalising \eqnref{eq:map_IP} to multiple environments and transforming it to the SP, we can generally write the system reduced state at a given time $t$ as
\begin{widetext}
\begin{align}
\rhoS(t) & = \tr_E \!\left\{\cU_t^{SE} [\rhoS(0) \ot \rhoE]  \right\} 
= \tr_E \!\left\{\,\cU_t^{S} \circ \prod_i \cU_t^{SE_i}\!\left[\rhoS(0) \ot \,\Ot_k \rho_{E_k}\right]  \right\}\\
& = \cU_t^{S} \!\left[\,\tr_{E_{i\neq 1}} \!\left\{\, \prod_{i\neq 1} \cU_t^{SE_i}\! \left[\, \tr_{E_{1}}\! \left\{\cU_t^{SE_1} \!\left[\rhoS(0)\ot \rho_{E_1}\right]\right\}\ot\,  \Ot_{k \neq 1} \rho_{E_k} \right] \!\right\} \!\right] \\
& = \cU_t^{S} \!\left[\, \tr_{E_{i\neq 1}}\!\left\{\, \prod_{i\neq 1} \cU_t^{SE_i} \!\left[\, \LambdaE_t^{(1)}[\rhoS(0)] \ot\, \Ot_{k \neq 1} \rho_{E_k} \right]\!\right\}\right] = \cdots = \label{dupa1}\\
& = \cU_t^{S}\!\left[\, \tr_{E_{i\neq 1,2}}\!\left\{\, \prod_{i\neq 1,2} \cU_t^{SE_i}\!\left[\, \LambdaE_t^{(2)}\circ\LambdaE_t^{(1)}[\rhoS(0)] \ot\, \Ot_{k \neq 1,2} \rho_{E_k} \right] \right\} \right] = \cdots = \label{dupa2}\\
& = \cU_t^S \!\left[\, \prod_i \LambdaE_t^{(i)} [\rhoS(0)] \right] 
\quad=:\quad 
\cU_t^S \circ \LambdaE_t [\rhoS(0)],
\label{eq:dupablada}
\end{align}
\end{widetext}
where in \eref{dupa1} by $\cdots$ we mean repeating the procedure for the $2$nd environment, and similarly  in \eref{dupa2} for all the other environments. The overall dynamical (SP-based) map $\LambdaE_t=\prod_i \LambdaE_t^{(i)}$ is $H_S$-independent and constitutes a composition of maps $\LambdaE_t^{(i)}[\bullet] := \tr_{E_{i}} \!\left\{\cU_t^{SE_i} [\bullet \ot \rho_{E_i}] \right\}$, each of which describing the impact of the $i$th environment.

As discussed in \appref{app:HS_cov} above, the condition $[H_{I},H_{S}]=0$ in \eqnref{eq:ass_IIandIS} ensures the dynamics to be $H_S$-covariant---as explicitly manifested in \eqnref{eq:dupablada} in which $\cU_t^S$, thanks to commuting with all $\cU_t^{SE_i}$, commutes also with all the $\LambdaE_t^{(i)}$ maps. On the other hand, as all $\cU_t^{SE_i}$ commute between one another due to $[H_{I_i},H_{I_j}]=0$ in \eqnref{eq:ass_IIandIS}, the overall map $\LambdaE_t$ in \eqnref{eq:dupablada} could have been constructed by composing the maps  $\LambdaE_t^{(i)}$ in any order. Hence, all maps originating from interactions with different reservoirs commute, i.e., for all $i$ and $j$:
\BE
\LambdaE_t^{(i)}\circ\LambdaE_t^{(j)}=\LambdaE_t^{(j)}\circ\LambdaE_t^{(i)}.
\label{eq:comm_channels}
\EE

As a consequence, while due to the $H_S$-covariance all the dynamical generators induced by separate environments, i.e., $\Lenv[(i)]_t$ in \eqnref{eq:L_env_ind} indexed now by $i$, coincide with their corresponding IP-based generators, they also add at the level of the QME \eref{eq:QME_main} thanks to \eqnref{eq:comm_channels}. Computing explicitly the dynamical generator induced by all the environments together, i.e., $\Lenv_t$ associated with the overall map $\LambdaE_t$ in \eqref{eq:dupablada}, we have
\BEA
\Lenv_t = \dot{\LambdaE}_t \circ \LambdaE_t^{-1} 
&=& 
\left( \sum_i \dot{\LambdaE}_t^{(i)} \circ \prod_{j\neq i} \LambdaE_t^{(j)} \right) \circ \prod_k \left(\LambdaE_t^{(k)}\right)^{-1} \nonumber\\
&=& \sum_i \dot{\LambdaE}_t^{(i)}\circ \left(\LambdaE_t^{(i)}\right)^{-1}= \sum_i \Lenv[(i)]_t,
\EEA
where $\Lenv[(i)]_t$ is the generator corresponding to the interaction with the $i$th environment alone, and we have used the commutativity of the maps \eref{eq:comm_channels}.

%%%%%%%%%%%%%%%%%%%%%%%%%%%%%%%%%%%%%%%%%%%%%%%%%%%%%%%%%%%%%%%%%%%%%%%%%%%%%%%%%%%%%%
%%%%%%%%%%%%%%%%%%%%%%%%%%%%%%%%%%%%%%%%%%%%%%%%%%%%%%%%%%%%%%%%%%%%%%%%%%%%%%%%%%%%%%
\section{Spin-magnet model}
\label{app:spin-magnet}
We provide here the details and explicit form of relevant quantities for the calculations presented in \secref{subsec:Multiple} of the main text, discussing the counterexamples to the commutativity assumptions based on the spin-magnet model.

%%%%%%%%%%%%%%%%%%%%%%%%%%%%%%%%%%%%%%%%%%%%%%%%%%%%%%%%%%%%%%%%%%%%%%%%%%%%%%%%%%%%%%
\subsection{$\text{IS}\cap\text{IE}$ commutativity assumption}
\label{app:IS_IE}
We solve the equations of motion \eref{firsteqsmotion} that describe the Bloch vector dynamics in order to obtain an explicit form of the $\mat{R}$-matrix in \eqnref{eq:affine_map}, which is then parametrised by magnetisations of the two magnets, $m_1$ and $m_2$, and reads:
\begin{widetext}
\BE
\mat{R}_{12}(m_1,m_2,t) = 
\begin{pmatrix}
 \frac{\cos \left(t \sqrt{g_1^2 m_1^2+g_2^2 m_2^2}\right) g_1^2 m_1^2+g_2^2 m_2^2}{g_1^2 m_1^2+g_2^2 m_2^2} & -\frac{\sin \left(t \sqrt{g_1^2 m_1^2+g_2^2 m_2^2}\right) g_1 m_1}{\sqrt{g_1^2 m_1^2+g_2^2 m_2^2}} & \frac{2 \sin ^2\left(\frac{1}{2} t \sqrt{g_1^2 m_1^2+g_2^2 m_2^2}\right) g_1 g_2 m_1 m_2}{g_1^2 m_1^2+g_2^2 m_2^2} \\
 \frac{\sin \left(t \sqrt{g_1^2 m_1^2+g_2^2 m_2^2}\right) g_1 m_1}{\sqrt{g_1^2 m_1^2+g_2^2 m_2^2}} & \cos \left(t \sqrt{g_1^2 m_1^2+g_2^2 m_2^2}\right) & -\frac{\sin \left(t \sqrt{g_1^2 m_1^2+g_2^2 m_2^2}\right) g_2 m_2}{\sqrt{g_1^2 m_1^2+g_2^2 m_2^2}} \\
 \frac{2 \sin ^2\left(\frac{1}{2} t \sqrt{g_1^2 m_1^2+g_2^2 m_2^2}\right) g_1 g_2 m_1 m_2}{g_1^2 m_1^2+g_2^2 m_2^2} & \frac{\sin \left(t \sqrt{g_1^2 m_1^2+g_2^2 m_2^2}\right) g_2 m_2}{\sqrt{g_1^2 m_1^2+g_2^2 m_2^2}} & \frac{g_1^2 m_1^2+\cos \left(t \sqrt{g_1^2 m_1^2+g_2^2 m_2^2}\right) g_2^2 m_2^2}{g_1^2 m_1^2+g_2^2 m_2^2} \\
\end{pmatrix}.
\label{eq:R_12_case1}
\EE
\end{widetext}
When only the first magnet is present ($g_2=0$), the above expression reduces to
\BE
\label{eq.nointsolmag1}
\mat{R}_1(m_1,t) =
\begin{pmatrix}
 \cos \left(t g_1 m_1\right) & -\sin \left(t g_1 m_1\right) & 0 \\
 \sin \left(t g_1 m_1\right) & \cos \left(t g_1 m_1\right) & 0 \\
 0 & 0 & 1 \\
\end{pmatrix},
\EE
while, when in contact with only the second magnet ($g_1=0$), it becomes
\BE
\mat{R}_2(m_2,t) = 
\begin{pmatrix}
 1 & 0 & 0 \\
 0 & \cos \left(t g_2 m_2\right) & -\sin \left(t g_2 m_2\right) \\
 0 & \sin \left(t g_2 m_2\right) & \cos \left(t g_2 m_2\right) \\
\end{pmatrix}.
\label{eq:kaczynski_cipa}
\EE
We also analytically compute the time-derivative $\dot{\mat{R}}_{12}$ (as well as $\dot{\mat{R}}_{1}$ and $\dot{\mat{R}}_{2}$, after setting $g_2=0$ and $g_1=0$, respectively) which we, however, do not include here due its cumbersome form.

With the exact expressions (\ref{eq:R_12_case1}-\ref{eq:kaczynski_cipa}) at hand, we can explicitly write each affine map $\mat{D}_t^{(\msf{x})}$ with $\msf{x}=\{12,1,2\}$ according to \eqnref{eq.numint} of the main text, i.e., as an average of the corresponding $\mat{R}_{\msf{x}}$ over the Gaussian distributions $p(m_i)$ of fixed variance $\sigma_i$ in \eqnref{eq.gaussianmagnet};~and similarly in case of the time-derivatives $\dot{\mat{D}}_t^{(\msf{x})}$ by averaging $\dot{\mat{R}}_{\msf{x}}$.

We perform the averaging integrals numerically after fixing the parameters $g_1$, $g_2$, $\sigma_1$, $\sigma_2$, and the time $t$. As a result, we obtain the expressions of dynamical generators $\LSPb[(1)]_t$, $\LSPb[(2)]_t$, and $\LSPb[(12)]_t$ by substituting into $\LSPb[(\msf{x})]_t= \dot{\mat{D}}_t^{(\msf{x})}(\mat{D}_t^{(\msf{x})})^{-1}$ the relevant affine maps and their time-derivatives.

For instance, when taking $g_1\!=\!g_2 = 2$ and $\sigma_1\!=\!\sigma_2 = 1$, we obtain at $t=0.5$:
\BE
\LSPb[(1)]_{t=0.5} = 
\begin{pmatrix}
 -2 & 0 & 0 \\
 0 & -2 & 0 \\
 0 & 0 & 0 \\
\end{pmatrix}, 
\quad
\LSPb[(2)]_{t=0.5} =
\begin{pmatrix}
 0 & 0 & 0 \\
 0 & -2 & 0 \\
 0 & 0 & -2 \\
\end{pmatrix},
\EE
and
\BE
\LSPb[(12)]_{t=0.5} = 
\begin{pmatrix}
 -1.56835 & 0 & 0 \\
 0 & -7.26687 & 0 \\
 0 & 0 & -1.56835 \\
\end{pmatrix},
\EE
which provides the desired example of $\LSPb[(12)]_t \neq \LSPb[(1)]_t + \LSPb[(2)]_t$.

%%%%%%%%%%%%%%%%%%%%%%%%%%%%%%%%%%%%%%%%%%%%%%%%%%%%%%%%%%%%%%%%%%%%%%%%%%%%%%%%%%%%%%
\subsection{$\text{II}\cap \text{IE}$ commutativity assumption}
\label{app:II_IE}
We observe that, in order to solve the equations of motion \eref{secondeqsmotion} stated in the main text, which describe the dynamics of the Bloch vector in the IP, i.e., $\bar{\rV}(t)=\mat{R}_S^{-1}(t)\,\rV(t)$, it is convenient to move to a rotating frame defined as $\check{\rV}(t) := \mat{V}(t)\, \bar{\rV}(t)$, where
\BE
\mat{V}(t) 
:=
\begin{pmatrix}
 1 & 0 & 0 \\
 0 & -\cos(\omega t) & \sin(\omega t) \\
 0 & \sin(\omega t) & \cos(\omega t) \\
\end{pmatrix}
\EE
is an orthogonal matrix such that $\mat{V}(t)=\mat{P}\,\mat{R}_S(t)$ with $\mat{P}=\trm{diag}\{1,-1,1\}$ and  $\mat{R}_S(t)$ being the SO(3) representation of qubit unitary $U_S(t)=\ee^{-\ii H_S t}$ induced by the system free Hamiltonian $H_S=\frac{1}{2}\omega\sx$ of \eqnref{eq.magmodel2}. Hence, $\check{\rV}(t) = \mat{P}\, \rV(t)$ can be interpreted as the Bloch vector in the SP with the $y\to-y$ coordinate inverted.

Defining also $\gamma = g_1 m_1 + g_2 m_2$ for compactness, we obtain a simpler set of equations of motion:
\BE
\dot{\check{{r}}}_x = \gamma \check{r}_y,\quad
\dot{\check{{r}}}_y = \omega \check{r}_z - \gamma \check{r}_x,\quad
\dot{\check{{r}}}_z = - \omega \check{r}_y,
\EE
which can be explicitly solved, yielding 
\begin{widetext}
\BE
\check{\mat{R}}(m_1,m_2,t) =
\begin{pmatrix}
 \frac{\cos \left(\sqrt{\gamma ^2+\omega ^2}\,t\right) \gamma ^2+\omega ^2}{\gamma ^2+\omega ^2} & \frac{\gamma  \sin \left(\sqrt{\gamma ^2+\omega ^2}\,t\right)}{\sqrt{\gamma ^2+\omega ^2}} & -\frac{\gamma  \omega  \left(\cos \left(\sqrt{\gamma ^2+\omega ^2}\,t\right)-1\right)}{\gamma ^2+\omega ^2} \\
 -\frac{\gamma  \sin \left(\sqrt{\gamma ^2+\omega ^2}\,t\right)}{\sqrt{\gamma ^2+\omega ^2}} & \cos \left(\sqrt{\gamma ^2+\omega ^2}\,t\right) & \frac{\omega  \sin \left(\sqrt{\gamma ^2+\omega ^2}\,t\right)}{\sqrt{\gamma ^2+\omega ^2}} \\
 -\frac{\gamma  \omega  \left(\cos \left(\sqrt{\gamma ^2+\omega ^2}\,t\right)-1\right)}{\gamma ^2+\omega ^2} & -\frac{\omega  \sin \left(\sqrt{\gamma ^2+\omega ^2}\,t\right)}{\sqrt{\gamma ^2+\omega ^2}} & \frac{\gamma ^2+\omega ^2 \cos \left(\sqrt{\gamma ^2+\omega ^2}\,t\right)}{\gamma ^2+\omega ^2} \\
\end{pmatrix}.
\label{eq:rotatemyass}
\EE
\end{widetext}
We then construct the $\mat{R}$-matrix determining the affine map $\mat{D}_t$ in \eqnref{eq:affine_map} by transforming back the above $\check{\mat{R}}$-matrix to the IP, so that
\BE
\bar{\mat{R}}(m_1,m_2,t) = \mat{V}^{-1}(t) \, \check{\mat{R}}(m_1,m_2,t)\, \mat{V}(0).
\EE

We do not enclose here the explicit forms, but, as in the previous example, we also compute
all the relevant $\bar{\mat{R}}_\msf{x}$ and $\dot{\bar{\mat{R}}}_\msf{x}$ in the IP, with $\msf{x}=\{12,1,2\}$, which allow us to obtain the integral expressions for the corresponding affine maps, $\bar{\mat{D}}_t^{(\msf{x})}$, and their time-derivatives, $\dot{\bar{\mat{D}}}_t^{(\msf{x})}$. Again, we choose the initial magnetisations of both magnets to be Gaussian distributed with both $p(m_i)$ as in \eqnref{eq.gaussianmagnet} of fixed variance $\sigma_i$.

As before, we perform the averaging integrals numerically, after fixing the model parameters---now: $\omega$, $g_1$, $g_2$, $\sigma_1$, $\sigma_2$, and the time $t$---in order to obtain numerical expressions for  all $\LIPb[(\msf{x})]_t = \dot{\mat{D}}_t^{(\msf{x})}(\mat{D}_t^{(\msf{x})})^{-1}$. 

For example, when choosing $g_1 = g_2 = 2$, $\sigma_1=\sigma_2=1$ and $\omega = 2$, we obtain at  $t = 0.5$:
\BE
\LIPb[(1)]_{t=0.5} = \LIPb[(2)]_{t=0.5} = 
\begin{pmatrix}
 0.379798 & 0. & 0. \\
 0. & 0.779093 & -1.40007 \\
 0. & 1.70235 & -3.05922 \\
\end{pmatrix},
\EE
giving
\BE
\LIPb[(1)]_{t=0.5} + \LIPb[(2)]_{t=0.5} = 
\begin{pmatrix}
 0.759597 & 0. & 0. \\
 0. & 1.55819 & -2.80015 \\
 0. & 3.4047 & -6.11843 \\
\end{pmatrix}.
\EE
On the other hand, we find in presence of both magnets:
\BE
\LIPb[(12)]_{t=0.5} =
\begin{pmatrix}
 1.28248 & 0. & 0. \\
 0. & 13.0326 & -16.6865 \\
 0. & 28.4767 & -36.4608 \\
\end{pmatrix},
\EE
what, thus, provides an instance of $\LIPb[(12)]_{t} \neq \LIPb[(1)]_{t} + \LIPb[(2)]_{t}$. 

%%%%%%%%%%%%%%%%%%%%%%%%%%%%%%%%%%%%%%%%%%%%%%%%%%%%%%%%%%%%%%%%%%%%%%%%%%%%%%%%%%%%%%
%%%%%%%%%%%%%%%%%%%%%%%%%%%%%%%%%%%%%%%%%%%%%%%%%%%%%%%%%%%%%%%%%%%%%%%%%%%%%%%%%%%%%%
%%%%%%%%%%%%%%%%%%%%%%%%%%%%%%%%%%%%%%%%%%%%%%%%%%%%%%%%%%%%%%%%%%%%%%%%%%%%%%%%%%%%%%
%%%%%%%%%%%%%%%%%%%%%%%%%%%%%%%%%%%%%%%%%%%%%%%%%%%%%%%%%%%%%%%%%%%%%%%%%%%%%%%%%%%%%%
\end{document}